\newcommand{\qed}{\hfill$\Box$\smallskip}
\newcommand{\remove}[1]{}
\newtheorem{claim}{Claim}
\newtheorem{proposition}{Proposition}
\newtheorem{corollary}{Corollary}
\newtheorem{definition}{Definition}
\newtheorem{lemma}{Lemma}
\newtheorem{theorem}{Theorem}
\newenvironment{proof}{\noindent{\bf Proof\@:}}{\hfill $\Box$\\}
\newenvironment{theoremproof}[1]{\noindent{\bf Proof of Theorem #1\@:}}{\hfill $\Box$\\}
\newenvironment{lemmaproof}[1]{\noindent{\bf Proof of Lemma #1\@}}{\hfill $\Box$\\}
\newenvironment{claimproof}[1]{\noindent{\bf Proof of Claim #1\@:}}{\hfill $\Box$\\}
\newenvironment{propositionproof}[1]{\noindent{\bf Proof of Proposition #1\@:}}{\hfill $\Box$\\}
\newcommand\dist{\mathrm{dist}}
\newcommand\cA{\mathcal{A}} 
\newcommand\cB{\mathcal{B}} 
\newcommand\cC{\mathcal{C}}
\newcommand\cG{\mathcal{G}} 
\newcommand\cE{\mathcal{E}} 
\newcommand\cU{\mathcal{U}}
\newcommand\cH{\mathcal{H}} 
\newcommand\cS{\mathcal{S}} 
\newcommand\cI{\mathcal{I}}
\newcommand\cP{\mathcal{P}}
\newcommand\cZ{\mathcal{Z}}
\def\cC{{\mathcal C}}
\def\cE{{\cal E}}
\newcommand\eul{\mathrm{e}} 
\newcommand\eps{\varepsilon}
\newcommand\Erw{\mathrm{E}} 
\newcommand\pr{\mathrm{P}}
\newcommand{\Po}{{\rm Po}} 
\newcommand{\Bin}{{\rm Bin}}
\newcommand{\bink}[2] {{{#1}\choose {#2}}}
\newcommand\ra{\rightarrow} 
\newcommand\bc[1]{\left({#1}\right)} 
\newcommand\cbc[1]{\left\{{#1}\right\}} 
\newcommand\bcfr[2]{\bc{\frac{#1}{#2}}} 
\newcommand\brk[1]{\left\lbrack{#1}\right\rbrack} 
\newcommand\norm[1]{\left\|{#1}\right\|} 
\newcommand\abs[1]{\left|{#1}\right|}
\newcommand{\Whp}{W.h.p.} 
\newcommand{\whp}{w.h.p.} 
\newcommand{\stacksign}[2]{{\stackrel{\mbox{\scriptsize #1}}{#2}}}
\newcommand{\Erdos}{Erd\H{o}s}
\newcommand{\Bollobas}{Bollob\'as}
\newcommand{\Kucera}{Ku\v{c}era}
\newcommand\Lem{Lemma}
\newcommand\Prop{Proposition}
\newcommand\Thm{Theorem}
\newcommand\Cor{Corollary}
\newcommand\Sec{Section}
\date{\today}
\title{\bf On independent sets in random 
graphs\thanks{Supported by EPSRC grant EP/G039070/2 and DIMAP.} 
\thanks{A preliminary version of this work appeares in the proceedings of
ACM-SIAM SODA 2011}}
\author[1]{Amin Coja-Oghlan}
\author[2]{Charilaos Efthymiou}
\affil[1]{Goethe University, Mathematics Dept., Frankfurt 60054, Germany\\
\texttt{acoghlan@math.uni-frankfurt.de}}
\affil[2]{University of Warwick, Mathematics and Computer Science, Coventry CV4 7AL, UK\\
\texttt{c.efthymiou@warwick.ac.uk}}
\begin{document}
\maketitle
\thispagestyle{empty}

\begin{abstract}
The independence number of a sparse random graph $G(n,m)$ of average degree $d=2m/n$ is well-known to be
	$(2-\eps_d)n\ln(d)/d\leq\alpha(G(n,m))\leq(2+\eps_d)n\ln(d)/d$ with high probability,
		with $\eps_d\ra0$ in the limit of large $d$.
Moreover, a trivial greedy algorithm \whp\ finds an independent set of size $n\ln(d)/d$, i.e.,
about half the maximum size.
Yet in spite of 30 years of extensive research no efficient algorithm has emerged to
produce an independent set with $(1+\eps)n\ln(d)/d$ for any \emph{fixed} $\eps>0$ (independent of both $d$ and $n$).
In this paper we prove that the combinatorial structure of the independent set problem in random
graphs undergoes a phase transition as the size $k$ of the independent sets passes
the point $k\sim n\ln(d)/d$.
Roughly speaking, we prove that independent sets of size $k>(1+\eps)n\ln(d)/d$ form an 
intricately ragged landscape, in which local search algorithms seem to get stuck.
We illustrate this phenomenon by providing an exponential lower bound for the Metropolis
process, a Markov chain for sampling independents sets.

\medskip
\noindent
{\bf Key words:}
random graphs, independent set problem, Metropolis process, phase transitions.
\end{abstract}

\setcounter{page}{1}
\section{Introduction and Results}
\subsection{Probabilistic analysis  and the independent set problem}

In the early papers on the subject, the motivation for the probabilistic
analysis of algorithms was to alleviate the glum of worst-case
analyses by establishing a brighter `average-case' scenario~\cite{AverAlg1,AverAlg2,AverAlg3}.
This optimism was stirred by early analyses of simple, greedy-type
algorithms, showing that these perform rather well on randomly
generated input instances, at least for certain ranges of the
parameters. Examples of such analyses include Grimmett and
McDiarmid~\cite{grimmett} (independent set problem), Wilf
\cite{Wilf}, Achlioptas and Molloy~\cite{AchListColouring} 
(graph coloring), and Frieze and Suen~\cite{FrSu} ($k$-SAT). Yet, remarkably, in spite of 30
years of research, for many problems no efficient algorithms,
howsoever sophisticated, have been found to outperform those simple
greedy algorithms markedly.

The independent set problem in random graphs $G(n,m)$ is a case
in point. Recall that $G(n,m)$ is a graph on $n$ vertices obtained
by choosing $m$ edges uniformly at random (without replacement).
We say that $G(n,m)$ has a property \emph{with high probability}
if the probability that the property holds tends to 1 as $n\rightarrow\infty$.
One of the earliest results in the theory of random graphs is a
non-constructive argument showing that for $m=\frac12\bink{n}2$
the independence number of $G(n,m)$ is $\alpha(G(n,m))\sim2\log_2(n)$
\whp~\cite{BollobasIS,erdos-1st,matula}. Grimmett and McDiarmid~\cite{grimmett}
analysed a simple algorithm that just constructs an inclusion-maximal
independent set greedily on $G(n,m)$: it yields an independent set
of size $(1+o(1))\log_2n$ \whp, about half the maximum size. But no
algorithm is known to produce an independent set of size $(1+\eps)
\log_2n$ for any fixed $\eps>0$ in polynomial time with a non-vanishing
probability, neither on the basis of a rigorous analysis, nor on
the basis of experiments or other evidence. In fact, devising such
an algorithm is probably the most prominent open problem in the
algorithmic theory of random graphs~\cite{friezeRSA,KarpOLD}. (However,
note that one can find a maximum independent set \whp\ by trying all
$n^{O(\ln n)}$ possible sets of size $2\log_2n$.)

Matters are no better on sparse random graphs. If we let $d=2m/n$
denote the average degree, then non-constructive arguments yield
	$$\alpha(G(n,m))\sim\frac{2\ln(d)}d\cdot n$$
for $1\ll d=o(n)$.
In the case $d\gg\sqrt n$, the proof of this is via a simple
second moment argument~\cite{BollobasIS,matula}. By contrast, for
$1\ll d\ll\sqrt n$, the second moment argument breaks down and
additional methods such as large deviations inequalities are 
needed~\cite{frieze-is}. Yet in either case, no algorithm is known
to find an independent set of size $(1+\eps)\frac{\ln d}d\cdot n$ in 
polynomial time with a non-vanishing probability, while `greedy' yields
an independent set of size $(1+o(1))\frac{\ln d}d\cdot n$ \whp\
In the sparse case, the time needed for exhaustive search scales
as $\exp(\frac{2n}{d}\ln^2(d))$, i.e., the complexity grows as
$d$ decreases.

The aim of this paper is to explore the tenacity
of finding large independent sets in random graphs. The focus is on
the sparse case, both conceptually and computationally  the most
difficult case. We exhibit a phase transition in the structure of the
problem that occurs as the size of the independent sets passes the
point $\frac{\ln d}d\cdot n$ up to which efficient algorithms are
known to succeed. Roughly speaking, we show that independent sets of
sizes bigger than $(1+\eps)\frac{\ln d}d\cdot n$ form an intricately
ragged landscape, which plausibly explains why local-search algorithms get
stuck. Thus, ironically, instead of showing that the `average case'
scenario is brighter, we end up suggesting that random graphs provide
an excellent source of difficult examples. Taking into account the (substantially)
different nature of the independent set problem, our work complements
the results obtained in~\cite{AchCoOg} for random constraint satisfaction
problem such as $k$-SAT or graph coloring.

\subsection{Results}\label{sec:Results}

Throughout the paper we will be dealing with sparse random graphs 
where the average degree $d=2m/n$ is `large' but remains bounded
as $n\rightarrow\infty$. To formalise this sometimes we work 
with functions $\eps_d$ that tend to zero as $d$ gets large.%
	\footnote{The reason why we need to speak about $d$ `large' is
	that 	the sparse random graph $G(n,m)$ is not connected. This
	implies, for instance, that algorithms can find independent sets
	of size $(1+\eps_d)n\ln(d)/d$ for some $\eps_d\ra0$ by optimizing
	carefully over the small tree components of $G(n,m)$. Our
	results/proofs actually carry over to the case that $d=d(n)$ tends
	to infinity as $n$ grows, but to keep matters as simple as possible,
	we will	confine ourselves to fixed $d$.}
Thus $\alpha(G(n,m))=(2-\eps_d)\frac{\ln d}d\cdot n$ and the greedy
algorithm finds independent sets of size $(1+\eps'_d)\frac{\ln d}d
\cdot n$ \whp, where $\eps_d,\eps_d'\ra 0$. However, no efficient
algorithm is known to find independent sets of size $(1+\eps'')
\frac{\ln d}d\cdot n$ for any \emph{fixed} $\eps''>0$.

For a graph $G$ and an integer $k$ we let $\cS_k(G)$ denote the set
of all independent sets in $G$ that have size exactly $k$. What we
will show is that in $G(n,m)$ the set $\cS_k(G(n,m))$ undergoes a
phase transition as $k\sim\frac{\ln d}dn$. For two sets $S,T\subset
V$ we let $S\triangle T$ denote the symmetric difference of $S,T$.
Moreover, $\dist(S,T)=|S\triangle T|$ is the Hamming distance of
$S,T$ viewed as vectors in $\cbc{0,1}^V$.

To state the result for $k$ smaller than $\frac{\ln d}dn$, we need
the following concept. Let $\cS$ be a set of subsets of $V$, and
let $\gamma>0$ be an integer. We say that $\cS$ is {\em $\gamma$-connected}
if for any two sets $\sigma,\tau\in\cS$ there exist $\sigma_1,\ldots,\sigma_N\in\cS$
such that $\sigma_1=\sigma$, $\sigma_N=\tau$, and $\dist(\sigma_t,\sigma_{t+1})\leq \gamma$
for all $1\leq t<N$. If $\cS_k(G(n,m))$ is $\gamma$-connected for
some $\gamma=O(1)$, one can easily define various simple Markov
chains on $\cS_k(G)$ that are ergodic.

\begin{theorem}\label{thrm:Connectivity}
There exists $\eps_d\ra0$ and for any $d$ a number $C_d>0$ (independent
of $n$) such that $\cS_k(G(n,m))$ is $C_d$-connected \whp\
for any
	$$k\leq(1-\eps_d)\frac{\ln d}d\cdot n.$$
\end{theorem}

\noindent
The proof of \Thm~\ref{thrm:Connectivity} is `constructive' in the
following sense. Suppose given $G=G(n,m)$ we set up an auxiliary
graph whose vertices are the independent sets $\cS_k(G)$ with
$k\leq (1-\eps_d)\frac{\ln d}d\cdot n$. In the auxiliary graph
two independent sets $\sigma,\tau\in\cS_k(G)$ are adjacent if
$\dist(\sigma,\tau)\leq C_d$. Then the proof of \Thm~\ref{thrm:Connectivity}
yields an algorithm  for finding paths of length $O(n)$ between
any two elements of $\cS_k(G)$ \whp\ Thus, intuitively \Thm~\ref{thrm:Connectivity}
shows that for $k\leq(1-\eps_d)\frac{\ln d}d\cdot n$ the set
$\cS_k(G(n,m))$ is easy to `navigate' \whp

By contrast, our next result shows that for $k>(1+\eps_d)\frac{\ln d}d\cdot n$
the set $\cS_k(G(n,m))$ is not just disconnected \whp, but that
it shatters into exponentially many, exponentially tiny pieces.

\begin{definition}\label{Def_shattering}
We say that $\cS_k(G(n,m))$ {\bf shatters} if there exist constants
$ \gamma , \zeta>0$ such that \whp\ the set ${\cal S}_k(G(n,m))$ admits a
partition into subsets such that
\begin{enumerate}
\item Each subset contains at most $\exp\bc{-\gamma n}\abs{{\cal S}_k(G(n,m))}$
		independent sets.
\item If $\sigma,\tau$ belong to different subsets, then
	$\dist(\sigma,\tau)\geq\zeta n$.
\end{enumerate}
\end{definition}

\begin{theorem}\label{theorem:shattering}
There is $\eps_d\ra0$ so that $\cS_k(G(n,m))$ shatters for all
$k$ with
\begin{displaymath}
(1+\eps_d)\frac{\ln d}d\cdot n\leq k\leq(2-\eps_d)\frac{\ln d}d\cdot n.
\end{displaymath}
\end{theorem}

\noindent
\Thm s~\ref{thrm:Connectivity} and~\ref{theorem:shattering} deal
with the geometry of a single `layer' $\cS_k(G(n,m))$ of independents
of a specific size. The following two results explore if/how a 
`typical' independent set in $\cS_k(G(n,m))$ can be extended 
to a larger one. To formalize the notion of `typical', we let
$\Lambda_k(n,m)$ signify the set of all pairs $(G,\sigma)$, where
$G$ is a graph on $V=\cbc{1,\ldots,n}$ with $m$ edges and
$\sigma\in {\cal S}_k(G)$. Let $\cU_k(n,m)$ be the probability
distribution on $\Lambda_k(n,m)$  induced  by the following 
experiment.
\begin{quote}
Choose a graph $G=G(n,m)$ at random.\\
If $\alpha(G)\geq k$,
choose an independent set $\sigma\in\cS_k(G)$ uniformly at random
and output $(G,\sigma)$.
\end{quote}
We say a pair $(G,\sigma)$ chosen from the distribution $\cU_k(n,m)$
has a property $\cP$ \emph{with high probability} if the probability
of the event $\cbc{(G,\sigma)\in\cP}$ tends to one as $n\ra\infty$.

\begin{definition}
Let $\gamma,\delta\geq0$, let $G$ be a graph, and let $\sigma$ be
an independent set of $G$. We say that $(G,\sigma)$ is {\bf $(\gamma,
\delta)$-expandable} if $G$ has an independent set $\tau$ such that 
$|\tau|\geq (1+\gamma)|\sigma|$ and
$|\tau \cap \sigma|\geq(1-\delta)|\sigma|$.
\end{definition}

\begin{theorem}\label{theorem:non-maximal}
There are $\eps_d, \delta_d\ra0$ such that for any $\eps_d\leq \eps\leq1-\eps_d$
the following is true. For 	$k=(1-\eps)\frac{\ln d}d\cdot n$ a
pair $(G,\sigma)$ chosen from the distribution $\cU_k(n,m)$ is
$((2-\delta_d)\eps/(1-\eps),0)$-expandable \whp
\end{theorem}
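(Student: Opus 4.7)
The plan is to reduce to the planted distribution by contiguity and then to construct $\tau$ by adjoining to $\sigma$ a large independent set among the non-neighbours of $\sigma$.

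First, let $\cP_k$ denote the planted distribution on $\Lambda_k(n,m)$ that chooses $\sigma\subset V$ uniformly among all $k$-subsets and then chooses $G$ uniformly among graphs on $V$ with $m$ edges contained in $\binom{V}{2}\setminus\binom{\sigma}{2}$. A short counting argument gives that the Radon--Nikodym derivative $\mathrm{d}\cU_k/\mathrm{d}\cP_k$ at $(G,\sigma)\in\Lambda_k(n,m)$ equals $\Erw[|\cS_k(G(n,m))|]/|\cS_k(G)|$. The familiar second moment computation, valid throughout the regime $k\leq(2-\eps_d)\frac{\ln d}{d}\cdot n$, yields $|\cS_k(G(n,m))|=(1+o(1))\Erw[|\cS_k(G(n,m))|]$ \whp, so $\cU_k$ and $\cP_k$ are mutually contiguous and it suffices to prove expandability \whp\ under $\cP_k$.

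Working under $\cP_k$, let $W=V\setminus(\sigma\cup N_G(\sigma))$ be the set of vertices outside $\sigma$ with no edge into $\sigma$. For any $v\notin\sigma$, the number of edges from $v$ to $\sigma$ in the planted $G$ is asymptotically Poisson with mean $dk/n=(1-\eps)\ln d$, so the probability that $v\in W$ equals $(1+o(1))d^{-(1-\eps)}$, and a standard concentration estimate gives $|W|=(1+o(1))n\cdot d^{\eps-1}$ \whp; conditional on $W$, the subgraph $G[W]$ is distributionally close to an \Erdos--\Renyi\ random graph on $|W|$ vertices with edge density $(1+o(1))d/n$, hence average degree $(1+o(1))d^{\eps}$. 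Provided $\eps_d\ra 0$ is chosen so slowly that $d^{\eps_d}\ra\infty$, the usual first/second moment computation on $G[W]$ produces, \whp, an independent set $\tau'\subseteq W$ of size at least $(2-o_d(1))\eps\frac{\ln d}{d}\cdot n$. Since $\tau'\subseteq W$ has no edges into $\sigma$, the union $\tau=\sigma\cup\tau'$ is an independent set of $G$ with $\sigma\subseteq\tau$ and $|\tau|\geq(1+\eps-o_d(1))\frac{\ln d}{d}\cdot n$, which, once the $o_d(1)$ loss is absorbed into the $\eps_d$ tolerance of the statement, is the claimed $(2\eps/(1-\eps),0)$-expandability.

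The main obstacle is that the target extension size $2\eps\frac{\ln d}{d}\cdot n$ coincides, to leading order in $d$, with the full independence number of $G[W]$ itself, leaving essentially no slack; the second moment calculation on $G[W]$ must therefore be pushed all the way up to the information-theoretic independence threshold, and the resulting sub-leading loss has to be absorbed into the $\eps_d$ slack of the theorem. A secondary technical task is to verify that the mild conditioning introduced by passing to the planted model, and by further conditioning on the realisation of $W$, does not spoil the second moment calculation that underlies the lower bound on $\alpha(G[W])$.
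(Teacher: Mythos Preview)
Your overall plan---pass to the planted model, look at the set $W$ of $\sigma$-pure vertices, and extract a large independent set from $G[W]$---is exactly the route the paper takes. The gaps are in the two places where you invoke ``the second moment''.

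First, the contiguity claim is false in the sparse regime $m=\Theta(n)$ treated here. The assertion that $|\cS_k(G(n,m))|=(1+o(1))\Erw|\cS_k(G(n,m))|$ \whp\ via a second moment computation does \emph{not} hold: the paper states explicitly that the straight second moment argument ``fails utterly'' for $m\ll n^{3/2}$, and indeed \Cor~\ref{cor:transfer-theorem} shows the count is exponentially below its mean for $k$ above the critical size; even for $k$ below it the variance/mean$^2$ ratio is $\exp(\Theta(k))$, not $1+o(1)$. What is available instead is the one-sided inequality of \Thm~\ref{thrm:transfer-theorem},
\[
\pr_{\cU_k}[\cE]\ \le\ \exp\!\bc{14n\sqrt{\ln^5 d/d^3}}\cdot \pr_{\cP_k}[\cE],
\]
which rests on the lower bound $|\cS_k|\ge\exp(-14n\sqrt{\ln^5 d/d^3})\,\Erw|\cS_k|$ of \Prop~\ref{Lemma_gap}. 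That lower bound is \emph{not} a second moment fact; it is deduced indirectly from Frieze's large-deviations estimate for $\alpha(G(n,m))$.

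Second, because the transfer inequality carries an $\exp(\Theta(n))$ penalty, establishing expandability ``\whp'' under $\cP_k$ is insufficient: you need the failure probability under $\cP_k$ to be below $\exp(-14n\sqrt{\ln^5 d/d^3})$. A bare second moment bound on $\alpha(G[W])$ only gives $\Pr[\alpha(G[W])\ge\cdot\,]\ge\exp(-Ck)$, which is the wrong direction and useless for this purpose. The paper instead combines a Chernoff-type bound for $|W|$ and the edge count in $G[W]$ (\Lem~\ref{lemma:pure-graph}, failure probability $\exp(-\Omega(n/d))$) with the full Frieze argument---second moment \emph{plus} Talagrand---on $G[W]$ (\Cor~\ref{cor:ExistenceGnm}, failure probability $\exp(-\Omega(n/(d\ln d)))$). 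Both error terms are small enough to survive the transfer. Your ``main obstacle'' paragraph correctly senses that the second moment is being pushed to its limit, but misidentifies where the real work lies: it is not in squeezing the last $o_d(1)$ out of the threshold, but in getting any exponentially small failure probability at all.
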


\noindent
\Thm~\ref{theorem:non-maximal} shows that  \whp\ in a random graph
$G(n,m)$ almost all independent sets of size $k=(1-\eps)\frac{\ln d}
d\cdot n$ are contained in \emph{some} bigger independent set of size
$(1+\eps)\frac{\ln d}d\cdot n$. That is, they can be expanded beyond
the critical size $\frac{\ln d} d\cdot n$ where shattering occurs.
However, as $k$ approaches the critical size $\frac{\ln d}d\cdot n$,
i.e., as $\eps\ra0$, the typical potential for expansion diminishes.

\begin{theorem}\label{theorem:maximal}
There is $\eps_d\ra0$ such that for any $\eps$ satisfying $\eps_d\leq \eps\leq1-\eps_d$
and 	$k=(1+\eps)\frac{\ln d}d\cdot n$ \whp\ a pair $(G,\sigma)$
chosen from the distribution $\cU_k(n,m)$ is not $(\gamma,\delta)$-expandable
for any $\gamma>\eps_d$ and 
$$\delta<\gamma+\frac{2(\eps-\eps_d)}{1+\eps}.$$
\end{theorem}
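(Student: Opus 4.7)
The plan is a first-moment calculation on pairs of independent sets, combined with the classical concentration of $Z_k:=|\cS_k(G(n,m))|$. Fix $(\gamma,\delta)$ in the admissible region of the theorem. By the definition of $\cU_k$,
$$\pr_{\cU_k}[\sigma\text{ is $(\gamma,\delta)$-expandable}] \;=\; \Erw\!\left[N(G)/Z_k(G)\right],$$
where $N(G)$ counts the $\sigma\in\cS_k(G)$ admitting an extension $\tau\in\cS_{k'}(G)$ with $k'\ge(1+\gamma)k$ and $|\sigma\cap\tau|\ge(1-\delta)k$. Since non-expandability is monotone in $\gamma$ (up) and $\delta$ (down), a union bound over a polynomial grid reduces the theorem to showing $\pr_{\cU_k}[\text{expandable}]=o(1)$ for any single such $(\gamma,\delta)$.

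Bound $N(G)$ by the number of pairs $(\sigma,\tau)$ with $|\tau|=k'$ and $|\sigma\cap\tau|=s$, summed over admissible $(k',s)$. In $G(n,p)$ with $p=d/(n-1)$ (switching from $G(n,m)$ costs only a polynomial factor),
$$\Erw[\#\text{pairs}]\;=\;\binom{n}{k}\binom{k}{s}\binom{n-k}{k'-s}(1-p)^{\binom{k}{2}+\binom{k'-s}{2}+(k'-s)s},$$
the three non-edge contributions coming from the independence of $\sigma$, of $\tau\setminus\sigma$, and from the absence of $(\tau\setminus\sigma)$-to-$(\sigma\cap\tau)$ edges (edges from $\tau\setminus\sigma$ to $\sigma\setminus\tau$ remain free). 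Dividing by $\Erw[Z_k]=\binom{n}{k}(1-p)^{\binom{k}{2}}$, setting $\beta=s/k$ and $c=1+\eps$, and restricting to $k'=\lceil(1+\gamma)k\rceil$ (enlarging $k'$ only strengthens the constraints, a one-line monotonicity check on the exponent), Stirling yields
\begin{equation*}
\frac{1}{n}\ln\frac{\text{summand}}{\Erw[Z_k]}\;=\;\frac{c(\ln d)^2}{d}\,(1+\gamma-\beta)\Bigl[1-\tfrac{c}{2}(1+\gamma+\beta)\Bigr]+O\!\!\left(\tfrac{\ln d\cdot\ln\ln d}{d}\right).
\end{equation*}
The bracket is negative exactly when $\beta>\tfrac{2}{c}-1-\gamma=1-\gamma-\tfrac{2\eps}{1+\eps}$; the worst $\beta$ in the feasible interval $[1-\delta,1]$ is $\beta=1-\delta$, and the summand is then exponentially smaller than $\Erw[Z_k]$ iff $\delta<\gamma+\tfrac{2\eps}{1+\eps}-o_d(1)$, which matches the hypothesis $\delta<\gamma+\tfrac{2(\eps-\eps_d)}{1+\eps}$ once the $o_d(1)$ is absorbed into $\eps_d$.

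Summing over the $n^{O(1)}$ admissible pairs $(k',s)$ contributes only a polynomial factor, so $\Erw[N]\le e^{-\Omega(n)}\Erw[Z_k]$. The standard second-moment computation for $Z_k$ (valid throughout $k\le(2-\eps_d)\tfrac{\ln d}{d}n$) yields $Z_k\ge\tfrac{1}{2}\Erw[Z_k]$ \whp; combined with Markov on $N$, this gives $N/Z_k=e^{-\Omega(n)}$ \whp, whence $\pr_{\cU_k}[\text{expandable}]=o(1)$. The main technical obstacle is the careful bookkeeping of the subleading $\ln\ln d$-type corrections in both Stirling and $\ln(1-p)$, which force the $\eps_d$-slack between the heuristic threshold $\delta=\gamma+\tfrac{2\eps}{1+\eps}$ and the precise hypothesis of the theorem. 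A secondary subtlety is uniformity of the exponential bound over all admissible $(k',s)$, specifically verifying that the worst case is attained at the boundary values $k'=\lceil(1+\gamma)k\rceil$ and $s=\lceil(1-\delta)k\rceil$.
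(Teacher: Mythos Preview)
Your first-moment computation on pairs $(\sigma,\tau)$ is essentially the same calculation the paper carries out (Lemma~\ref{theorem:maximality}), and the exponent you obtain is correct. The genuine gap is the last step. You write that ``the standard second-moment computation for $Z_k$ (valid throughout $k\le(2-\eps_d)\tfrac{\ln d}{d}n$) yields $Z_k\ge\tfrac{1}{2}\Erw[Z_k]$ \whp''. This is false in the sparse regime, and the paper is explicit about it: Section~\ref{sec:PlantedModel} states that ``in the sparse case $m\ll n^{3/2}$ a straight second moment argument fails utterly,'' and Corollary~\ref{cor:transfer-theorem} actually proves the opposite of what you claim, namely that $Z_k\le \exp(-g(d)n)\,\Erw[Z_k]$ \whp\ for precisely the range $k=(1+\eps)\tfrac{\ln d}{d}n$ relevant to Theorem~\ref{theorem:maximal}. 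So the inequality you need simply does not hold, and the second moment method cannot deliver it.

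The paper's route around this obstacle is the transfer theorem (Theorem~\ref{thrm:transfer-theorem}): one performs the first-moment bound in the \emph{planted} distribution $\cP_k(n,m)$, obtaining failure probability $\le\exp(-\Omega(n\ln d/d))$, and then pays a multiplicative factor of $\exp\bigl(14n\sqrt{\ln^5 d/d^3}\bigr)$ to transfer to $\cU_k(n,m)$. That factor is $\exp(o_d(1)\cdot n)$ and is swallowed by the first-moment rate. The transfer theorem itself rests on Proposition~\ref{Lemma_gap}, which gives $Z_k\ge\Erw[Z_k]\cdot\exp\bigl(-14n\sqrt{\ln^5 d/d^3}\bigr)$ \whp, but this is proved not by the second moment method but indirectly via Frieze's concentration of $\alpha(G(n,m))$ (Talagrand's inequality) combined with a coupling argument. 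If you replace your second-moment claim with Proposition~\ref{Lemma_gap}, your argument goes through and is then essentially identical to the paper's; but as written, the proof is incomplete because it invokes a concentration statement that is provably wrong.
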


\noindent
In other words, \Thm~\ref{theorem:maximal} shows that for $k=
(1+\eps)\frac{\ln d}d\cdot n$, a typical $\sigma\in\cS_k(G(n,m))$
cannot be expanded to an independent set of size $(1+\gamma)k$,
$\gamma>\eps_d$ without first \emph{reducing} its size below 
$$(1-\delta)k=(1-\eps-\gamma(1+\eps)+2\eps_d)\frac{\ln d}d\cdot
n<\frac{\ln d}d\cdot n.$$ (However,  a random independent set of
size $k\leq(2-\eps_d)\ln(d)n/d$ is typically not inclusion-maximal
because, for instance, it is unlikely to contain \emph{all} isolated
vertices of the random graph $G(n,m)$.)

Metaphorically, the above results show that 
\whp\ the independent sets of $G(n,m)$ form 
a ragged mountain range.
 Beyond the `plateau level'
$k\sim\frac{\ln d}d\cdot n$ there is an abundance of smaller `peaks',
i.e., independent sets of sizes $(1+\eps)k$ for any $\eps_d<\eps<1-\eps_d$,
almost all of which are
not expandable (by much).

The algorithmic equivalent of a mountaineer aiming to ascend
to the highest summit is a Markov chain
called the \emph{Metropolis process}, \cite{KirkpatricMetropolis,Metropolis1st}.
For a given graph $G$ its state space is the set of all independent
sets of $G$. Let $I_t$ be the state at time $t$. In step $t+1$, the
chain chooses a vertex $v$ of $G$ uniformly at random. If $v\in I_t$,
then with probability $1/\lambda$ the next state is $I_{t+1}=I_t
\setminus\{v\}$, and with probability $1-1/\lambda$ we let $I_{t+1}
=I_t$, where $\lambda\geq1$ is a `temperature' parameter. If $v\not
\in I_t\cup N(I_t)$ (with $N(I_t)$ the neighbourhood of $I_t$), then
$I_{t+1}=I_t\cup\cbc v$. Finally, if $v\in N(I_t)$, then $I_{t+1}=
I_t$.
It is well know that the probability of an independent set $S$ of $G$ in the stationary
distribution equals $\lambda^{|S|}/Z(G,\lambda)$, where
$$
Z(G,\lambda)=\sum_{k=0}^n \lambda^k\cdot\abs{\cS_k(G)}
$$
is the partition function.
Hence, the larger $\lambda$, the higher the mass of large independent
sets. Let
$$\mu(G,\lambda)=\sum_{k=0}^n k{\lambda}^k\cdot\abs{\cS_k(G)}/Z(G,\lambda)$$
denote the average size of an independent set of $G$ under the
stationary distribution.

It is easy to see that every state in $\Omega=\bigcup_kS_k(G(n,m))$ 
communicates with every other in the Metropolis process.
Thus the process is ergodic and possesses a unique stationary 
distribution. Let $\pi:\Omega\to [0,1]$ denote the stationary 
distribution of the Metropolis process with parameter $\lambda$, 
for some $\lambda>0$. It is well known 
that $\pi(\sigma)={\lambda}^{|\sigma|}/Z$  
where $Z=\sum_{\sigma\in \Omega}\lambda^{|\sigma|}$ (e.g. \cite{jerrum-planted}).

Here, we are interested in finding the rate at which the Metropolis
process converges to its equilibrium. There are a number of ways
of quantifying the closeness to stationarity. Let $P^t(\sigma,\cdot)
:\Omega\to [0,1]$ denote the distribution of the state at time $t$
given that $\sigma$ was the initial state. The {\em total variation
distance} at time $t$ with respect to the initial state $\sigma$
is 
\begin{displaymath}
\Delta_{\sigma}(t)=\max_{S \subset \Omega}|P^t(\sigma, S)-\pi(S)|=\frac{1}{2}
\sum_{\tau \in \Omega}|P^t(\sigma, \tau)-\pi(\tau)|.
\end{displaymath}
Starting from $\sigma$, the rate of convergence 
to stationarity may then be measured by the function
\begin{displaymath}
\tau_{\sigma}=\min_t\{\Delta_{\sigma}(t')<e^{-1} \textrm{for all $t'>t$}\}.
\end{displaymath}

\noindent
The {\bf mixing time} of the Metropolis process  is defined as
\begin{displaymath}
T=\max_{\sigma\in \Omega}\tau_{\sigma}.
\end{displaymath}

Our above results on the structure of the sets $\cS_k(G(n,m))$
imply that \whp\ the mixing time of the Metropolis process is
exponential if the parameter $\lambda$ is tuned so that the
Metropolis process tries to ascend to independent sets bigger
than $(1+\epsilon_d)\frac{\ln d}d\cdot n$.

\begin{corollary}\label{cor:MixingTimeBound}
There is $\eps_d\ra0$ such that for $\lambda>1$ with
\begin{equation}\label{eq:themuglambda}
	(1+\eps_d)\frac{\ln d}d\cdot n\leq\Erw\brk{\mu(G(n,m),\lambda)}\leq(2-\eps_d)\frac{\ln d}d\cdot n.
\end{equation}
the mixing time of the Metropolis process on $G(n,m)$ is
$\exp(\Omega(n))$ \whp
\end{corollary}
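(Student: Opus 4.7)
The plan is to prove the corollary by a standard conductance (Cheeger) bottleneck argument: we will exhibit a set $S$ of independent sets of $G(n,m)$ with stationary mass $\pi(S)\in[1/\mathrm{poly}(n),1-1/\mathrm{poly}(n)]$ whose boundary flow satisfies $Q(S,S^c)\le\exp(-\Omega(n))\pi(S)$. Combined with the standard lower bound $t_{\mathrm{mix}}\ge\tfrac{1-2\Phi}{4\Phi}$ this yields $t_{\mathrm{mix}}=\exp(\Omega(n))$ \whp.

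First I would locate a ``concentration size'' for the process. Since $\pi(\cS_k)\propto \lambda^k|\cS_k|$ and $\Erw[\mu(G(n,m),\lambda)]$ lies inside the shattering interval $[(1+\eps_d)n\ln d/d,(2-\eps_d)n\ln d/d]$, a standard concentration argument (together with the fact that the sequence $\lambda^k|\cS_k|$ cannot have too heavy a tail outside a polynomial window around its mean) shows that \whp\ there exists an integer $k^\star$ inside the shattering interval with $\pi(\cS_{k^\star})\ge 1/\mathrm{poly}(n)$. Then \Thm~\ref{theorem:shattering} partitions $\cS_{k^\star}(G(n,m))$ into classes $C_1,\dots,C_M$ of relative size at most $\eul^{-\gamma n}$ and pairwise Hamming distance at least $\zeta n$. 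Because each class is exponentially small compared to $\cS_{k^\star}$, I can greedily split the index set into $A^\star,B^\star$ with $\sum_{i\in A^\star}|C_i|$ and $\sum_{i\in B^\star}|C_i|$ each at least $|\cS_{k^\star}|/3$.

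With $r=\zeta n/3$, define the candidate bottleneck
\[
S=\bigl\{\sigma\in\cI(G(n,m)):\min_{i\in A^\star}\dist(\sigma,C_i)<r\bigr\},
\]
where $\cI(G)$ denotes all independent sets. Then $S$ contains every cluster in $A^\star$ and, by the $\zeta n$-separation, $S^c$ contains every cluster in $B^\star$, so both $\pi(S)$ and $\pi(S^c)$ are at least $\pi(\cS_{k^\star})/3\ge 1/\mathrm{poly}(n)$. Since the Metropolis chain only toggles one vertex per step, any edge $(\sigma,\tau)$ with $\sigma\in S$ and $\tau\in S^c$ must, by the triangle inequality and the $\zeta n$-separation of clusters, have $\dist(\sigma,C_i)\ge\zeta n/2-O(1)$ for \emph{every} cluster $C_i$ of $\cS_{k^\star}$. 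In particular $\sigma$ cannot itself lie in $\cS_{k^\star}$; it is an independent set at macroscopic Hamming distance from the entire shattered layer.

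The crux of the proof, and the main obstacle, is to bound the $\pi$-mass of this ``equator'' set $T=\{\sigma:\min_i\dist(\sigma,C_i)\ge\zeta n/2-O(1)\}$ by $\exp(-\Omega(n))\pi(\cS_{k^\star})$. I plan to extract this from the structural analysis underlying \Thm~\ref{theorem:shattering}: roughly, that theorem is proved by showing that a $\pi$-typical independent set of size close to $k^\star$ lies in a cluster of radius $o(n)$, so the typical configuration sits essentially on top of some $C_i$ rather than on a sphere of radius $\Omega(n)$ around all of them. Concretely, I would combine a first-moment upper bound on the number of pairs $(\sigma,C_i)$ with $\dist(\sigma,C_i)=r$ (using the same small-subgraph/entropy estimates used for \Thm~\ref{theorem:shattering}) with the $\pi$-mass bound coming from $|\sigma|$ lying in $[k^\star-r,k^\star+r]$. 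The exponentially small factor needed for Cheeger will come from the same ``rigidity'' gap that makes shattering work, applied now to configurations straddling two (or more) clusters rather than lying inside one. Once $\pi(T)\le\exp(-\Omega(n))\pi(S)$ is established, the bound $Q(S,S^c)\le\pi(T)$ (since every boundary edge has its $S$-endpoint in $T$) closes out the conductance estimate.
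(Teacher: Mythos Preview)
Your conductance strategy is natural, but the proposal contains one concrete error and one substantive gap.

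The error is the sentence ``by the triangle inequality and the $\zeta n$-separation of clusters, have $\dist(\sigma,C_i)\ge\zeta n/2-O(1)$ for \emph{every} cluster $C_i$.'' This is false: by definition $\sigma\in S$ means $\dist(\sigma,C_{i_0})<r=\zeta n/3$ for some $i_0\in A^\star$, so $\sigma$ is \emph{close} to one cluster, not far from all of them. What the boundary condition actually gives is only $r-1\le\dist(\sigma,C_{i_0})<r$; that is, $\partial S$ sits in a thin annulus of radius $\approx r$ around the $A^\star$-clusters. (Your weaker conclusion that $\sigma\notin\cS_{k^\star}$ is nevertheless correct, since every element of $\cS_{k^\star}$ lies at distance $0$ from its own cluster and $\ge\zeta n$ from all others.)

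The gap is that bounding $\pi(\partial S)$ does \emph{not} follow from \Thm~\ref{theorem:shattering} alone. You have shown $\partial S\cap\cS_{k^\star}=\emptyset$, but $\partial S$ may contain many independent sets of sizes $k^\star\pm1,k^\star\pm2,\ldots$, and these sizes together carry an $\Omega(1)$ fraction of the stationary mass. Ruling them out requires a \emph{multi-layer} (``tube'') version of shattering: for every size $t$ in a window $K$ of width $\Theta(n/d)$ around $k^\star$, there are no independent sets of size $t$ at the forbidden overlap with a planted set of size $k^\star$. The paper proves precisely this as a separate statement (\Lem~\ref{lemma:ShateringTube}, via the planted-model calculation of \Lem~\ref{lemma:ShateringTubePlanted} together with \Thm~\ref{thrm:transfer-theorem}); it is strictly more than \Thm~\ref{theorem:shattering}. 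Likewise, your ``standard concentration argument'' placing $k^\star$ inside the shattering interval hides the paper's \Prop~\ref{prop:MostLikelyIS}, which in turn rests on the nontrivial lower bound on $|\cS_k(G(n,m))|$ in \Prop~\ref{Lemma_gap}.

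Once tube shattering and concentration on $K$ are available, your conductance argument does go through, and in fact more cleanly than you wrote: take $S$ to be a union of tube-clusters $\cC_i$ with $\pi(S)\in[1/3,2/3]$ (possible since each $\cC_i$ has exponentially small mass); since distinct tube-clusters are at Hamming distance $\ge2$, every boundary edge has its $S^c$-endpoint in $\Omega\setminus\bigcup_i\cC_i$, which carries $\pi$-mass $\le\exp(-\Omega(n))$ by \Lem~\ref{lemma:ShateringTube}. The paper uses the same two structural ingredients but finishes differently, via a trajectory argument (\Lem~\ref{thrm:mixingtime}): assuming $T\le\exp(n/d^2)$, it looks at two times $t_1,t_2$ with $t_2-t_1=n^2T$ at which the chain is essentially stationary and hence in distinct tube-clusters with high probability, and obtains a contradiction from a union bound on the probability of ever leaving $\bigcup_i\cC_i$ during $[t_1,t_2]$.
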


\subsection{Related work}
To our knowledge, the connection between transitions in the geometry
of the `solution space' (in our case, the set of all independent sets of a given size)
and the  apparent failure of {\em local algorithms}
in finding a solution has been pointed first out in the statistical mechanics 
literature~\cite{FuAnderson,MPZ-Science,1RSBPaper}.
In that work, which mostly deals with CSPs such as $k$-SAT,
the shattering phenomenon goes by the name of `dynamic replica symmetry breaking.'
Our present work is clearly inspired by the statistical mechanics ideas,
although we are unaware of explicit contributions from that line of work addressing the
independent set problem in the case of random graphs with average degree $d\gg1$.
Generally, the statistical mechanics work is based on deep, insightful, but, alas, 
mathematically non-rigorous techniques.

In the case that the average degree $d$ satisfies $d\gg\sqrt n$,
the independent set problem in random graphs is conceptually
somewhat simpler than in the case of $d=o(\sqrt n)$.
The reason for this is that for $d\gg\sqrt n$
the second moment method can be used to show that the \emph{number}
of independent sets is concentrated about its mean. As we will see
in \Cor~\ref{cor:transfer-theorem} below, this is actually untrue for
sparse random graphs.

The results of the present paper extend the main results from Achlioptas
and Coja-Oghlan~\cite{AchCoOg}, which dealt with constraint satisfaction
problems such as $k$-SAT or graph coloring, to the independent set
problem. This requires new ideas, because the natural questions are
somewhat different (for instance, the concept of `expandability'
has no counterpiece in CSPs). Furthermore, in~\cite{AchCoOg} we
conjectured but did not manage to prove the counterpiece of
\Thm~\ref{thrm:Connectivity} on the connectivity of $\cS_k(G(n,m))$.
On a technical level, we owe to~\cite{AchCoOg} the idea of analysing
the distribution $\cU_k(n,m)$ via a different distribution $\cP_k(n,m)$,
the so-called `planted model' (see \Sec~\ref{sec:PlantedModel}
for details). However, the proof that this approximation is indeed
valid (\Thm~\ref{thrm:transfer-theorem} below) requires a rather
different approach. In~\cite{AchCoOg} we derived the corresponding
result from the second moment method in combination with sharp
threshold results. By contrast, here we use an indirect approach
that reduces the problem of estimating the number $|\cS_k(G(n,m))|$
of independent sets of a given size to the problem of (very accurately)
estimating the independence number $\alpha(G(n,m))$. Indeed, the
argument used here carries over to other problems, particularly
random $k$-SAT, for which it yields a conceptually simpler proof
than given in~\cite{AchCoOg} (details omitted).

Subsequently to~\cite{AchCoOg}, it was shown in~\cite{MRT} that
in many random CSPs the threshold for the shattering of the solution
space into exponentially small components coincides asymptotically
with the \emph{reconstruction threshold}. Roughly speaking, the
reconstruction threshold marks the onset of long-range correlations
in the Gibbs measure. More precisely, it is shown in~\cite{MRT} that
for a class of `symmetric' random CSPs the reconstruction threshold
derives from the corresponding threshold on random trees, and that
it happens to coincide with the shattering threshold. Our
\Thm~\ref{theorem:shattering} determines the threshold for shattering
in the independent set problem in random graphs. Furthermore,
Bhatnagar, Sly, and Tetali~\cite{BST10} recently studied the
reconstruction problem for the independent set problem on $k$-regular
trees. It would be most interesting to obtain a result
similar to~\cite{MRT}, namely that the reconstruction threshold
on the $G(n,m)$ random graph is given by the reconstruction threshold
on trees and that it coincides with the shattering threshold from
\Thm~\ref{theorem:shattering}.

The work that is perhaps most closely related to ours is  a remarkable
paper of Jerrum~\cite{jerrum-planted}, who studied the Metropolis
process on random graphs $G(n,m)$ with average degree $d=2m/n>n^{2/3}$.
The main result is that \whp\  \emph{there exists} an initial
state from which the expected time for the Metropolis process to
find an independent set of size $(1+\eps)\frac{\ln d}d\cdot n$ is
superpolynomial.
This is quite a non-trivial achievement, as it is a result about the
\emph{initial} steps of the process where the states might potentially
follow a very different distribution than the stationary distribution.
The proof of this fact is via a concept called `gateways', which is
somewhat reminiscent of the expandability property in the present work.
However, Jerrum's proof hinges upon the fact that the number of
independent sets of size $k\sim(1+\eps)\frac{\ln d}d\cdot n$ is
concentrated about its mean. The techniques from the present work
(particularly \Thm~\ref{thrm:transfer-theorem} below) can be used
to extend Jerrum's result to the sparse case quite easily, showing
that the expected time until a large independent set is found is fully
exponential in $n$ \whp\ Yet as also pointed out in~\cite{jerrum-planted},
an unsatisfactory aspect of this type of result is that it only shows that
\emph{there exists} a `bad' initial state, 	while it seems natural to
conjecture that indeed most specific initial states (such as the empty set)
are `bad'. Since we are currently unable to establish such a
stronger statement, we will confine ourselves to proving an exponential
lower bound on the mixing time (\Cor~\ref{cor:MixingTimeBound}).

For \emph{extremely} sparse random graphs, namely $d<\eul\approx2.718$,
finding a maximum independent set in $G(n,m)$ is easy. More
specifically, the greedy matching algorithm of Karp and
Sipser~\cite{KarpSipser} can easily be adapted so that it
yields a maximum independent set \whp\ But this approach does
not generalize to average degrees $d>\eul$ (see, however,
\cite{Gamarnik} for a particular type of weighted independent
sets).

Recently Rossman~\cite{monotone-circuit} obtained a monotone circuit
lower bound for the clique problem on random graphs that is exponential
in the size of the clique. The setup of~\cite{monotone-circuit} is
somewhat orthogonal to our contribution, as we are concerned with the
case that the size of the desired object (i.e., the independent set)
is linear in the number of vertices, while~\cite{monotone-circuit}
deals with the case that the size of the clique is $O(1)$ in terms
of the order of the graph. Nevertheless, the punchline of viewing
random graphs as a potential source of hard problem is similar.

In the course of the analysis in this paper we need a lower bound
on $\alpha(G(n,m))$ which is bigger what is calculated in \cite{frieze-is}.
For this reason, in \cite{arxivTR}, a previous version of this
work, we improved slightly on the value of $\alpha(G(n,m))$. The
analysis is similar to that in \cite{frieze-is}, i.e. combine
vanilla second moment with Talagrand's inequality. 
A bit later our result was improved even more by Dani and Moore
\cite{W2ndM}. Raughly speaking, the authors show that a $G(n,m)$
of expected degree $d\leq 2(n/k) \ln(n/k) + 2(n/k) -O(\sqrt{n/k})$
has an independent set of size $k$ w.h.p.
In comparison to \cite{W2ndM}, our bound on $d$ in \cite{arxivTR}
is $d<2(n/k)(\ln (n/k)+1)-O(\sqrt{\ln(n/k)\cdot (n/k)})$. To absolve
our work from the tendious second moment calculations we make direct
use of the result \cite{W2ndM}.

\subsection{Organisation of the paper}

The remaining material of this work is organised as follows:
For completeness, in Section \ref{sec:Prelim} we provide some 
very elementary results, which are either known  or easy to
derive. 
In Section \ref{sec:PlantedModel} we analyse the so-called `planted
model' to approximate the distribution $\cU_k(n,m)$. 
Then in Section \ref{section:thrm:Connectivity} we show Theorem
\ref{thrm:Connectivity}. In Section \ref{sec:theorem:shattering}
we show Theorem \ref{theorem:shattering}. In Section \ref{section:theorem:non-maximal}
we show Theorem \ref{theorem:non-maximal}. In Section
\ref{section:theorem:maximal} we show  Theorem \ref{theorem:maximal}.
In Section \ref{sec:cor:MixingTimeBound} we show Corollary \ref{cor:MixingTimeBound}.

\section{Preliminaries and notation}\label{sec:Prelim}

\noindent
We will need the following Chernoff bounds on the tails of a sum
of independent Bernoulli variables~\cite{MotwRandAlgBook}.

\begin{theorem}\label{chernoffbounds}
Let $I_1, I_2\ldots, I_n$ be independent Bernoulli variables.
Let $X=\sum_{i=1}^{n}I_i$ and $\mu=E[X]$.
Then
\begin{eqnarray}
Pr[X<(1-\delta)\mu]&\leq&\exp\left( -\mu\delta^2/2 \right)\qquad\mbox{ for any $0<\delta\leq 1$, and}\label{eq:ChernoffLB}\\
Pr[X>(1+\delta)\mu]&\leq&\exp\left( -\mu\delta^2/4 \right)\qquad\mbox{ for any $0<\delta<2e-1$}\label{eq:ChernoffUB}.
\end{eqnarray}
\end{theorem}

Let $G^*(n,m)$ be random graph on $n$ vertices obtained as follows:
choose $m$ pairs of vertices independently out of all $n^2$ possible
pairs; insert the $\leq m$ edges induced by these pairs, omitting
self-loops and replacing multiple edges by single edges. For technical
reasons it will sometimes be easier to first work with $G^*(n,m)$
and  then transfer the results to $G(n,m)$. The two distributions
are related as follows.

\begin{lemma}\label{lemma:model-equivalence}
For any fixed $c>0$ and $m=cn$ we have
\begin{displaymath}
Pr[G(n,m)\in\cA]\leq (1+o(1))\exp(c+c^2)\cdot 
		Pr[G^*(n,m)\in\cA]\qquad\mbox{for any event $\cA$}.
\end{displaymath}
\end{lemma}
\begin{proof}
This is a standard counting argument. The random graph $G^*(n,m)$ is
obtained by choosing one of the $n^{2m}$ possible sequences of vertex
pairs uniformly at random. Out of these $n^{2m}$ sequences, precisely
$2^m\bink n2_m$ sequences induce simple graphs with $m$ edges (where
$\bc\cdot_m$ denotes the falling factorial). Indeed, each of the
$\bink{\bink n2}m$ simple graph with $m$ edges can be turned into
a sequence of pairs by ordering the edges arbitrarily (a factor $m!$),
and then choosing for each edge in which order its vertices appear in
the sequence (a factor $2^m$). Hence, letting $\Sigma$ denote the event
that $G^*(n,m)$ is a simple graph with $m$ edges, we see that
	\begin{eqnarray}\nonumber
	Pr[G^*(n,m)\in\Sigma]&=&\frac{2^m\bink n2_m}{n^{2m}}
		=\bcfr 2{n^2}^m\cdot\prod_{j=0}^{m-1}\bink n2-j
			=\prod_{j=0}^{m-1}1-\frac1n-\frac{2j}{n^2}\\
		&=&\exp\brk{\sum_{j=0}^m\ln\bc{1-\frac1n-\frac{2j}{n^2}}}\nonumber\\
		&\sim&\exp\brk{-\sum_{j=0}^m\frac1n+\frac{2j}{n^2}}\qquad\mbox{[using $\ln(1-x)=-x+O(x^2)$ as $x\ra0$]}\nonumber\\
		&\sim&\exp\brk{-c-c^2}.
			\label{eqGnm*}
	\end{eqnarray}
Furthermore, given that the event $\Sigma$ occurs, $G^*(n,m)$ is
just a uniformly distributed (simple) graph with $m$ edges. 
Therefore,  (\ref{eqGnm*}) yields
	\begin{eqnarray*}
	\pr\brk{G(n,m)\in\cA}&=&\pr\brk{G^*(n,m)\in\cA|\Sigma}
		\leq\frac{\pr\brk{G^*(n,m)\in\cA}}{\pr\brk{G^*(n,m)\in\Sigma}}\\
		&\sim&\exp\brk{c+c^2}\pr\brk{G^*(n,m)\in\cA},
	\end{eqnarray*}
as claimed.
\end{proof}

\begin{corollary}\label{corollary:expectation-equivalence}
Suppose that $m=cn$ for a fixed $c>0$.
For a graph $G$ let $Z_k(G)=|{\cal S}_k(G)|$.
Then for any $1\leq k\leq 0.99n$ we have
\begin{displaymath}
\ln E[Z_k(G^*(n,m))] =\ln  E[Z_k(G(n,m))]+O(1).
\end{displaymath}
\end{corollary}
\begin{proof}
Let $Q\subset V$ be a set of size $k$, and let $Z_Q(G)=1$ if $Q$ is
independent in $G$, and set $Z_Q(G)=0$ otherwise. The total number
of sequences of $m$ vertex pairs such that $Q$ is an independent set
in the corresponding graph $G^*(n,m)$ equals
	$(n^2-k^2)^m$ (just avoid the $k^2$ pairs of vertices in $Q$).
Hence,
	\begin{eqnarray}\label{eqexpequi1}
	E\brk{Z_Q(G^*(n,m))}&=&\frac{(n^2-k^2)^m}{n^{2m}},\qquad\mbox{and similarly}\\
	E\brk{Z_Q(G(n,m))}&=&\bink{\bink{n}2-\bink k2}m/\bink{\bink n2}m=\frac{(\bink{n}2-\bink k2)_m}{\bink n2_m}.	\label{eqexpequi2}
	\end{eqnarray}
Combining~(\ref{eqexpequi1}) with (\ref{eqexpequi2}) and using
$\ln(1-x)=-x+O(x^2)$ as $x\ra0$, we obtain
	\begin{eqnarray*}
	\frac{E\brk{Z_Q(G^*(n,m))}}{E\brk{Z_Q(G(n,m))}}
		&=&\frac{2^m\bink n2_m}{n^{2m}}\cdot\frac{(n^2-k^2)^m}{2^m(\bink n2-\bink k2)_m}
			\;\stacksign{(\ref{eqGnm*})}{\sim}\;
				\exp(-c-c^2)\frac{(n^2-k^2)^m}{2^m(\bink n2-\bink k2)_m}\\
		&=&\exp\brk{-c-c^2-\sum_{j=0}^{m-1}\ln\bc{1-\frac{n-k}{n^2-k^2}-\frac{2j}{n^2-k^2}}}\\
		&\sim&\exp\brk{-c-c^2+\frac{m(n-k)}{n^2-k^2}+\frac{m^2}{n^2-k^2}}\\
		&=&\exp\brk{-c-c^2+\frac{c}{1+k/n}+\frac{c^2}{1-(k/n)^2}}
				=\exp\brk{-\frac{ck}{n+k}+\frac{c^2k^2}{n^2-k^2}}.
	\end{eqnarray*}
Hence, by the linearity of expectation,
	\begin{eqnarray*}
	E[Z_k(G^*(n,m))]&=&\bink nk\cdot E\brk{Z_Q(G^*(n,m))}
			=\exp\brk{-\frac{ck}{n+k}+\frac{c^2k^2}{n^2-k^2}}\cdot \bink nk E\brk{Z_Q(G(n,m))}\\
		&=&\exp\brk{-\frac{ck}{n+k}+\frac{c^2k^2}{n^2-k^2}}E[Z_k(G(n,m))].
	\end{eqnarray*}
Taking logarithms and recalling that $k\leq0.99n$ completes the proof.
\end{proof}

\noindent
Finally we present a lemma that it will be very useful 
in the course of this paper.

\begin{lemma}[Expectation.]\label{lemma:expectation}
Let $m=dn/2$ for a real $d>0$.
Let $0<\beta<\ln d-\ln\ln d+1-\ln 2$ and set
	\begin{displaymath}
	k=\frac{2n}{d}\left(\ln d-\ln\ln d+1-\ln 2-\beta \right)>0.
	\end{displaymath}
If $Z_k(G)$ is the number of independent sets of size  $k$ in $G$, then
	\begin{displaymath}
		\ln E[Z_k(G^*(n,m))]=k\left[\beta
			-\ln\left(1-\frac{\ln \ln d-1+\ln 2+\beta}{\ln d}\right)
		-\frac{1-\epsilon_d}{2}\frac kn
		\right].
	\end{displaymath}
for $\epsilon_d\to 0$ as $d\to \infty$.
\end{lemma}
\begin{proof}
Since $G^*(n,m)$ is obtained by choosing $m$ independent pairs of
vertices, we have
\begin{equation}\label{eqexpgeneral}
E[Z_k(G^*(n,m))]={n \choose k}(1-(k/n)^2)^m.
\end{equation}
Let $s=\frac{k}{n}$.
By Stirling's formula and the fact that for $x>0$ it holds that
$\ln(1-x)=-x-\frac{x^2}{2(1-\xi)^2}$ for  some $0<\xi<x$, we get
that
\begin{eqnarray}
\ln {n \choose k}&=& -n(s\ln s+(1-s)\ln(1-s)) +o(n) \nonumber \\ 
&=& ns(-\ln s+1 - s/2-s^2/(2(1-\xi_1)^2)+o(n) \hspace{3.5cm}\mbox{[where $0<\xi_1<s$]} \nonumber\\ 
&=& k\left[\ln d-\ln\ln d-\ln 2 +1 - \ln (1-q_d)-k/(2n) +(k/n)^2/(2(1-\xi_1)^2)\right]+o(n),
\label{eqexpcomput1}
\end{eqnarray}
where $q_d=\frac{\ln \ln d-1+\ln 2+\beta}{\ln d}$.
As $m=\frac{d}{2}n$, we obtain
\begin{eqnarray}%
\ln (1-s^2)^m&=& -dn/2\left(s^2+s^4/(2(1-\xi_2)^2)\right) \nonumber \\
&=&-ns[ds/2+ds^3/(2(1-\xi_2))^2 ] \hspace{4.5cm}\mbox{[where $0<\xi_2<s^2$]} \nonumber \\ 
&=&-k\left(\ln d-\ln\ln d-\ln 2 +1-\beta +d(k/n)^3/(2(1-\xi_2)^2) \right ).
\label{eqexpcomput2}
\end{eqnarray}
Note that both $\xi_1,\xi_2$ tend to zero with $d$.
Combining~(\ref{eqexpcomput1}) and~(\ref{eqexpcomput2}) 
yields the assertion.
\end{proof}

\noindent
We also need the following theorem from Dani and Moore~\cite{W2ndM} on
the independence number of $G^*(n,m)$.
\begin{theorem}\label{lemma:SMBound}
There is a constant $\alpha_0>0$ such that
for any $x>4/e$ and any $k\leq \alpha_0n$ the following is true.
Suppose that
\begin{displaymath}
d \leq 2 (n/k)(\ln(n/k)+1)-{x}{\sqrt{n/k}}
\end{displaymath}
and let $m=dn/2$.
Then
$\alpha(G^*(n,m))\geq k$
\whp
\end{theorem}

\noindent
{\bf Remark.}
In a previous version of this work \cite{arxivTR} we derived a slightly 
weaker bound on $d$, i.e. $d<2(n/k)(\ln (n/k)+1)-O(\sqrt{\ln(n/k)\cdot (n/k)})$. 
As opposed to the weighted second moment in \cite{W2ndM}, 
our approach is based on  ``vanilla'' second moment calculations and the
use of a Talagrand type inequality, i.e. similar to that in \cite{frieze-is}.
\\ \vspace{-.3cm}

\noindent
From \cite{W2ndM} we, also, have the following corollary.

\begin{corollary}\label{cor:SMBoundRev}
Let $W(z)$ denote the largest positive root $y$ of the equation
$ye^y=z$. W.h.p. it holds that
\begin{displaymath}
0\leq \frac{2}{d}W\left(\frac{ed}{2}\right)-\alpha(G^*(n,m))  \leq
y\sqrt{\frac{\ln d}{d^3}},
\end{displaymath}
for any constant $y>4\sqrt{2}/e$. Expanding $W(ed/2)$ asymptotically in $d$ 
we have that
\begin{eqnarray}
W\left(\frac{ed}{2}\right) &=& 
\ln d-\ln\ln d+1-\ln 2+\frac{\ln\ln d}{\ln d}-\frac{1-\ln 2}{\ln d}
\nonumber \\
&&+\frac{1}{2}\left(\frac{\ln\ln d}{\ln d}\right)-(2-\ln 2)\frac{\ln\ln d}{\ln^2 d}+
\frac{3+\ln^22-4\ln 2}{2\ln^2d}+O\left(\left(\frac{\ln\ln d}{\ln d}\right)^3\right).
\nonumber
\end{eqnarray}
\end{corollary}

\noindent
It is well known that the independence number $\alpha(G^*(n,m))$ of
the random graph is tightly concentrated. More precisely, the following 
lower tail  bound follows from a standard application of Talagrand's
large deviations inequality~\cite{TalagrandIneq}, similar to the
one used in~\cite[\Sec~7.1]{janson} to establish concentration for
$\alpha(G(n,p))$.

\begin{theorem}\label{thrm:TalagrandTailBound}
Suppose that $d,k$ are as in Theorem \ref{lemma:SMBound}. 
Then for $m=\frac{dn}{2}$ and for any positive integer $t<k$ 
it holds that
	$$Pr[\alpha(G^*(n,m))<t]\leq12\exp\left(-\frac{(k-t+1)^2}{4k}\right).$$
\end{theorem}
\begin{proof}
Consider the graph $G(n,p)$ where $p=d/n$ and let $E(G(n,p))$
denote the number of its edges. It holds that
\begin{eqnarray*}
Pr[\alpha(G(n,p))\geq k]&=&
\sum_{M=0}^{{n\choose 2}} Pr[\alpha(G^*(n,M))\geq k]Pr[E(G(n,p))=M] \\
&\geq & \sum_{M\leq dn/2}Pr[\alpha(G^*(n,m))\geq k]Pr[E(G(n,p))=M]
\qquad \mbox{[where $m=dn/2$]}\\ 
&\geq & Pr[\alpha(G^*(n,m))\geq k]Pr\left[E(G(n,p))\leq \frac{dn}2\right].
\end{eqnarray*}
From the above derivations and Theorem \ref{lemma:SMBound}, 
it is direct that
\begin{equation}\label{eq:GnpLB}
Pr[\alpha(G(n,p))\geq k]\geq \frac{1}{3}Pr[\alpha(G^*(n,m))\geq k]\geq 1/4.
\end{equation}
A vertex exposure argument allows to apply Talagrand's large 
deviation inequality for the independence number of $G(n,p)$
(in the form that appears in \cite{janson}, page 41 (2.39)).
The following holds:
\begin{displaymath}
Pr[\alpha(G(n,p))<t]Pr[\alpha(G(n,p))\geq k]
\leq \exp\left( -{(k-t+1)^2}/{4k} \right).
\end{displaymath}
Using (\ref{eq:GnpLB}) we get
\begin{displaymath}
Pr(\alpha(G(n,p))<t)\leq 4 \exp\left( -{(k-t+1)^2}/{4k} \right).
\end{displaymath}
Working as in (\ref{eq:GnpLB}) we get that
$
\frac{1}{3}Pr[\alpha(G^*(n,m))<t]\leq Pr[\alpha(G(n,p))<t].
$
The theorem follows.
\end{proof}

\begin{corollary}\label{theorem:Reverse-Frieze}
For the integer $k>0$ let
	\begin{displaymath}
	\delta_k = 2(n/k)\ln (n/k)+2(n/k)-8\sqrt{(n/k)}.
	\end{displaymath}
There is a constant $\alpha_0>0$ such that for $k<\alpha_0n$ and $G^*(n,m)$
of expected degree $d\leq \delta_k$ it holds that
\begin{eqnarray}\label{eq:tail4aG}
Pr[\alpha(G^*(n,m))< k]&\leq&12 \exp\left(- n/(d^{2}\ln^5 d)\right).
\end{eqnarray}
Also, for $d=\delta_k$ it holds that
$E|{\cal S}_k(G^*(n,m))|\leq \exp\left( 14 n\sqrt{{\ln^5 d}/{d^3}} \right)$.
\end{corollary}
\begin{proof}
Let $G^*(n,m)$ be of expected degree $d = 2 (n/k)(\ln(n/k)+1)-{8}{\sqrt{n/k}}$,
where $k$ is as in the statement. Also, let $k'$ be such that
$d=2 (n/k')(\ln(n/k')+1)-2{\sqrt{n/k'}}$. By Theorem \ref{thrm:TalagrandTailBound}
we have that
\begin{eqnarray}
Pr[\alpha(G^*(n,m))<k]&\leq& 12\exp\left(-\frac{(k'-k+1)^2}{4k'}\right)
\leq 12\exp\left(-\frac{(k'-k+1)^2}{8k}\right), \label{eq:talagrand1821}
\end{eqnarray}
where the last inequality follows from the fact that $k'<2k$. The
tail bound in (\ref{eq:tail4aG}) will follow by bounding appropriately
$t=k'-k>0$. We bound $t$ by using the fact that
\begin{displaymath}
2 (n/k)(\ln(n/k)+1)-{8}{\sqrt{n/k}}=2 (n/k')(\ln(n/k')+1)-2{\sqrt{n/k'}}.
\end{displaymath}
Set $s=k/n$ and $q=t/k$. Let $h(s,q)$ be the difference of the l.h.s.
minus r.h.s. in the above equality, written in terms of $s,t$.
Clearly, it holds that that $h(s,q)=0$. That is
\begin{eqnarray}
h(s,q)
&=&\frac{2\ln(1+q)}{s}+\frac{q}{1+q}\left(-\ln s-\ln(1+q)+1\right)-
\frac{2}{\sqrt{s}}\left(4-\frac{1}{\sqrt{1+q}}\right)=0.
\nonumber
\end{eqnarray}
For $1.5 n\ln d/d < k,k'< 2n\ln d/d$, it is direct to verify that for 
$q=10/\sqrt{d\ln^5 d}$ and sufficiently small $s$ 
it holds that $h(s,q)<0$. Furthermore, it is easy to see that
\begin{displaymath}
\frac{\partial}{\partial q} h(s,q)=\frac{2}{s(1+q)}+
\frac{1}{(1+q)^2}\left(-\ln s-\ln(1+q)+1-q\right)-\frac{1}{\sqrt{s(1+q)^3}}.
\end{displaymath}
For any $q\in [0,1]$ and sufficiently small $s$ we have that
$\frac{\partial}{\partial q} h(s,q)>0$. This yields to the fact
that for any $q\leq 10/\sqrt{d\ln^5}$ and sufficiently small
$s$ we have $h(s,q)<0$. Thus, we get that $k'-k\geq 10k/\sqrt{d\ln^5 d}$.
Plugging this  into (\ref{eq:talagrand1821}) we get that
\begin{eqnarray}
Pr[\alpha(G^*(n,m))<k]&\leq& 12\exp\left(-\frac{100}{8}\frac{k}{d\ln^5 d}\right)
\nonumber \\
&\leq& 12\exp\left(-\frac{300}{16}\frac{n}{d^{2}\ln^4d}\right), \qquad \mbox{[as $k\geq 1.5n \ln d/d$]}
\nonumber
\end{eqnarray}
which implies (\ref{eq:tail4aG}).

For the rest of the proof, consider $G^*(n,m)$ with expected degree
$d=\delta_k$.  Assume that we add to $G^*(n,m)$ edges at random so
as to increase the expected degree to $d^+= 2\frac{s\ln s+(1-s)\ln(1-s))}{\ln(1-s^2)}$ 
and get the graph $G^*(n,m')$.
That is, we need to insert into $G^*(n,m)$ as many as $(d^+-d)n/2$
random edges. Therefore, each independent set of size $k$ in
$G^*(n,m)$ is also an independent set of $G^*(n,m')$ with probability
$\left(1-(k/n)^2 \right)^{(d^+-d)n/2}$. Let $s=(k/n)$. It is direct
that 
\begin{equation}\label{eq:Relation1896A}
E|{\cal S}_k(G(n,m'))|=\left(1-s^2 \right)^{(d^+-d)n/2}E[|{\cal S}_k(G(n,m))|].
\end{equation}
Using Corollary \ref{corollary:expectation-equivalence} we get that
\begin{eqnarray}
\frac{1}{n}\ln E|{\cal S}_k(G(n,m'))|&=&\frac{1}{n}\ln\left({n \choose k}(1-(k/n)^2)^{d^+n/2}\right)+O\left(\frac{1}{n}\right)
\nonumber \\
&\sim&-[s\ln s+(1-s)\ln(1-s)]+d^+\ln(1-s^2)/2-\frac{\ln n}{2n} 
\nonumber \\
&\sim &-\frac{\ln n}{2n}. \label{eq:Relation1896B}
\end{eqnarray}
Furthermore, using the fact that $-\frac{x}{1-x}\leq \ln(1-x)\leq -x$,
for $0<x<1$, it is direct that
\begin{equation}\label{eq:Relation1896C}
2\frac{-\ln s+1}{s}\leq d^+\leq 2\frac{-\ln s+1}{s}+2.
\end{equation}
Combining (\ref{eq:Relation1896A}), (\ref{eq:Relation1896B}) and
(\ref{eq:Relation1896C}), we get that 
\begin{eqnarray*}
\frac{1}{n}\ln E|{\cal S}_k(G(n,m))|&\leq& - \ln(1-s^2)(d^+-d)/2-o(1)
\hspace{2.4cm}\mbox{[by (\ref{eq:Relation1896A}) and (\ref{eq:Relation1896B})]}\\
&\leq &4\frac{s^{3/2}}{1-s^2} \hspace{1.4cm}  \mbox{[by (\ref{eq:Relation1896C})
and $1-x>e^{-x/(1-x)}$ for $0<x<1$]}.
\end{eqnarray*}
The upper bound for $E|{\cal S}_k(G(n,m))|$ follows by using the
above inequality and noting that $k \leq 2n\ln d/d$, i.e.
$s \leq 2\ln d/d$.
\end{proof}

\begin{corollary}\label{cor:ExistenceGnm}
For the graph $G(n,m)$ of expected degree $d$ it holds that
\begin{displaymath}
Pr\left[\alpha(G(n,m))\geq 2n(1-\epsilon_d){\ln d}/{d} \right]
\geq 1-\exp \left[-{8n}/(d\ln^3 d) \right]. 
\end{displaymath}
where $\epsilon_d\to0$ as $d$ increases.
\end{corollary}
\begin{proof}
Consider $G^*(n,m)$ of expected degree $d$ and let $k$ be such that
$k/n=\frac{2}{d}\left(W(ed/2)-10\sqrt{\ln d/d^3}-2\frac{\ln\ln d}{\ln d}\right)$, 
where $W(z)$ is defined in the statement of Corollary \ref{cor:SMBoundRev}. 
Using Corollary \ref{cor:SMBoundRev} and Theorem \ref{thrm:TalagrandTailBound}, 
we get that
\begin{displaymath}
Pr\left[\alpha(G^*(n,m))\leq k \right] \leq \exp \left(-\frac{8(\ln\ln d)^2}{d\ln^3 d}n \right).
\end{displaymath}
The corollary follows by using Lemma \ref{lemma:model-equivalence}.
\end{proof}

\noindent
The following is taken from \cite[p.~156]{janson}.

\begin{lemma}\label{Lemma_isoGnm}
Let $d>0$ be fixed and $m=dn/2$.
Let $Y$ be the number of isolated vertices in $G(n,m)$.
Then $Y=(1+o(1))n\exp(-d)$ \whp
\end{lemma}

\section{Approaching the distribution $\cU_k(n,m)$}\label{sec:PlantedModel}

\subsection{The planted model}

The main results of this paper deal with properties of `typical'
independents sets of a given size in a random graph, i.e., the
probability distribution $\cU_k(n,m)$. In the theory of random
discrete structures often the conceptual difficulty of analysing
a probability distribution is closely linked to the computational
difficulty of sampling from that distribution (e.g., \cite[Chapter~9]{janson}).
This could suggest that analysing $\cU_k(n,m)$ is a formidable task,
because for $k>(1+\eps)n\ln(d)/d$ there is no efficient procedure known
for finding an independent set of size $k$ in a random graph $G(n,m)$,
let alone for sampling one at random. In effect, we do not know of
an efficient method for sampling from $\cU_k(n,m)$.

To get around this problem, we are going to `approximate' the
distribution $\cU_k(n,m)$ by another distribution $\cP_k(n,m)$
on the set $\Lambda_k(n,m)$ of graph/independent set pairs, the
so-called planted model, which is easy to sample from. This
distribution is induced by the following experiment:
\begin{quote}
Choose a subset $\sigma\subset\brk n$ of size $k$ uniformly at random.\\
Choose a graph $G$ with $m$ edges \emph{in which $\sigma$ is an
 independent set} uniformly at random.\\
Output the pair $(G, \sigma)$.
\end{quote}
In other words, the probability assigned to a given pair $(G_0,\sigma_0)\in\Lambda_k(n,m)$ is
	\begin{eqnarray}\label{eqProbPlanted}
	\pr_{\cP_k(n,m)}\brk{(G_0,\sigma_0)}&=&\brk{\bink{n}k\cdot\bink{\bink{n}2-\bink{k}2}m}^{-1},
	\end{eqnarray}
i.e., $\cP_k(n,m)$ is nothing but the uniform distribution on $\Lambda_k(n,m)$.
The key result that allows us to study the distribution $\cU_k(n,m)$ is the following.

\begin{theorem}\label{thrm:transfer-theorem}
There is $\eps_d\ra0$ such that  for $k<(2-\eps_d)n\ln(d)/d$ the following is true.
If $\cB$ is an event such that
	\begin{equation}\label{eqtransfer-theoremassumption}
	\pr_{\cP_k(n,m)}\brk\cB=o\bc{\exp\bc{-14 n\sqrt{{\ln^5d}/{d^3}}}},
	\end{equation}
then $\pr_{\cU_k(n,m)}\brk\cB=o(1)$.
\end{theorem}
Hence, \Thm~\ref{thrm:transfer-theorem} allows us to bound the 
probability of some `bad' event $\cB$ in the distribution $\cU_k(n,m)$
by bounding its probability in the distribution $\cP_k(n,m)$.

To establish \Thm~\ref{thrm:transfer-theorem}, we need to find
a way to compare $\cP_k(n,m)$ and $\cU_k(n,m)$. Suppose that
$k<(2-\eps_d)n\ln(d)/d$ is such that $\alpha(G(n,m))\geq k$ \whp\
Then the probability of a pair $(G_0,\sigma_0)\in\Lambda_k(n,m)$
under the distribution $\cU_k(n,m)$ is
\begin{eqnarray}\label{eqProbUniform}
	\pr_{\cU_k(n,m)}\brk{(G_0,\sigma_0)}&\sim&\brk{\bink{\bink{n}2}m\cdot|\cS_k(G_0)|}^{-1}
\end{eqnarray}
(because we first choose a graph uniformly, and then an independent
set of that graph). Hence, the probabilities assigned to $(G_0,\sigma_0)$
under~(\ref{eqProbUniform}) and~(\ref{eqProbPlanted}) coincide
(asymptotically) iff
\begin{eqnarray}\label{eqProbCoincide}
	|\cS_k(G_0)|&\sim&\bink{n}k\bink{\bink{n}2-\bink{k}2}m/\bink{\bink{n}2}m.
\end{eqnarray}
A moment's reflection shows that the expression on the r.h.s.
of~(\ref{eqProbCoincide}) is precisely the \emph{expected}
number $\Erw|\cS_k(G(n,m))|$ of independent sets of size $k$.
Thus, $\cP_k(n,m)$ and $\cU_k(n,m)$ coincide asymptotically iff
the number $|\cS_k(G(n,m))|$ of independents sets of size $k$ is
concentrated about its expectation.

This is indeed the case in `dense' random graphs with $m\gg n^{3/2}$.
For this regime one can perform a `second moment' computation to show
that $|\cS_k(G(n,m))|\sim\Erw|\cS_k(G(n,m))|$ \whp, (e.g. see \cite[Chapter~7]{janson}) 
whence the measures $\cP_k(n,m)$ and $\cU_k(n,m)$ are interchangeable.
This fact forms (somewhat implicitly) the foundation of the proofs
in~\cite{jerrum-planted}.

By contrast, in the sparse case $m\ll n^{3/2}$ a straight second
moment argument fails utterly. As it turns out, this is because
the quantity $|\cS_k(G(n,m))|$ simply it not concentrated about
its expectation anymore. In fact, maybe somewhat surprisingly
\Thm~\ref{thrm:transfer-theorem} can be used to infer the following
corollary, which shows that in sparse random graphs the expectation
$\Erw|\cS_k(G(n,m))|$ `overestimates' the typical number of independent
sets by an exponential factor \whp\

\begin{corollary}\label{cor:transfer-theorem}
There exist functions $\eps_d\ra 0$ and $g(d)>0$ such that for
$10n/d < k<(2-\eps_d)n\ln(d)/d$ we have
$$|\cS_k(G(n,m))|\leq\Erw|\cS_k(G(n,m))|\cdot\exp(-g(d)n)\qquad\mbox\whp$$
\end{corollary}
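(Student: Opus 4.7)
The plan is to derive Corollary~\ref{cor:transfer-theorem} by combining Theorem~\ref{thrm:transfer-theorem} with a direct second moment computation. First I would perform the second-moment calculation for $|\cS_k|$ under $G(n,m)$. Standard enumeration over the overlap $s=|\sigma\cap\tau|$ of two $k$-subsets gives
\[
\Erw|\cS_k|^2 \;=\; \sum_{s=0}^{k}\binom{n}{k}\binom{k}{s}\binom{n-k}{k-s}\binom{\binom{n}{2}-2\binom{k}{2}+\binom{s}{2}}{m}\!\Big/\!\binom{\binom{n}{2}}{m},
\]
and in the shattered regime a careful asymptotic analysis---showing that the "independent" overlap $s\approx k^2/n$ is no longer the dominant contribution once $k>(1+\eps_d)n\ln(d)/d$---should yield $\Erw|\cS_k|^2\ge(\Erw|\cS_k|)^2\cdot e^{c(d)n}$ for some function $c(d)>0$. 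Equivalently, the planted expectation satisfies $\Erw_{\cP_k}[|\cS_k(G)|]=\Erw|\cS_k|^2/\Erw|\cS_k|\ge\Erw|\cS_k|\cdot e^{c(d)n}$, so that $|\cS_k(G)|$ is, in $\cP_k$-expectation, exponentially larger than its $G(n,m)$-expectation.

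Second, I would combine this with Theorem~\ref{thrm:transfer-theorem} using the change-of-measure identity
$\pr_{\cU_k}(G,\sigma)/\pr_{\cP_k}(G,\sigma)=\Erw|\cS_k|/|\cS_k(G)|$.
The transfer inequality then gives, for every event $\cE\subseteq\Lambda_k(n,m)$ of positive $\cP_k$-probability,
\[
\Erw_{\cP_k}\!\bigl[\,\Erw|\cS_k|/|\cS_k(G)|\,\big|\,\cE\,\bigr]\;\le\;\exp\bc{14 n\sqrt{\ln^5d/d^3}}.
\]
Applied with $\cE=\{|\cS_k|\le L\}$, on which $\Erw|\cS_k|/|\cS_k|\ge\Erw|\cS_k|/L$, this gives $|\cS_k(G)|\ge\Erw|\cS_k|\cdot\exp(-14n\sqrt{\ln^5d/d^3})$ \whp\ under both $\cU_k$ and (by the standard conditioning argument, since $\alpha(G(n,m))\ge k$ \whp) $G(n,m)$.

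The main step is then to bridge from this lower bound to the upper bound asserted in the corollary, which is where the second-moment blow-up enters. Here I would combine the following two facts: (i) from the transfer theorem and Markov's inequality, \whp\ $|\cS_k|\in[\Erw|\cS_k|\cdot e^{-\eps_d'n},\,\Erw|\cS_k|\cdot e^{g(d)n}]$ for any $g(d)>0$, where $\eps_d'=14\sqrt{\ln^5d/d^3}$; (ii) the contribution of this typical range to $\Erw|\cS_k|^2$ is at most $(\Erw|\cS_k|)^2\cdot e^{2g(d)n}$, while the atypical tail contributes at most $\binom{n}{k}^2\cdot\pr[|\cS_k|\ge\Erw|\cS_k|e^{g(d)n}]\le\binom{n}{k}^2e^{-g(d)n}$. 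Matching the sum of these against the second-moment lower bound $(\Erw|\cS_k|)^2 e^{c(d)n}$ should force---after plugging in $\Erw|\cS_k|/\binom{n}{k}\approx\exp(-n(\ln d)^2/(2d))$---the typical value of $|\cS_k|$ to be strictly below $\Erw|\cS_k|$ by an exponential factor, yielding the corollary with $g(d)>0$ that is a small positive function of $d$.

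The hard part will be this last balancing step: the transfer theorem naturally produces \emph{lower} bounds on $|\cS_k|$ (because $\cU_k$ is dominated by $\cP_k$), but the corollary is an \emph{upper} bound on the typical value, so the argument must exploit the exponential gap $\Erw|\cS_k|^2/(\Erw|\cS_k|)^2\ge e^{c(d)n}$ to rule out concentration of $|\cS_k|$ near its mean. Making the Markov-tail contribution to the second moment quantitative enough to close this contradiction---in particular, verifying that the parameters $c(d)$, $\eps_d'$ and $(\ln d)^2/d$ are compatible in the stated range of $k$---is where the bulk of the technical effort will lie.
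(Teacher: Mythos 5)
Your route through the second moment is genuinely different from the paper's, and it does not close; there are two gaps.

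First, the assertion $\Erw|\cS_k|^2\ge(\Erw|\cS_k|)^2\cdot e^{c(d)n}$ is not carried out, and the mechanism you propose for it (the dominant overlap shifting away from $k^2/n$ towards $k$) is not what actually drives the non-concentration. The paper's own second-moment computation (proof of Lemma~\ref{lemma:SMBound}) shows that the near-$k$ overlap terms $a_{i_2}$ are $o(n^{-1})$ and hence negligible when $k$ is near the independence number. The real source of the discrepancy between $\Erw|\cS_k|$ and the typical $|\cS_k|$ is a coupling of $|\cS_k|$ to rare ambient features of $G$ --- in the paper, an exponentially unlikely surplus of isolated vertices --- and this is not visible as a shift in the dominant overlap of the two-point function.

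Second, even granting the blow-up, your balancing step only derives $(\Erw X)^2 e^{c(d)n}\le(\Erw X)^2 e^{2g(d)n}+\binom{n}{k}^2 e^{-g(d)n}$ with $X=|\cS_k|$, which is a constraint on $c(d),g(d)$ rather than a contradiction. Concretely, a variable taking the value $\tfrac12\Erw X$ with probability $1-q$ and $\binom{n}{k}$ with probability $q\approx\Erw X/\binom{n}{k}$ exhibits exactly the second-moment blow-up you posit, is compatible with the lower bound of Proposition~\ref{Lemma_gap}, and yet is \emph{not} w.h.p.\ below $\Erw X\cdot e^{-g(d)n}$ for any $g(d)>0$. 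A second-moment blow-up by itself cannot localize the typical value; it only bounds the fluctuation scale.

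The paper's proof (the unnumbered corollary in the ``Large deviations results'' section) supplies the missing ingredient: a concrete separating event $\cE$ --- the graph has at least $Cn e^{-d}$ isolated vertices --- with $\pr{\neg\cE}\le e^{-3\xi n}$ under $\cP_k(n,m)$ (the planted independent set has reduced expected degree $d/(1+\alpha)$, so the isolated-vertex count in $\sigma$ is a sum of independent indicators with inflated mean, and Chernoff applies) while $\pr{\cE}=o(1)$ under $G(n,m)$ (Chebyshev). The size-bias identity $\pr_{\cP_k}(G)/\pr_{G(n,m)}(G)=|\cS_k(G)|/\Erw|\cS_k|$, which you correctly identify, then converts an assumption $\pr{|\cS_k|\ge e^{-\xi n}\Erw|\cS_k|}\ge\eps$ into $\pr_{\cP_k}{\brk{\neg\cE}}\ge(1-o(1))\eps\, e^{-\xi n}$, a contradiction. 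You have all the right structural pieces (size-biasing, the planted model, the exponential tilt); what is needed is to replace the abstract second-moment claim with such a concrete observable on which $\cP_k(n,m)$ and $G(n,m)$ disagree exponentially.
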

The proof of Corollary \ref{cor:transfer-theorem} appears in Section
\ref{sec:cor:transfer-theorem}.

Conversely, in order to prove \Thm~\ref{thrm:transfer-theorem} we
need to bound the `gap' between the typical value of $|\cS_k(G(n,m))|$
and its expectation from above. This estimate can be summarized as
follows.

\begin{proposition}\label{Lemma_gap}
There is $\eps_d\ra0$ such that for $k<(2-\eps_d)n\ln(d)/d$ we have
$$|\cS_k(G(n,m))|\geq\Erw|\cS_k(G(n,m))|\cdot\exp\bc{-14n\sqrt{\ln^5d/d^3}}$$
with probability at least $1-\exp\left[-n/(2d^2\ln^4d) \right ]$.
\end{proposition}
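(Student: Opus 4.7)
The statement as written is not reconcilable with Markov's inequality: for any $c>1$, Markov gives $\pr[|\cS_k(G(n,m))|\geq c\cdot\Erw|\cS_k(G(n,m))|]\leq 1/c$, so choosing $c=\exp(14n\sqrt{\ln^5d/d^3})$ forces this probability to vanish rather than tend to one. Moreover Corollary~\ref{cor:transfer-theorem}, which is derived from this proposition through \Thm~\ref{thrm:transfer-theorem}, asserts the opposite inequality $|\cS_k|\leq\Erw|\cS_k|\cdot\exp(-g(d)n)$ \whp\ in the same window of $k$. I therefore read the displayed formula as containing a sign error in the exponent and interpret the intended claim as
$$|\cS_k(G(n,m))|\geq\Erw|\cS_k(G(n,m))|\cdot\exp\bc{-14n\sqrt{\ln^5d/d^3}}\qquad\mbox\whp$$
namely a quantitative upper bound on the ``deficit'' of the typical count below its first moment, which is exactly what \Thm~\ref{thrm:transfer-theorem} consumes. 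I outline a plan for this corrected statement.

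The plan is to avoid a direct second-moment computation on $|\cS_k|$, which provably fails in the sparse regime, and instead leverage the sharp concentration of the independence number $\alpha(G(n,m))$. First I would invoke, or prove via an edge-exposure Azuma--Hoeffding martingale, that $\alpha(G(n,m))=(2-\eps_d)n\ln d/d$ within an $O(\sqrt n)$ window \whp; in particular $G(n,m)$ \whp\ contains an independent set $T$ of size $(2-\eps'_d)n\ln d/d$ for some $\eps'_d\ra 0$.

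Using $T$ as a scaffold one lower-bounds $|\cS_k|$ by counting its $k$-subsets, which are all automatically independent and contribute $\binom{|T|}{k}$ sets. To approach the first moment $\Erw|\cS_k|=\binom{n}{k}\brk{1-\binom{k}{2}/\binom{n}{2}}^m$ one has to aggregate over many such scaffolds: by Stirling a single $T$ falls short by a factor $\exp(\Theta(n\ln^2 d/d))$, and the gap is closed by a union/averaging argument over a family of near-maximum independent sets, exposing the random graph in two stages --- first plant $T$, then reveal the remaining edges. The target exponent $14n\sqrt{\ln^5d/d^3}$ emerges from optimising the slack between (a)~the concentration window of $\alpha$, (b)~the Stirling error on $\binom{|T|}{k}$, and (c)~the fraction of $k$-subsets of $V$ that sit inside some near-maximum $T$.

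The main obstacle will be step (c): converting the existential statement ``$G(n,m)$ contains \emph{some} near-maximum $T$'' into a quantitative lower bound on the entropy of $\cS_k(G(n,m))$. A single $T$ provides only an exponentially small fraction of $\Erw|\cS_k|$, so one must either aggregate over an exponentially large collection of near-disjoint near-maximum independent sets and control overlaps, or else prove directly --- via an entropy-method concentration inequality applied to $\log|\cS_k(G(n,m))|$ --- that this log-count cannot drop below its mean by more than $14n\sqrt{\ln^5d/d^3}$. Calibrating the constants to match exactly the target exponent $14\sqrt{\ln^5d/d^3}$, which sits at the delicate interface between the $\sqrt n$-concentration scale for $\alpha$ and the first-moment growth rate, is where the bulk of the sparse-regime work lives.
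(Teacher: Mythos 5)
You correctly spot the sign error in the displayed exponent: the intended statement, as confirmed by Proposition \ref{proposition:star-conctration} and by the use made of the result in the proof of Theorem \ref{thrm:transfer-theorem}, is $|\cS_k(G(n,m))|\geq\Erw|\cS_k(G(n,m))|\cdot\exp\bc{-14n\sqrt{\ln^5d/d^3}}$ \whp

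However, your plan does not contain the key idea of the paper's proof and, as outlined, has a gap you will not be able to close. Lower-bounding $|\cS_k(G(n,m))|$ by $\binom{|T|}{k}$ for a single near-maximum independent set $T$ leaves, as you note, a deficit of order $\exp(\Theta(n\ln^2 d/d))$, whereas the target deficit $\exp(14n\sqrt{\ln^5 d/d^3})$ is much smaller: $\ln^2 d/d\gg\ln^{5/2}d/d^{3/2}$, since their ratio is $\sqrt{d/\ln d}\to\infty$. To close that gap by aggregating over a family of near-disjoint near-maximum sets $T_1,\dots,T_L$, you would need $L=\exp(\Theta(n\ln^2 d/d))$, which is impossible for near-disjoint sets of linear size on $n$ vertices; and once you allow heavy overlap you are back to exactly the inclusion--exclusion control that makes a direct second-moment computation fail in the sparse regime. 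Your alternative suggestion of applying an entropy-method concentration inequality to $\log|\cS_k(G(n,m))|$ also does not reach the goal: such an inequality concentrates the log-count around $\Erw\log|\cS_k|$, not around $\log\Erw|\cS_k|$, and Jensen permits a gap of order $n$ between those two quantities --- indeed, Corollary \ref{cor:transfer-theorem} shows this gap genuinely is linear in $n$.

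The paper's actual route (Lemma \ref{lemma:concentration-1} combined with Theorem \ref{theorem:Reverse-Frieze}) reduces to the concentration of $\alpha$ in a very different way. Under the one-edge-at-a-time exposure of $G^*(n,m)$, the conditional expectation $\Erw[X_{n,M}(k)\mid X_{n,m}(k)]=X_{n,m}(k)(1-(k/n)^2)^{M-m}$ is a \emph{deterministic} multiple of the current count, so a multiplicative deficit at edge-density $m$ propagates unchanged to any denser $M>m$. Taking $M=M_{k,q}$, the largest number of edges at which $\Pr[X_{n,M}(k)>0]\geq 1-q$, a deficit factor larger than $2\Erw X_{n,M_{k,q}}(k)$ would force $X_{n,M_{k,q}}(k)$ below $1$ (hence to $0$) with probability $>\zeta/2$, contradicting the definition of $M_{k,q}$ once $\zeta>2q$. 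Thus the achievable deficit factor is precisely $2\Erw X_{n,M_{k,q}}(k)$, the expected number of size-$k$ independent sets at the \emph{critical} density. Theorem \ref{theorem:Reverse-Frieze} --- a sharpened version of Frieze's lower bound on $\alpha(G^*(n,m))$, proved via the second moment method on $\Pr[X>0]$ plus Talagrand's inequality --- then supplies both a very small $q$ and the bound $\Erw X_{n,M_{k,q}}(k)\leq\exp(14n\sqrt{\ln^5 d/d^3})$. In short, the reduction to the concentration of $\alpha$ is made by coupling across edge densities, not by constructing size-$k$ independent sets inside a fixed witness $T$, and that coupling is the step your plan is missing.
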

Before we prove \Prop~\ref{Lemma_gap} in \Sec~\ref{section:concentration},
let us indicate how it implies \Thm~\ref{thrm:transfer-theorem}.

\begin{corollary}\label{Cor_quantexchange}
There is $\eps_d\ra0$ such that for $k<(2-\eps_d)n\ln(d)/d$ the following is true.
Let
\begin{equation}\label{eqtransfer-theoremproof0}
\cZ=\cbc{(G,\sigma)\in\Lambda_k(n,m):|\cS_k(G)|\geq\Erw|\cS_k(G(n,m))|\cdot\exp\bc{-14n\sqrt{\ln^5d/d^3}}}.
	\end{equation}
Then $\pr_{\cU_k(n,m)}\brk{\cZ}=1-o(1)$, and for any event $\cB\subset\Lambda_k(n,m)$ we have
	$$\pr_{\cU_k(n,m)}\brk{\cB|\cZ}\leq(1+o(1))\exp\bc{-14n\sqrt{\ln^5d/d^3}}\cdot\pr_{\cP_k(n,m)}\brk{\cB}.$$
\end{corollary}
\begin{proof}
\Prop~\ref{Lemma_gap} directly implies that
	\begin{equation}\label{eqtransfer-theoremproof1}
	\pr_{\cU_k(n,m)}\brk\cZ=1-o(1).
	\end{equation}
Furthermore, by the definition~(\ref{eqProbUniform}) of the uniform distribution, 
	\begin{eqnarray}\nonumber
	\pr_{\cU_k(n,m)}\brk{\cB\cap\cZ}&=&
		\sum_{(G,\sigma)\in\cB\cap\cZ}\brk{\bink{\bink n2}m|\cS_k(G)|}^{-1}\\
	&\leq&\exp\brk{14n\sqrt{\ln^5d/d^3}}\sum_{(G,\sigma)\in\cB\cap\cZ}\brk{\bink{\bink n2}m\Erw|\cS_k(G(n,m))|}^{-1}
		\quad\mbox{[by~(\ref{eqtransfer-theoremproof0})]}\nonumber\\
	&=&\exp\brk{14n\sqrt{\ln^5d/d^3}}\pr_{\cP_k(n,m)}\brk{\cB\cap\cZ}
		\qquad\qquad\qquad\qquad\qquad\mbox{[by~(\ref{eqProbPlanted})]}\nonumber\\
	&\leq&\exp\brk{14n\sqrt{\ln^5d/d^3}}\pr_{\cP_k(n,m)}\brk{\cB}.
		\label{eqtransfer-theoremproof2}
	\end{eqnarray}
The assertion is immediate from~(\ref{eqtransfer-theoremproof1}) and~(\ref{eqtransfer-theoremproof2}).
\end{proof}

\smallskip
\begin{theoremproof}{\ref{thrm:transfer-theorem}}
The theorem follows directly from \Cor~\ref{Cor_quantexchange}.
\end{theoremproof}

\subsection{Proof of Proposition \ref{Lemma_gap}}\label{section:concentration}

Since the second moment method fails to yield a lower bound on
the typical number  of independent sets $|\cS_k(G(n,m))|$, we
need to invent a less direct approach to prove Proposition \ref{Lemma_gap}.
Of course, the demise of the second moment argument also presented
an obstacle to Frieze~\cite{frieze-is} in his proof that
		\begin{equation}\label{eqAlan}
		\alpha(G(n,m))\geq(2-\eps_d)n\ln(d)/d\qquad\mbox\whp
		\end{equation}
However, unlike the \emph{number} $|\cS_k(G(n,m))|$ of independent
sets $\alpha(G(n,m))$, the \emph{size}  of the largest one actually
is concentrated about its expectation. In fact, an arsenal of large
deviations inequalities applies (e.g., Azuma's and Talagrand's
inequality), and~\cite{frieze-is} uses these to bridge the gap
left by the second moment argument.Unfortunately, these large
deviations inequalities draw a blank on $|\cS_k(G(n,m))|$. Therefore,
we are going to derive the desired lower bound on $|\cS_k(G(n,m))|$
directly from~(\ref{eqAlan}).

To simplify our derivations we consider the model of random graphs
$G^*(n,m)$ and we show the following proposition.
\begin{proposition}\label{proposition:star-conctration}
There is $\eps_d\ra0$ such that for $k<(2-\eps_d)n\ln(d)/d$ we 
have

\begin{equation}\label{eq:prop:star-conct}
|\cS_k(G^*(n,m))|\geq\Erw|\cS_k(G^*(n,m))|\cdot\exp\bc{-14n 
\sqrt{{\ln^{5}d}/{d^3}}}
\end{equation}
with probability at least $1-\exp\left[-n/(d\ln^2d)^2\right]$.
\end{proposition}
Then, Proposition \ref{Lemma_gap} follows by Lemmas~\ref{lemma:model-equivalence}
and~\ref{corollary:expectation-equivalence}.

Given some integer $k>0$ and $q\in [0,1]$, let 	$Z_k(G^*(n,m))=|{\cal S}_k(G^*(n,m))|$
and let 		$$M^q_k=\max\{m\in \mathbb{N}:Pr[Z_k(G^*(n,m))>0]\geq 1-q\}.$$
In words, $M^q_k$ is the largest number of edges that we can squeeze
in while keeping the probability that $G^*(n,m)$  has an independent
set of size $k$ above $1-q$. The following lemma summarizes the key
step of our proof of \Prop~\ref{proposition:star-conctration}. The
idea is that \Lem~\ref{lemma:concentration-1} gives a tradeoff
between the \emph{likely} number of independent set of size $k$
in the random graph with $m<M^q_k$ edges and the \emph{expected}
number of such independent sets in the random graph with $M^q_k$
edges.

\begin{lemma}\label{lemma:concentration-1}
Suppose that $k,m>0,q\in\brk{0,1}$ are such that $m<M^q_k$. Then
\begin{displaymath}
Pr\left[Z_k(G^*(n,m))<\frac{ E[Z_k(G^*(n,m))}{2E[Z_k(G^*(n,M^q_k))]} \right]\leq 2 q.
\end{displaymath}
\end{lemma}
\begin{proof}
Let $M=M^q_k$. The random graph $G^*(n,M)$ is obtained by choosing
$M$ pairs of vertices independently	and inserting the corresponding
edges (while omitting loops and reducing multiple edges to single edges).
Let us think of the $M$ pairs as being generated in two rounds.
In the first round, we generate $m$ pairs, which induce the
random graph $G_1=G^*(n,m)$. In the second round, we choose a
further $M-m$ pairs independently and add the corresponding edges
to $G_1$ (again, omitting self-loops and reducing multiple edges
to single edges) to obtain $G_2=G^*(n,M)$.

By the linearity of the expectation and because the $m$ (resp.\ $M$)
pairs that the random graph $G_1$ (resp.\ $G_2$) consists of are
chosen independently, we have  (cf.~(\ref{eqexpgeneral}))
	\begin{eqnarray}\nonumber
	E\brk{Z_k(G_1)}&=&\bink nk(1-(k/n)^2)^m,\qquad\mbox{and}\\
	E\brk{Z_k(G_2)}&=&\bink nk(1-(k/n)^2)^M=E\brk{Z_k(G_1)}\cdot(1-(k/n)^2)^{M-m}.
		\label{eqtrajectory3}
	\end{eqnarray}
Furthermore, with respect to the 	number of independent sets of
size $k$ in $G_2$ \emph{given} their number in the outcome $G_1$
of the `first round', we have
	\begin{equation}\label{eqtrajectory1}
	E[Z_k(G_2)|Z_k(G_1)]=Z_k(G_1)(1-(k/n)^2)^{M-m}.
	\end{equation}
Indeed, for each independent set $Q$ of size $k$ in $G_1$ each of
the $M-m$ additional random pairs has its two vertices in $Q$ with
probability $(k/n)^2$. Hence, (\ref{eqtrajectory1}) follows because
these $M-m$ pairs are independent and by the linearity of the expectation.

\noindent
Now, let $\cE_1$ be the event that
	$$Z_k(G_1)<\frac{E[Z_k(G_1)]}{2E[Z_k(G_2)]}.$$
Then by and Markov's inequality and~(\ref{eqtrajectory1}),
	\begin{eqnarray*}
	\frac12\leq Pr\left[Z_k(G_2)<2\Erw\brk{Z_k(G_2)|\cE_1} \big| {\cal E}_1\right]
	\leq Pr\left[Z_k(G_2)<\frac{E[Z_k(G_1)]\cdot(1-(k/n)^2)^{M-m}}{E[Z_k(G_2)]} \bigg| {\cal E}_1\right],
	\end{eqnarray*}
whence
	\begin{eqnarray}\label{eqtrajectory2}
	Pr\left[Z_k(G_2)<\frac{E[Z_k(G_1)]\cdot(1-(k/n)^2)^{M-m}}{E[Z_k(G_2)]}\right]\geq Pr\brk{\cE_1}/2.
	\end{eqnarray}
Combining~(\ref{eqtrajectory2}) and~(\ref{eqtrajectory3}), we see that
	$Pr\brk{\cE_1}\leq2\,Pr\left[Z_k(G_2)<1\right]\leq 2q,$
as claimed.
\end{proof}

\noindent{\bf Proof of \Prop~\ref{proposition:star-conctration}.}
Consider $G^*(n,m)$ of expected degree $d$ and let $k=\frac{2}{d}
\left(\ln d-\ln\ln d+1-\ln 2 \right)$. We are going to 
show that (\ref{eq:prop:star-conct}) holds for $G^*(n,m)$ and
$k$ with probability at least $1-\exp\left[-n/(d\ln^2d)^2\right]$.

Consider, now, the graph $G(n,M)$ of expected degree 
$d^+=2\frac{-\ln s+1}{s}+\frac{8}{\sqrt{s}}$, where $s=k/n$.
According to \ref{theorem:Reverse-Frieze} it holds that 
$Pr[|S_k(G(n,M))|>0]\geq 1-12\exp\left(-n/(d^2\ln^5d)\right)$
and
$E|S_k(G(n,M))|\leq \exp\left(14\sqrt{\frac{\ln^5d}{d^3}}\right)$.

The proposition will follow by just showing that $m<M$,
i.e. $d^+>d$, and using Lemma \ref{lemma:concentration-1}.
Note, first, that
\begin{eqnarray}
-\ln s+1&=&\ln d-\ln\ln d+1-\ln 2-\ln \left(1-\frac{\ln\ln d-1+\ln 2}{\ln d}\right)\nonumber\\
&\geq & \ln d-\ln\ln d+1-\ln 2+ \frac{\ln\ln d-1+\ln 2}{\ln d}. \hspace{3cm} \mbox{[as $1-x\leq e^{-x}$]}\nonumber 
\end{eqnarray}
Using the above, it is elementary to derive that $2\frac{-\ln s+1}{s}\geq d$.
Then, it follows that $d^+>d$ as promised.
\qed

\subsection{Proof of Corollary \ref{cor:transfer-theorem}}\label{sec:cor:transfer-theorem}

In this section we keep the assumptions of \Cor~\ref{cor:transfer-theorem},
i.e., we let $k,d$ be such that $10 n/d< k<(2-\eps_d)n\ln(d)/d$, with
$\eps_d\ra0$ sufficiently slowly in the limit of large $d$.

\begin{lemma}\label{Lemma_isoPnm}
There is a number $\xi>0$ such that the following is true. Let
$(G,\sigma)$ be a pair chosen from the distribution $\cP_k(n,m)$.
Let $X$ be the number of isolated vertices in $G$. Then
\begin{equation}\label{eqiso1}
	\pr\brk{X\leq2n\exp(-d)}\leq\exp(-3\xi n).
\end{equation}
\end{lemma}
\begin{proof}
Let $\alpha=k/n$. It is convenient to first consider the following
variant of the planted distribution: given a set $\sigma\subset V$
of size $k$, let $G'$ be the random graph obtained by including each
of the $\bink n2-\bink k2$ possible edges that do not link two
vertices in $\sigma$ with probability
$$q=\frac{m}{\bink{n}2-\bink{k}2}\sim\frac{m}{\bink{n}2(1-\alpha^2)}\sim\frac{d}{n(1-\alpha^2)}$$
independently. Hence, the total number of edges in $G'$ is binomially
distributed with mean $m$. By Stirling's formula, the event $\cE$
that $G'$ has precisely $m$ edges has probability $\Theta(m^{-1/2})$,
and given that $\cE$ occurs, the pair $(G',\sigma)$ has the same
distribution as the pair $(G,\sigma)$ chosen from the distribution
$\cP_k(n,m)$. Therefore, for any event $\cA$ we have 
\begin{equation}\label{eqG'sigma}
	\pr\brk{(G,\sigma)\in\cA}=\pr\brk{(G',\sigma)\in\cA|\cE}
		\leq O(\sqrt m)\cdot\pr\brk{(G',\sigma)\in\cA}.
\end{equation}

\noindent
Now, consider the number $X'$ of vertices in $\sigma$ that are
isolated in $G'$. Since each possible edge is present in $G'$
with probability $q$ independently, the degree of each vertex
$v\in\sigma$ has a binomial distribution $\Bin(n-k,q)$ with mean
	$$q(1-\alpha)n=d\cdot\frac{1-\alpha}{1-\alpha^2}=\frac d{1+\alpha}.$$
In particular, for each $v\in\sigma$ we have
$$\pr\brk{v\mbox{ is isolated in }G'}\sim\exp(-(1+\alpha)d).$$
Furthermore, because $\sigma$ is an independent set, the degrees
of the vertices in $\sigma$ are mutually independent. Hence, $X'$
has a binomial distribution $\Bin(k,(1+o(1))\exp(-(1+\alpha)d))$
with mean
\begin{eqnarray*}
	\Erw\brk{X'}&\sim&\alpha n\exp(-d/(1+\alpha))\geq
			n\exp\brk{-d\bc{1-\alpha+O_d(\alpha^2)}}\\
			&\geq&n\exp\brk{-d-10+o_d(1)}\geq100 n\exp(-d)\qquad\qquad\mbox{[as we assume $\alpha\geq10/d$]},
\end{eqnarray*}
provided that $d$ is sufficiently large. Since $X'$ is binomially 
distributed, Chernoff bounds yield a number $\xi=\xi(d)>0$ such
that 
\begin{equation}\label{eqiso2}
	\pr\brk{X'\leq2n\exp(-d)}\leq\exp(-4\xi n).
\end{equation}
Finally, combining~(\ref{eqiso2}) and~(\ref{eqG'sigma}), we obtain
$$\pr_{\cP_k(n,m)}\brk{X\leq2n\exp(-d)}\leq O(\sqrt m)\pr\brk{X'\leq2n\exp(-d)}\leq\exp(-3\xi n),$$
as claimed.
\end{proof}

\noindent{\bf Proof of Corollary \ref{cor:transfer-theorem}.}
Let $\cB\subset\Lambda_k(n,m)$ be the set of all pairs $(G,\sigma)$
such that $G$ has fewer than $2n\exp(-d)$ isolated vertices. \Lem s~\ref{Lemma_isoGnm} and~\ref{Lemma_isoPnm} entail that 
\begin{eqnarray}\label{eqcor:transfer-theorem1}
	\pr_{\cU_k(n,m)}\brk\cB=1-o(1)&\mbox{ while }&\pr_{\cP_k(n,m)}\brk\cB\leq\exp(-\xi n).
\end{eqnarray}
Since $\cP_k(n,m)$ is the uniform distribution over $\Lambda_k(n,m)$,
(\ref{eqcor:transfer-theorem1}) implies that 
\begin{eqnarray}\label{eqcor:transfer-theorem2}
	\abs\cB&\leq&\abs{\Lambda_k(n,m)}\cdot\exp(-\xi n)=\bink{\bink{n}2}m\Erw|\cS_k(G(n,m))|\cdot\exp(-\xi n).
\end{eqnarray}
Now, let $\cA\subset\Lambda_k(n,m)$ be the set of all pairs
$(G,\sigma)$ such that 	$|\cS_k(G)|\geq\exp(-\xi n/3)\Erw|\cS_k(G(n,m))|$,
and assume for contradiction that there is a fixed $\eps>0$ such
that $\pr_{\cU_k(n,m)}\brk{\cA}\geq\eps$. Then~(\ref{eqcor:transfer-theorem1})
implies that
\begin{eqnarray*}\label{eqcor:transfer-theorem3}
	\pr_{\cU_k(n,m)}\brk{\cA\cap\cB}&\geq&\eps-o(1) 
\end{eqnarray*}
Therefore,
\begin{eqnarray*}
	\abs\cB&\geq&\abs{\cA\cap\cB}\geq \bink{\bink n2}{m}\pr_{\cU_k(n,m)}\brk{\cA\cap\cB}\cdot\exp(-\xi n/3)\Erw|\cS_k(G(n,m))|\\
		&\geq&(\eps-o(1))\bink{\bink n2}{m}\exp(-\xi n/3)\Erw|\cS_k(G(n,m))|\geq(\eps-o(1))\exp(-\xi n/3)\cdot\abs{\Lambda_k(n,m)},
\end{eqnarray*}
which contradicts~(\ref{eqcor:transfer-theorem2}). Hence, $\pr_{\cU_k(n,m)}\brk{\cA}=o(1)$,
as claimed.
\qed

\section{Proof of Theorem \ref{thrm:Connectivity}}\label{section:thrm:Connectivity}

Instead of the random graph model $G(n,m)$  we consider the model $G(n,p)$, 
where $p=d/n$ for fixed real $d$ and we prove the following theorem.

\begin{theorem}\label{thrm:ConnectivityGnp}
There is $\eps_d\ra0$ such that $\cS_k(G(n,d/n))$ is $O(1)$-connected 
for any $k\leq(1-\eps_d)\frac{\ln d}d\cdot n$, with probability at least 
$1-\exp\left(-\frac{\ln^{40}d}{d}n\right)$.
\end{theorem}

\noindent
Theorem \ref{thrm:Connectivity} follows 
by using standard arguments, i.e. the following corollary.
\begin{corollary}
For any fixed $d>0$, $m=dn/2$ and any graph property $A$ it holds that
$Pr[G(n,m)\in A]\leq \Theta(\sqrt{n})Pr[G(n,d/n)\in A]$.
\end{corollary}  
\begin{proof}
Let $E_d$ be the number of edges in $G(n,d/n)$. It holds that
\begin{eqnarray}
Pr[G(n,m)\in A]&=&Pr[G(n,d/n)\in A|E_d=dn/2]\leq \frac{Pr[G(n,d/n)\in A]}{Pr[E_d=dn/2]}.
\nonumber
\end{eqnarray}
$E_d$ is binomially distributed with parameters ${n \choose 2}$ and $d/n$.
Straightforward calculations yield to that $Pr[E_d=dn/2]=\Theta(1/\sqrt{n})$.
The corollary follows.
\end{proof}

\begin{figure}
\begin{minipage}{0.5\textwidth}
	\centering
		\includegraphics[width=0.4\textwidth]{./chain}
	\caption{ The short chains}
	\label{fig:Chains}
\end{minipage}
\begin{minipage}{0.5\textwidth}
	\centering
		\includegraphics[width=0.75\textwidth]{./G-Property}
	\caption{ $\sigma, \tau$ with Property $\Gamma$ }
	\label{fig:G-Property}
\end{minipage}
\end{figure}

\noindent
{\bf Remark.} 
We show Theorem \ref{thrm:ConnectivityGnp} by just considering 
the adjacent independent sets with Hamming distance at most $20d$.
\\ \vspace{-.3cm}

\noindent
For every vertex $u$ in $G(n,d/n)$ we let $N(u)$  (or $N_v$)
denote the  set vertices which are  adjacent to $u$.
A sufficient condition for establishing the connectivity of
${\cal S}_k(G(n,d/n))$ is requiring this space to have what
we call  Property $\Gamma$:\\ \vspace{-.3cm}

\noindent
{\bf Property $\boldsymbol{\Gamma}$.}
For any two  $\sigma, \tau \in {\cal S}_k(G(n,d/n))$ there 
exist chains $\sigma, \sigma', \sigma''$ and $\tau,\tau',\tau''$ of 
independent sets in ${\cal S}_k(G(n,d/n))\bigcup {\cal S}_{k+1}
(G(n,d/n))$ connected as in Figure \ref{fig:Chains}.
Furthermore, we have that $\sigma'', \tau''\in{\cal S}_k(G(n,d/n))$
and $dist(\sigma'', \tau'')<dist(\sigma, \tau)$. In particular it holds
that $|\sigma''\cap \tau''| = |\sigma\cap \tau|+1$.\\ \vspace{-.3cm}

\noindent
The following result is straightforward.

\begin{corollary}\label{cor:GPropertyToConnect}
If ${\cal S}_k(G(n,d/n))$ has Property $\Gamma$, then it is connected.
\end{corollary}

\noindent
Using Corollary \ref{cor:GPropertyToConnect}, Theorem \ref{thrm:ConnectivityGnp} 
will follow by showing that with probability $1-o(1)$ the set ${\cal S}_k(G(n,d/n))$ 
has Property $\Gamma$, for $k<(1-\epsilon_d)\ln d/d$ .
For this, we need to introduce the notion of ``augmenting vertex''.

\begin{definition}[Augmenting vertex]
For the pair $\sigma, \tau \in {\cal S}_k(G(n,d/n))$ the vertex
$v \in V\backslash (\sigma \cup \tau)$ is {\em augmenting} if
one of the following $A$, $B$ holds.
\begin{description}
	\item[A.] $N_v \cap (\sigma \cup \tau)=\emptyset$
	\item[B.] $N_v \cap (\sigma \cap \tau)=\emptyset$ and there are
	 {\em terminal sets} $I_v(\sigma)$ and $I_v(\tau)$ of size at most $7d$ such that
	\begin{itemize}
		\item $I_v(\sigma)\cup\{v\}$ is an independent set of $G(n,d/n)$
		\item $|I_v(\sigma)|=|N_v\cap \sigma|$  
		\item $\forall w \in I_v(\sigma)$ it holds that $|N_w\cap \sigma|=1$ and $|N_w\cap N_u \cap \sigma|=1$
	\end{itemize}
	The corresponding conditions should hold for $I_{v}(\tau)$, as well.
\end{description}
\end{definition} 

\noindent
Figure \ref{fig:G-Property} shows an example of a pair of independent
sets where the vertex $v$ is an {\em augmenting vertex}.

We will show that for a pair $\sigma,\tau \in {\cal S}_k(G(n,d/n))$
that has an {\em augmenting vertex $v$} we can find short chains
$\sigma, \sigma', \sigma ''$  and  $\tau, \tau', \tau''$. That is,
if we can find an augmenting vertex for any two members of ${\cal S}_k(G(n,d/n))$,
then ${\cal S}_k(G(n,d/n))$ has Property $\Gamma$.

First, let us show how  we can create  short chains as in
Figure \ref{fig:Chains} for two independent sets $\sigma,\tau$
with augmenting vertex $v$. For this, we introduce a process
called {\em Collider}.  This process takes as an input $\sigma$, 
$\tau$  and the augmenting vertex  $v$ and returns the independent
sets $\sigma''$ and $\tau''$ of the chains. \\ \vspace{-.3cm}

\noindent
{\bf Collider $(\sigma, \tau, v)$:} 
\\ \vspace{-.3cm}\\
{\em  Phase 1.} \hspace*{0.5cm} /*Creation of $\sigma'$ and $\tau'$.*/ \vspace{-.1cm} 
\begin{enumerate}
\item Derive $\sigma'$ from $\sigma$ by removing the all its vertices in $N_{v}\cap \sigma$
and by inserting $\{v\} \cup I_{v}(\sigma)$.
	\item Do the same for $\tau'$. 
\end{enumerate}
{\em  Phase 2.} \hspace*{0.5cm} /* Creation of $\sigma''$ and $\tau''$*/. \vspace{-.1cm}
\begin{enumerate}
	\item $\sigma''$ is derived from $\sigma'$ by deleting one  (any) vertex
	from $\sigma'\backslash \tau'$.
	\item $\tau''$ is derived from $\tau'$ by deleting one  (any) vertex 
	from $\tau'\backslash \sigma'$.
\end{enumerate}
 {\em Return}  $\sigma''$ and $\tau''$.\\ \vspace{-.3cm}\\
{\bf End}\\ \vspace{-.3cm}

\noindent
Figure \ref{fig:Phase1} shows the changes that have taken place 
to the independent sets in Figure \ref{fig:G-Property} at the 
end of ``Phase 1''. 
Note that after Phase 1 both $\sigma', \tau'$ contain the 
augmenting vertex $v$, i.e. the overlap has increased as 
$\sigma'\cap \tau'=(\sigma\cap\tau)\cup\{v\}$.
\begin{figure}
\begin{minipage}{0.5\textwidth}
	\centering
		\includegraphics[width=0.6\textwidth]{./Phase1}
	\caption{The independent sets $\sigma' $, $\tau'$. }
	\label{fig:Phase1}
\end{minipage}
\begin{minipage}{0.5\textwidth}
	\centering
		\includegraphics[width=0.6\textwidth]{./finalstep}
	\caption{Final sets }
	\label{fig:finalstep}
\end{minipage}
\end{figure}
After ``Phase 2'', the independent sets in  Figure \ref{fig:Phase1}
are transformed to those in Figure \ref{fig:finalstep}. There the
vertices $u_2$ and $u_7$ are removed from $\sigma'$ and $\tau'$,
correspondingly.

In the following lemma we show that {\em Collider} has all the
desired properties we promise above.

\begin{lemma}\label{lemma:TrnsfrmProperties}
Let $\sigma, \tau \in {\cal S}_k(G)$ with augmenting vertex $v$.
Let $\sigma''$ and $\tau''$ be the two sets of vertices that are 
returned from {\em Collider($\sigma, \tau, v$)} .
The two sets have the following properties:
\begin{enumerate}
	\item $\sigma'', \tau'' \in {\cal S}_k(G)$,
	\item $|\sigma''\cap \tau''|=|\sigma \cap \tau|+1$,
	\item There are  $\sigma', \tau'\in {\cal S}_{k+1}(G)$
	such that $\sigma'$ (resp. $\tau'$) is adjacent to both 
	$\sigma$ and $\sigma''$ (resp. $\tau$ and $\tau'$).
\end{enumerate}
\end{lemma}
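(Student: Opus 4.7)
The proof divides according to whether the augmenting vertex $v$ satisfies condition A or condition B of the definition. In both cases, I plan to analyze Phase 1 first, establishing that the sets $\sigma'$ and $\tau'$ produced as the intermediate values of $V_{\sigma}$ and $V_{\tau}$ are independent sets of $G$ of size $k+1$, at Hamming distance at most $2\cdot 7d+1<20d$ from $\sigma$ and $\tau$ respectively, and with $\sigma'\cap\tau'=(\sigma\cap\tau)\cup\{v\}$. Phase 2 then deletes, from each of $V_{\sigma}$ and $V_{\tau}$, a vertex outside the protected set $\{v\}\cup(\sigma\cap\tau)$; such a vertex exists because $|\sigma'|=k+1>|\sigma\cap\tau|+1$ whenever $\sigma\neq\tau$. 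Items 1, 2 and 3 of the lemma then follow at once.

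Case A is essentially trivial: we have $\sigma'=\sigma\cup\{v\}$ and $\tau'=\tau\cup\{v\}$, both independent since $v$ has no neighbors in $\sigma\cup\tau$, and the intersection equality is pure set algebra. Case B is where the work lies. Here $\sigma'=(\sigma\setminus N_v)\cup\{v\}\cup I_v(\sigma)$, and the size count $|\sigma'|=k+1$ uses $|I_v(\sigma)|=|N_v\cap\sigma|$. Independence of $\sigma'$ follows by combining four facts: $\sigma\setminus N_v$ is independent as a subset of $\sigma$; $I_v(\sigma)$ is independent by hypothesis; $v$ has no neighbors in $\sigma\setminus N_v$ by construction, and none in $I_v(\sigma)$ since $N_v\cap I_v(\sigma)=\emptyset$; and every $w\in I_v(\sigma)$ has its unique neighbor in $\sigma$ inside $N_v\cap\sigma$, by the conditions $|N_w\cap\sigma|=1$ and $|N_w\cap N_v\cap\sigma|=1$, so this neighbor is removed in Phase 1. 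The inclusion $(\sigma\cap\tau)\cup\{v\}\subseteq\sigma'\cap\tau'$ then follows from $N_v\cap\sigma\cap\tau=\emptyset$, which prevents any vertex of $\sigma\cap\tau$ from being deleted.

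The main obstacle is the reverse inclusion $\sigma'\cap\tau'\subseteq(\sigma\cap\tau)\cup\{v\}$ in Case B. A priori a vertex $w\in I_v(\sigma)$ could lie in $\tau\setminus N_v$ or in $I_v(\tau)$, placing it in $\tau'$ and inflating the intersection, and symmetrically for $I_v(\tau)$. To close this gap I would argue that the terminal sets can be chosen disjoint from the opposite independent set and from each other, a property that should hold typically in $G(n,d/n)$ at $k\leq(1-\eps_d)n\ln d/d$ since $|I_v(\sigma)|,|I_v(\tau)|\leq 7d$ while $\sigma$ and $\tau$ together cover only an $O(\ln d/d)$ fraction of the vertex set. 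Either this disjointness should be incorporated into the definition of augmenting vertex, or it should be extractable from the existence argument that will produce $v$ in the first place. Once this is in hand, Phase 2 removes a vertex of $\sigma\setminus\tau$ from $V_{\sigma}$ and one of $\tau\setminus\sigma$ from $V_{\tau}$, producing $\sigma''$ and $\tau''$ that realize the three required properties.
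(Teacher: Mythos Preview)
Your approach is the same as the paper's: handle cases A and B separately, verify independence of the Phase~1 sets $\sigma',\tau'$ by checking all possible edge types, and then read off the three properties from the Phase~1/Phase~2 structure. Your independence argument in Case~B matches the paper's essentially line for line. You are in fact more careful than the paper on Property~2: the paper disposes of it in a single sentence, simply asserting that $\sigma''\cap\tau''=(\sigma\cap\tau)\cup\{v\}$, whereas you correctly observe that the reverse inclusion $\sigma'\cap\tau'\subseteq(\sigma\cap\tau)\cup\{v\}$ does not follow from the stated definition of augmenting vertex---nothing there rules out $I_v(\sigma)\cap\tau\neq\emptyset$ or $I_v(\sigma)\cap I_v(\tau)\neq\emptyset$.

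Your diagnosis and your proposed remedy are both right, and the remedy is exactly what the paper does implicitly. In the proof of Proposition~\ref{lemma:PropertyGVertex} the terminal sets are drawn from $Q_1(\sigma)$ and $Q_1(\tau)$, which by definition lie in $V\setminus(\sigma\cup\tau)$, and the paper records in a footnote that $Q_1(\sigma)\cap Q_1(\tau)=\emptyset$ always. So the disjointness you need is supplied by the construction that produces the augmenting vertex; the lemma as literally stated has the gap you found, but the augmenting vertices actually used in the connectivity argument satisfy the extra disjointness, and with that in hand your proof goes through.
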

\begin{proof}
First we show that $\sigma''$ and $\tau''$, as returned by {\em Collider
$(\sigma, \tau, v)$}, are independent sets. The same arguments
apply to both $\sigma''$ and $\tau''$. For this reason we only consider 
the case of $\sigma''$, the other case would then be obvious.

Let $v$ be an augmenting vertex for the pair $\sigma$, $\tau$.
Assume that $\sigma''$, at the end of the process, is not 
an independent set, i.e. there is an edge between
two vertices in $\sigma''$. Clearly, this edge must be either 
between two new vertices, i.e. $\{v\}\cup I_v(\sigma)$, or
between some  newly inserted vertex  and an old one.

The first case cannot be true since the assumption that $v$
is an augmenting vertex implies $\{v\}\cup I_{\sigma}(v)$
is an independent set. As far as the second case is considered note that both $v$
and $I_v(\sigma)$ have the same neighbours in $\sigma$. During the
process {\em Collider}$(\sigma, \tau, v)$ all the vertices in 
$\sigma$ that are adjacent to $v$ and $I_v(\sigma)$ are removed
(Phase 1, step 1).  The second case cannot occur either. 
Thus $\sigma''$ and $\tau''$ are independent sets.

For showing Property 1 it suffices to show that $|\sigma''|=|\tau''|=k$.
This is straightforward by just counting how many vertices are inserted
into $\sigma$ (resp. $\tau$) and how many are removed. Property 2
follows by noting that $\sigma''\cap \tau''=(\sigma\cap \tau)\cup\{v\}$.
Property 3 follows directly by noting that $|I_v(\sigma)|$ and
$|I_v(\tau)|$ are at most $7d$.
\end{proof}

\noindent
Since for every pair $\sigma,\tau\in {\cal S}_k(G(n,d/n))$ with
augmenting vertex we can construct short chains as in Figure
\ref{fig:Chains} by using Collider, we have the following corollary:

\begin{corollary}\label{cor:Augment2GProperty}
If for any two $\sigma,\tau \in {\cal S}_k(G)$ there is an
{\em augmenting} vertex $v$, then ${\cal S}_k(G)$ has Property
$\Gamma$.
\end{corollary}

We are going to use the first moment method to show that with
probability $1-o(1)$, the graph $G(n,d/n)$ has no pair of
independent sets in ${\cal S}_{k}(G(n,d/n))$ with no augmenting
vertex. According to Corollary \ref{cor:Augment2GProperty}, this
implies that with probability $1-o(1)$ the set ${\cal S}_k
(G(n,d/n))$ has Property $\Gamma$. Then, Theorem
\ref{thrm:ConnectivityGnp} follows from Corollary \ref{cor:GPropertyToConnect}.

We compute, first,  the probability for a pair in ${\cal S}_k(G(n,d/n))$
to have an augmenting vertex.

\begin{proposition}\label{lemma:PropertyGVertex}
For some integers $i,k$, consider $\sigma,\tau$, two sets of
vertices each of size $k$ such that $|\sigma \cap \tau|=i$. Let
$G_{\sigma,\tau}$ denote $G(n,d/n)$ conditional that each of
$\sigma,\tau$ is an independent set. Also, let  $p_{k,i}$ be
the probability that the pair $\sigma, \tau$ has an {\em augmenting
vertex} in $G_{\sigma,\tau}$. Then, there exists $\epsilon_d\to 0$
such that for any $\epsilon_d\leq\epsilon\leq 1-\epsilon_d$ and
$k=(1-\epsilon)\frac{\ln d}{d}n$ the following is true
\begin{displaymath}
p_{k,i}\geq 1-\exp\left(-\frac{\ln^{90}d}{d}n \right).
\end{displaymath}
\end{proposition}
The proof of Proposition \ref{lemma:PropertyGVertex} appears in 
Section \ref{sec:PropoGVertex}.\\

\begin{theoremproof}{\ref{thrm:ConnectivityGnp}}
Let $Z_k$ be the number of pairs of independent sets of size $k$
in $G(n,d/n)$  that do not have an augmenting vertex. From Corollary
\ref{cor:Augment2GProperty} and Corollary \ref{cor:GPropertyToConnect},
it suffice to show that $Pr\left[\sum_{k\leq K}Z_{k}>0\right]=o(1)$,
where $K=(1-\epsilon_d)n\ln d/d$ and $\epsilon_d\to 0$ with $d$.
For this, we are going to use Markov's inequality, i.e.
$Pr\left[\sum_{k\leq K}Z_{k}>0\right]\leq E\left[\sum_{k\leq K} Z_{k}\right]$
and we are going to show that $E\left[\sum_{k\leq K} Z_{k}\right]=o(1)$.

First consider the case where $\frac{1}{10}\frac{\ln d}{d}n\leq k\leq (1-\epsilon_d)\frac{\ln d}{d}n$
and $\epsilon_d$ is as defined in the statement of Proposition \ref{lemma:PropertyGVertex}.
 Using Proposition \ref{lemma:PropertyGVertex} we get that 
\begin{equation}\label{eq:ExpctPairsNonAugm}
E[Z_{k}]\leq  \displaystyle {n \choose k}^2\exp \left(-\frac{\ln^{90}d}{d}n\right).
\end{equation}
It follows easily that
${n \choose k}^2\leq {n \choose \frac{\ln d}{d}n}^2\leq  \left(\frac{de}{\log d}\right)^{2\frac{\ln d}{d}n}=\exp\left( 3n\ln^2 d/d\right).$
Thus, from (\ref{eq:ExpctPairsNonAugm}) we get that there is 
$\epsilon_d\to 0$ with $d$ such that 
$$
E[Z_k]\leq \exp\left(-0.5\frac{\ln^{90}d}{d}n \right),
$$
for any $k=(1-\epsilon)\frac{\ln d}{d}n$, where $\epsilon_d<\epsilon<1-\epsilon_d$.

Consider now the case where $k<n\ln d/(10d)$. For a pair of independent
sets any vertex that is not adjacent to the vertices of the pair is an
augmenting vertex. Let $\sigma$, $\tau$ be a pair of independent sets
each of size $k\leq(1-\epsilon) n \ln d/d$, for $\epsilon\geq 0.9$. 
Let $R_{\sigma,\tau}$ be the vertices not in $\sigma\cup \tau$ but 
not adjacent to any vertex in $\sigma \cup \tau$, as well.
Every $w\notin \sigma\cup\tau$  belongs to $R_{\sigma,\tau}$ independently of the
other vertices with probability at least  $(1-p)^{2k}=\left(d^{\epsilon}/d \right)^2$.
Thus, $E|R_{\sigma,\tau}|\geq (n-2k)(d^{\epsilon}/d)^2$. Using Chernoff bounds 
we get
\begin{displaymath}
Pr[|R_{\sigma,\tau}|=0]\leq \exp\left( -\frac{d^{2\epsilon}}{10d^2}n \right)\leq
\exp\left( -\frac{d^{0.8}}{10d}n \right) \qquad \mbox{[since $\epsilon>0.9$]}.
\end{displaymath}
Since $R_{\sigma,\tau}$ consists of {\em augmenting vertices} for
the pair $\sigma,\tau$, the probability for $\sigma,\tau $ not to 
have any  augmenting vertex is upper bounded by $Pr[|R_{\sigma,\tau}|=0]$.
For  $k<n\ln d/(10d)$ it holds that
\begin{displaymath}
E[Z_k]\leq {n \choose k}^2\exp\left( -\frac{d^{0.8}}{10d}n \right)\leq
\exp\left(3\frac{\ln^2 d}{d}n \right)\cdot \exp\left( -\frac{d^{0.8}}{10d}n \right)
\leq \exp\left( -\frac{d^{0.8}}{15d}n \right).
\end{displaymath}
The theorem follows.
\end{theoremproof}


\subsection{Proof of Proposition \ref{lemma:PropertyGVertex}}\label{sec:PropoGVertex}

Consider  an arbitrary pair $\sigma, \tau \in {\cal S}_k(G(n,d/n))$
where $k=(1-\epsilon)n\ln d/d$ and $100\frac{\ln\ln d}{\ln d}\leq \epsilon
\leq 1-100\frac{\ln\ln d}{\ln d}$.
For the rest of the proof assume that $|\sigma \cap \tau|=ak$ where
$a\in [0,1]$. Also, let $\epsilon'$ be such that $1-\epsilon'
=(1-a)(1-\epsilon)$. Clearly, it holds that $\epsilon' \in
\left[100\frac{\ln\ln d}{\ln d},1\right]$. 
For proving the proposition, we consider two cases.
In the first one we take $100\frac{\ln\ln d}{\ln d}\leq \epsilon'\leq 1-100
\frac{\ln\ln d}{\ln d}$. In the second  we take $ 1-100\frac{\ln\ln d}{\ln d}<\epsilon'\leq 1$.

Take $100\frac{\ln\ln d}{\ln d}\leq \epsilon'\leq 1-100
\frac{\ln\ln d}{\ln d}$. We will show that with sufficiently large
probability  there exists a non-empty set $Q_0$ of augmenting
vertices for the pair $\sigma$, $\tau$. The set $Q_0$ contains a
specific kind of augmenting vertices. That is, the cardinality of
$Q_0$ will be a lower bound on the actual number of augmenting
vertices. So as to specify $Q_0$, we need  the following definitions:

\begin{description}
\item[$\mathbf{Q_1(\sigma)}$:]  $Q_1(\sigma)\subseteq V \backslash (\sigma
\cup \tau)$ contains those vertices that have exactly one neighbour
in $\sigma\backslash \tau$.

\item [$\mathbf{Q_2(\sigma)}$:]  $Q_2(\sigma)\subseteq \sigma\backslash \tau$
is the  set of vertices that have at least one neighbour in $Q_1(\sigma)$.

\item [$\mathbf{Q_3(\sigma)}$:]  Every $w \in Q_3(\sigma)\subseteq V \backslash (\sigma 
\cup \tau \cup Q_1(\sigma))$ has the following properties: 
\begin{description}
	\item[S$_1$-] $N_w\cap (\sigma\backslash \tau)\subseteq Q_2(\sigma)$ and 
	$|N_w \cap (\sigma\backslash \tau) |\leq 7d$. 
	\item[S$_2$-] There exists $R\subseteq Q_1(\sigma)$ that contains exactly
	one neighbour of each $v \in N_w\cap (\sigma\backslash \tau)$ in $Q_1(\sigma)$
	and no other vertex. Furthermore, $R\cup \{w\}$ is an independent set.
\end{description}
\end{description}
In an analogous manner we define $Q_1(\tau),Q_2(\tau)$ and $Q_3(\tau)$.

For each augmenting vertex  $u\in Q_0$ the following should hold:
{\bf (A)} $u\in Q_3(\sigma)\cap Q_3(\tau)$, 
{\bf (B)} $N_u\cap (\sigma\backslash \tau)\subseteq Q_2(\sigma)$ and
$N_u\cap (\tau\backslash \sigma)\subseteq Q_2(\tau)$,
{\bf (C)} $I_v(\sigma)\subseteq Q_1(\sigma)\backslash Q_1(\tau)$  and  $I_v(\tau)\subseteq 
Q_1(\tau)\backslash Q_1(\sigma)$.
\\ \vspace{-.1cm}

\noindent
{\bf Remark.}
Observe that each $u\in Q_3(\sigma)\cap Q_3(\tau)$ is not necessarily
augmenting. However, if, additionally, $u$ it does not have any neighbours 
in  $\sigma \cap \tau$, then it is augmenting.
\\ \vspace{-.2cm}

\noindent
Consider a process where we reveal all the sets $Q_i(\sigma),Q_i(\tau)$,
for $i=1,2,3$ in steps. In each step we reveal a certain amount of
information regarding these six sets. Since $Q_i(\sigma)$ is symmetric 
to $Q_i(\tau)$ for every $i=1,2,3$ we just presents results related to 
$Q_i(\sigma)$ while those for $Q_i(\tau)$ follow immediately. The results 
appear as a series of claims whose proofs appear after the proof of this 
proposition.

In  {\tt Step 1}, we reveal the sets $Q_1(\sigma)$, $Q_1(\tau)$.  
There we have  the following result.

\begin{claim}\label{claim:Q1Claim}
Let $X_1=|Q_1(\sigma)\backslash Q_1(\tau)|$.  It holds that 
$E[X_1]=\frac{(1-\epsilon')\ln d}{d^{1-\epsilon'}}n(1-\epsilon_d)-O(1)$,
where $\epsilon_d\to 0$ as $d$ grows. Furthermore, it holds that 
\begin{displaymath}
Pr \left[\left|X_1-E[X_1]\right|\geq 0.5 E[X_1] \right]\leq 2\exp
\left(-n{d^{\epsilon'}}/{d} \right).
\end{displaymath}
\end{claim}

\noindent
{\bf Remark.}
After {\tt Step 1}, for each $v\in V\backslash  \{Q_1(\sigma)\cup 
Q_1(\tau)\cup\sigma\cup \tau\}$ we have the information that both
the number of edges that connect $v$ with $\sigma\backslash \tau$ 
and the number edges that connect $v$ with $\tau\backslash \sigma$
are different than 1. 
\\ \vspace{-.3cm}

\noindent
Then, we proceed with {\tt Step 2} where we reveal $Q_2(\sigma)$ and 
$Q_2(\tau)$. Also reveal the edges between $Q_2(\sigma)$ and $Q_1(\sigma)$
as well as the edges between $Q_2(\tau)$ and $Q_1(\tau)$. There we have 
the following result.

\begin{claim}\label{claim:Q2Claim}
Let $X_2=|Q_2(\sigma)|$. For $\gamma=1-\ln^{-5}d$, it holds that
\begin{displaymath}
Pr[X_2\leq \gamma\cdot |\sigma\backslash \tau | |{\cal F}_1]\leq \exp\left(-n{d^{\epsilon'}}/(4d\ln^5 d) \right),
\end{displaymath}
where ${\cal F}_1=\{|X_1-E[X_1]|< 0.5 E[X_1]\}$.
\end{claim}
%
%

\noindent
Revealing the sets $Q_3(\sigma)$ and $Q_3(\tau)$ is, technically,
a more complex task. Let us make some observations regarding these
sets. Assume that some vertex $u \in V\backslash (\sigma\cup \tau \cup Q_{1}(\sigma))$
satisfies condition\footnote{In the definition of set $Q_3(\sigma)$.}
$\mathbf{S_1}$. So as $u$ to belong to $Q_3(\sigma)$ there should exist a set $R\subseteq Q_1(\sigma)$
as specified in the condition $\mathbf{S_2}$. However, the possibility 
of edges between vertices in $Q_1(\sigma)$ leaves open whether we 
can have such a set for $u$. To this end consider the following.

\begin{definition}
For every $i=1\ldots 7d$, let ${\cal A}_i$ be the family of subsets
$B\subseteq Q_2(\sigma)$ of cardinality  $i$ which have the following 
property:  There exists independent set $R\subseteq Q_1(\sigma)$
that contains exactly one  neighbour of each $v \in B$ in $Q_1(\sigma)$
and no other vertex.
\end{definition}

\noindent
That is, a vertex $u$ which satisfies $\mathbf{S_1}$ satisfies  also $\mathbf{S_2}$ (
i.e.  belongs to $Q_3(\sigma)$) only if $N_u\cap(\sigma\backslash \tau)\in {\cal A}_i$, 
for some appropriate $i>0$ or $N_u\cap(\sigma\backslash \tau)=\emptyset$.
Observe that the families ${\cal A}_i$ are uniquely determined by the
edges whose both ends are in $Q_1(\sigma)$. In  {\tt Step 3} we reveal
exactly these edges, i.e. with both ends either in $Q_1(\sigma)$ or in $Q_1(\tau)$.
This results  to the following.

\begin{claim}\label{claim:GoodFraction}
Let ${\cal F}_2=\{{\cal F}_1$ and $X_2>\gamma\cdot |\sigma\backslash \tau| \}$.
For every $2\leq i\leq 7d$ it holds that
\begin{displaymath}
Pr\left[|{\cal A}_i| \leq (1-2d^5/n){|Q_2(\sigma)|\choose i} |{\cal F}_2\right]
\leq 2\exp\left( -nd^{2\epsilon'}/d \right).
\end{displaymath}
\end{claim}
It is direct to see that it always holds that ${\cal A}_1=Q_2(\sigma)$.

Let the set $V'=V\backslash (\sigma \cup \tau \cup Q_{1}(\sigma)\cup Q_{1}(\tau))$.
In {\tt Step 4}, we reveal the vertices that belong in $Q_3(\sigma)\cap Q_3(\tau)$.
This step amounts to revealing the edges between each vertex  $v\in V'$ 
and the sets $Q_i(\sigma)$ and $Q_i(\tau)$, for $i=1,2$.
In particular, revealing the edges between $v$ and the set $Q_1(\sigma)\cup Q_2(\sigma)$
(resp. $Q_1(\tau)\cup Q_2(\tau)$) specifies whether $v\in Q_3(\sigma)$
(resp. $v\in Q_3(\tau)$), or not.

Despite the information we have for $v\in V'$, from {\tt Step 1}, the edge
events between $v$ and the vertices in $Q_1(\sigma)\cup Q_2(\sigma)$ are 
independent of the edge events between $v$ and the vertices in
$Q_1(\tau)\cup Q_2(\tau)$. That is 
$Pr[v\in Q_3(\sigma)\cap Q_3(\tau)]=(Pr[v\in Q_3(\sigma)])^2$.
Also, it is easy to observe that $v\in Q_3(\sigma)\cap Q_3(\tau)$
independently of the other vertices in $V'$.

For every $v \in V'$ let $J_v$ be an indicator random variable such that
$J_v=1$ if $v\in Q_3(\sigma)\cap Q_3(\tau)$ and $J_v=0$ otherwise.
The observations in the previous paragraph suggest that $J_v$s  
are independent with each other and $E[J_v]=Pr[v \in Q_3(\sigma)]^2$.

\begin{claim}\label{Lemma:Q3}
Let the event  ${\cal F}_3=\left \{ {\cal F}_2 \textrm{ and }
 |{\cal A}_i| \geq (1-2d^5/n){|Q_2(\sigma)|\choose i}\right\}$.
For every $u \in V'$, it holds that $$Pr[u \in Q_3(\sigma)|{\cal F}_3] \geq 9/10.$$
\end{claim}
%
%

\noindent
Let $X_3=\sum_{v}J_v$ where $v$ varies over all vertices in $V'$.  
Using Claim \ref{claim:Q1Claim}  and Claim \ref{Lemma:Q3} we get that
\begin{eqnarray*}
E[X_3|{\cal F}_3]&\geq& \left(1-{10d^{\epsilon'}\ln d}/{d}\right)n
\cdot Pr^2[u \in Q_3(\sigma)|{\cal F}_3]
\geq 8n/10,
\end{eqnarray*}
Applying Chernoff bounds and get that 
\begin{equation}\label{eq:X3Chernoff}
Pr[X_3 < 0.7 n|{\cal F}_3 ]\leq \exp \left(-n/{350} \right).
\end{equation}

\noindent
Finally, in {\tt Step 5} we reveal which vertices in $Q_3(\sigma)\cap Q_3(\tau)$
are augmenting, i.e. those which are adjacent to $\sigma\cap \tau$.
Only these vertices will belong the set $Q_0$.

Due to edge independence in $G(n,d/n)$, every $u\in Q_3(\sigma)\cap Q_3(\tau)$
is augmenting independently of all the rest vertices  with probability 
$d^{-a(1-\epsilon)}+O(n^{-1})$. 
Let the event ${\cal F}_4=\{ {\cal F}_3$ and $X_3 \geq 0.7n \}$. It is direct
that $E[|Q_0||{\cal F}_4]\geq 0.7 nd^{-a(1-\epsilon)}-O(1)$.
Since $a \in [0,1]$, there exists $\delta=\delta(\epsilon,a)> \epsilon$ 
such that $a(1-\epsilon)=1-\delta$. Applying Chernoff bounds we get that
\begin{equation}\label{eq:ConditionalF3Q0}
Pr[|Q_0 |= 0 |{\cal F}_4]\leq \exp\left( -0.2 {d^{\delta}n}/{d} \right)\leq
\exp\left( -0.2 {d^{\epsilon}n}/{d} \right) \qquad \mbox{[as $\delta>\epsilon$].}
\end{equation}
Using Claim \ref{claim:Q1Claim}, Claim \ref{claim:Q2Claim}, Claim \ref{claim:GoodFraction} 
and (\ref{eq:X3Chernoff}) we get that $Pr[{\cal F}_4]\geq 1-20\exp\left(-n{d^{\epsilon'}}/(4d\ln^5 d)\right)$.
Combining the probability bound for $Pr[{\cal F}_4]$ with (\ref{eq:ConditionalF3Q0})
we get that
\begin{equation}\label{eq:Q0BoundForSmallA}
Pr[|Q_0| = 0]\leq 30\exp\left(-n{d^{\epsilon'}}/(3d\ln^5 d)\right)
\leq \exp\left(-n\ln^{90}d/d\right)
\end{equation}
as $100\frac{\ln\ln d}{\ln d}\leq \epsilon'\leq 1-100\frac{\ln\ln d}{\ln d}$.

It remains to study the case where $1-100\frac{\ln\ln d}{\ln d}<\epsilon'\leq 1$.
There, it holds that $|\sigma\cup \tau|=k_0\leq (1-\epsilon)
\frac{\ln d}{d}n+100\frac{\ln\ln d}{d}n$. Let $R_{\sigma,\tau}$ be the 
set of vertices, outside $\sigma, \tau$, that are not adjacent to any
vertex in $\sigma\cup \tau$. 
Every $w\notin \sigma\cup\tau$  belongs to $R_{\sigma,\tau}$ independently of the
other vertices with probability $(1-p)^{k_0}\leq \left(d^{\epsilon/2}/d \right)$.
Thus, $E|R_{\sigma,\tau}|\geq (n-k_0)d^{\epsilon/2}/d$. Using Chernoff bounds 
we get
\begin{equation}\label{eq:Q0BoundForBigA}
Pr[|R_{\sigma,\tau}|=0]\leq \exp\left( -\frac{d^{\epsilon/2}}{2d}n \right).
\end{equation}
Since $R_{\sigma,\tau}$ consists of {\em augmenting vertices} for
the pair $\sigma,\tau$, the probability that there is no augmenting
vertex is upper bounded by $Pr[|R_{\sigma,\tau}|=0]$. The proposition
follows from (\ref{eq:Q0BoundForSmallA}) and (\ref{eq:Q0BoundForBigA}).
 \qed \\

\begin{claimproof}{\ref{claim:Q1Claim}}
Let $r$ be the probability for a vertex $v$ outside $\sigma,\tau$,
to have exactly one neighbour in $\sigma\backslash \tau$. It holds
that 
$$r=(1-a)kp(1-p)^{(1-a)k-1}=(1-\epsilon')\ln d/d^{1-\epsilon'}-O(n^{-1}).$$
Of course, with the same probability  $v$ has exactly one neighbour
in $\tau\backslash \sigma$. Then, the probability for $v$ to be in 
$Q_1(\sigma)\backslash Q_1(\tau)$ is $p_1=r(1-r)$. Observe that $v$ belongs
to $Q_1(\sigma)\backslash Q_1(\tau)$ independently of the other vertices.
It is direct that there exists
$\epsilon_d\to 0$ such that 
\begin{displaymath}
E[X_1]=(n-2k)p_1=\frac{(1-\epsilon')\ln d }{d^{1-\epsilon'}}n\left(1-\epsilon_d\right) -O(1).
\end{displaymath}
The claim follows by applying the Chernoff bounds.
\end{claimproof}

\begin{claimproof}{\ref{claim:Q2Claim}}
Due to symmetry each vertex $u\in Q_1(\sigma)$ is adjacent to
exactly one random vertex in $\sigma\backslash \tau$, independently
of the other vertices in $Q_1(\sigma)$. An equivalent way of
looking adjacencies between vertices in $Q_1(\sigma)$ and
$\sigma\backslash \tau$ is by assuming that the vertices in
$Q_1(\sigma)$ are balls and each vertex in $\sigma\backslash
\tau$ is a bin and each ball is thrown into a {\em random} bin.
The non-empty bins correspond to vertices in $Q_2(\sigma)$. The
claim will follow by deriving an appropriate tail bound on the
number of occupied bins.

Let $N$ denote the number or balls and $m$ denote the number of
bins, it holds that $N\geq \frac{d^{\epsilon'}}{d}n$ and $m=(1-\epsilon')\frac{\ln d}{d}n$.
For $c\in (0,1)$, let $P_c$ be the probability that there is a
subset of bins of size $cm$ that contains all the balls. For $B_c$
a fixed subset of bins of size $cm$ and for a fixed ball $r$, it
holds that
\begin{eqnarray}
P_c&\leq& {m \choose cm}\left (Pr[r \textrm{ is placed into some bin in } B_c]\right)^N
\leq \left(\frac{me}{cm}\right)^{cm}c^N \nonumber \\
&\leq & \exp\left(cm(1-\ln c)+ N\ln c\right).\nonumber 
\end{eqnarray}
For $c_0=(1-\ln^{-5} d)$ we have that
\begin{eqnarray}
P_{c_0}&\leq& \exp\left( 2\frac{\ln d}{d}n -\frac{d^{\epsilon'}}{2d\ln^5 d}n \right)
\qquad \mbox{[as $1-x\geq \exp(-x/(1-x)$ for $0<x<0.1$]}\nonumber\\
&\leq & \exp\left(-nd^{\epsilon'}/(3d\ln^5 d)\right) \hspace{6.35cm} \mbox{[for large $d$]}.
\nonumber
\end{eqnarray}
It is easy to check that for any $0\leq c\leq c_0$ we have $P_c\leq
P_{c_0}$. Hence, letting $E_{c_0}$ be the event that ``there is a 
subset of at most $c_0\cdot m$ bins that has all the balls'', it
holds that
$$P[E_{c_0}]\leq \exp\left( -nd^{\epsilon'}/(4d\ln^5 d) \right).$$
The claim follows.
\end{claimproof}

\begin{claimproof}{\ref{claim:GoodFraction}}
The cardinality of each family ${\cal A}_i$, for $2\leq i\leq 7d$,
depends on the edges whose both ends are in $Q_1(\sigma)$. 
As a  first step we estimate how many are these vertices 
conditional on the event ${\cal F}_2$.

Let $R_1$ be the set of edges whose both ends are in $Q_1(\sigma)$.
The bound on $X_1$ and the cardinality of $Q_1(\sigma)$ that
${\cal F}_2$ specifies as well as the fact that each edge appears
independently with probability $d/n$ yields to the following relation.
\begin{displaymath}
E[|R_1||{\cal F}_2]=C\frac{d^{2\epsilon'}n}{d}  (1-\epsilon')^2 \ln^2 d,
\end{displaymath}
where $1/8<C<9/8$. Chernoff bounds yield to the following inequality.
\begin{equation}\label{AlmostNoOfEdgesInQ1}
Pr\left[|R_1|\geq n/d^{1-3\epsilon'} |{\cal F}_2\right]\leq \exp\left( -nd^{2\epsilon'}/d \right).
\end{equation}

\noindent
Let the event $H=\{ {\cal F}_2$  and $|R_1|< n/d^{1-3\epsilon'}\}$. 

Next, we compute $E[|{\cal A}_i||H]$.  Note that the event $H$
specifies, only, an upper bound on $|R_1|$ and it does not tell
where the edges are placed. That is, all subsets of $Q_2(\sigma)$
of cardinality $i$ are symmetric thus they belong to ${\cal A}_i$
with the same probability. By the linearity  of expect we get that
\begin{displaymath}
E[|{\cal A}_i|H]={|Q_2(\sigma)| \choose i}Pr[L\notin {\cal A}_i|H]
\qquad\qquad \mbox{[for a fixed  $L\subseteq Q_2(\sigma)$ and $|L|=i$]}
\end{displaymath}

\noindent
Let $M_{L}$ be the family of subsets of $Q_1(\sigma)$, each of 
cardinality $i$, such that for each ${\cal W}\in M_L$ the following
is true:  The set ${\cal W}$  contains  exactly one neighbour of
each vertex $q\in L$ and no other vertex. By definition the 
family $M_L$ must have at least one member. Moreover, if there 
exists one set in $M_L$ which is independent, then $L\in {\cal A}_i$.

When we reveal the edges between the vertices in $Q_1(\sigma)$
it is easy to see that the probability that $M_L$ contains no 
independent set is maximized when $M_L$ is a singleton.
Given $|R_1|$ and $X_1$, observe that each pair of vertices in 
$Q_1(\sigma)$ is adjacent with probability  at most 
$|R_1|/{X_1\choose 2}$.  Each subset of $Q_1(\sigma)$ of cardinality $i$
has expected number of adjacent vertices ${i \choose 2}|R_1|/{X_1\choose 2} \leq d^4/n$,
for large $d$.  That is, the probability that $M_L$ does not contain
an independent set is  at most $d^4/n$. Thus, 
\begin{equation}\label{eq:ExpectCalAi}
E[|{\cal A}_i|H]\geq \left(1-\frac{d^4}{n}\right){|Q_2(\sigma)| \choose i}.
\end{equation}

\noindent
Having calculated a lower bound for $E[|{\cal A}_i||H]$ we will show that
given the event $H$,  $|{\cal A}_i|$ is tightly concentrated about its 
expectation. Then, claim will be immediate. So as to show the concentration
result,  we use an edge exposure martingale argument for the edges in $R_1$ 
and then we apply Azuma's inequality (see e.g. \cite{janson} Theorem 2.25).

Observe that the revelation of each edge in $R_1$ cannot reduce the
cardinality of ${\cal A}_i$ by more than $c={X_2-2 \choose i-2}\leq (X_2)^{i-2}/(i-2)!$
sets. Standard arguments with Azuma's inequality  yield to that for any 
$\lambda>0$ it holds that
\begin{eqnarray}
Pr\left[|{\cal A}_i|\leq E[{\cal A}_i|H]-\lambda|H\right ]
&\leq& \exp\left(-\frac{\lambda^2}{2|R_1|c^2}\right). \nonumber 
\end{eqnarray}
Setting $\lambda=d^4X_2^{i-1}/i!$ we get that
\begin{eqnarray}
Pr\left[|{\cal A}_i|\leq \left(1-2\frac{d^5}{n}\right){Q_2(\sigma) \choose i}|H\right ]
&\leq&
\exp\left(-\frac{d^8X_2^{2}}{2|R_1|i^2}\right) 
\leq \exp\left(-dn\right), 
\nonumber 
\end{eqnarray}
where the last derivation follows by using the fact that $1\leq i\leq 7 d$,
$|R_1|\leq n/d^{1-3\epsilon'}$ and $100\ln\ln d/\ln d<1-\epsilon'<1-100\ln\ln d/\ln d$. 
The claim follows by just using the law of total probability and get that
\begin{eqnarray}
Pr\left[|{\cal A}_i|\leq \left(1-2\frac{d^5}{n}\right){Q_2(\sigma) \choose i}|{\cal F}_2\right ]
&\leq& Pr\left[|{\cal A}_i|\leq \left(1-2\frac{d^5}{n}\right){Q_2(\sigma) \choose i}|H\right ]+
Pr\left[|R_1|\geq n/d^{1-3\epsilon'} |{\cal F}_2\right]    \nonumber \\
&\leq & 2\exp\left( -nd^{2\epsilon'}/d \right). \nonumber 
\end{eqnarray}
\end{claimproof}

\begin{claimproof}{\ref{Lemma:Q3}}
Consider some $u\in V \backslash (\sigma \cup \tau \cup Q_1(\sigma))$.
Let $d_{\sigma,\tau}(u)$ be the number of vertices in $\sigma
\backslash \tau$ which are adjacent to $u$. Also, let 
the event $E_i=\{N_u\cap (\sigma\backslash\tau)\in {\cal A}_i\}$ for $i>0$
and $E_0=\{N_u\cap (\sigma\backslash\tau)=\emptyset\}$.
By the law of total probability we get that
\begin{eqnarray}
Pr[u \in Q_3(\sigma)|{\cal F}_3] 
&\geq& \sum_{i=0}^{7d}Pr[u\in Q_3(\sigma)|d_{\sigma,\tau}=i, E_i, {\cal F}_3]
\cdot Pr[E_i|d_{\sigma,\tau}=i,{\cal F}_3]  \cdot Pr[d_{\sigma,\tau}=i|{\cal F}_3].
\qquad \label{eq:PruQ3FirstBound}
\end{eqnarray}
We impose the bound $i\leq 7d$  since no vertex in $Q_3(\sigma)$
can have more than $7d$ neighbours in $Q_2(\sigma)$.
Conditional on $d_{\sigma, \tau}(u)=i$,  all the
subsets of size $i$ in $\sigma \backslash \tau$ are equiprobably adjacent to $u$.
Thus, we get that
\begin{eqnarray}
Pr[E_i|d_{\sigma,\tau}=i,{\cal F}_3]&=&\frac{|{\cal A}_i|}{{|\sigma\backslash \tau|\choose i}}
\geq (1-2d^5/n)\frac{{X_2\choose i}}{{|\sigma\backslash \tau|\choose i}} 
\qquad \qquad \mbox{[by Claim \ref{claim:GoodFraction}]} 
\nonumber \\
&\geq& \left( \frac{X_2}
{|\sigma\backslash \tau|}\right)^i(1-o(1))\geq  \gamma^i(1-o(1)),
\label{eq:4TargetOneH1}
\end{eqnarray}
where $\gamma=1-\ln^{-5} d$. Also, it is easy to see that 
\begin{equation}
Pr[u\in Q_3(\sigma)|d_{\sigma,\tau}=i, E_i, {\cal F}_3]\geq (1-d/n)^i\geq 1-7d^2/n.
\qquad \qquad \mbox{[as $0\leq i\leq 7d$]}
\label{eq:Target1stBound}
\end{equation}

\noindent
Let the event $C=$``$d_{\sigma, \tau}(u) \neq 1$ and $d_{\sigma, \tau}(u)\leq 7d$''.
Observe that the variable $d_{\sigma, \tau}(u)$  is distributed as in ${\cal B}((1-a)k,d/n)$
conditional on the event $C$. Using this  along with 
(\ref{eq:Target1stBound}) and (\ref{eq:4TargetOneH1})
we can rewrite (\ref{eq:PruQ3FirstBound}) as follows:
\begin{eqnarray}
Pr[u \in Q_3|{\cal F}_3] &\geq & \frac{1-o(1)}{Pr[C|{\cal F}_3]}
\left [\sum_{i=0}^{7d}{(1-a)k \choose i}p^{i}(1-p)^{(1-a)k-i}\gamma^i
-\gamma{(1-a)k \choose 1}p(1-p)^{(1-a)k-1} \right]
\nonumber \\
&\geq & (1-o(1))
\left [\sum_{i=0}^{7d}
{(1-a)k \choose i}p^{i}(1-p)^{(1-a)k-i}\gamma^{i} -d^{-(1-\epsilon')} \ln d \right],
 \label{eq:PQ3OneBeforeEnd}
\end{eqnarray}
where the last inequality follows from the fact that $\gamma, Pr[C|{\cal F}_3]\leq 1$
and a simple derivation which implies that 
${(1-a)k \choose 1}p(1-p)^{(1-a)k-1}\leq d^{-(1-\epsilon')} \ln d$.
Also, note that
\begin{eqnarray}
\sum_{i=7d+1}^{(1-a)k}{(1-a)k \choose i}p^{i}(1-p)^{(1-a)k-i}\gamma^i&\leq &
\sum_{i=7d+1}^{(1-a)k}{(1-a)k \choose i}p^{i}(1-p)^{(1-a)k-i} \qquad \mbox{[as $0\leq \gamma <1$]}
\nonumber \\
&\leq& \exp\left(-7d\right).\label{eq:ComplementBound}
\end{eqnarray}
The last inequality follows by noting that the summation on the l.h.s. of the first line
is equal to the probability $Pr[{\cal B}((1-a)k,d/n)>7d]$ and  bounding it by using
Chernoff bound  (as it appears in Theorem 2.1 in \cite{janson}).
Using (\ref{eq:ComplementBound}), we get that
\begin{eqnarray}
\sum_{i=0}^{7d}{(1-a)k \choose i}p^{i}(1-p)^{(1-a)k-i}\gamma^{i}
&\geq&(1-p \ln^{-5}d )^{(1-a)k}-\exp(-7d)
\nonumber \\ 
&\hspace{-3.2cm}\geq &\hspace{-1.9cm}\exp\left[ -(1-\epsilon')\ln^{-4} d - O(n^{-1}) \right] -\exp(-7d)
 \qquad \mbox{[as $\ln(1-x)=-x-O(x^2)$]}
\nonumber \\ 
&\hspace{-3.2cm}\geq &\hspace{-1.9cm} 1-\frac{1-\epsilon'}{\ln^4 d}-\exp(-7d)-O(n^{-1})
\hspace{3cm} \mbox{[as $1+x\leq e^x$]} \nonumber
\nonumber \\
&\hspace{-3.2cm}\geq &\hspace{-1.9cm} 95/100. \label{eq:sumupto7dBin}
\end{eqnarray}
The claim follows by plugging (\ref{eq:sumupto7dBin}) into (\ref{eq:PQ3OneBeforeEnd}) and
get that $Pr[u \in Q_3|{\cal F}_3] \geq 9/10$.
\end{claimproof}

\section{Proof of Theorem \ref{theorem:shattering}}\label{sec:theorem:shattering}

The following proposition reduces the problem of establishing shattering to an exercise in calculus.

\begin{proposition}\label{Prop_calcShattering}
There exist a constant $d_0>0$ and $\eps_d\ra0$ such that for all
$d>d_0$ the following is true. Suppose that $s=(1+q)\ln d/d$  for
$\epsilon_d\leq q\leq (1-\epsilon_d)$ and let
	$$\psi(x)=\psi_{d,s}(x)=xs(2-2\ln x-\ln s)+\frac d2\ln\bc{1-\frac{s^2(1-(1-x)^2)}{1-s^2}}.$$
If there is a real $0<b<1$ such that
	\begin{eqnarray}\label{eqcalcShattering1}
	\psi(b)&<&-18qs
	\qquad\mbox{and}\\
	\sup_{x<b}\psi(x)&<&-s\ln(s)-(1-s)\ln(1-s)+\frac{d}2\ln(1-s^2)-20s
		\label{eqcalcShattering2}
	\end{eqnarray}
then $\cS_k(G(n,m))$ shatters, with $m=dn/2$ and $k=sn$.
\end{proposition}

\noindent
{\bf Proof of \Thm~\ref{theorem:shattering} (assuming \Prop~\ref{Prop_calcShattering}):}
Let $\eps_d$ be as in \Prop~\ref{Prop_calcShattering}, assume that $d>d_0$ is sufficiently large,
let $\delta=5\ln\ln d/\ln d$ and set
	$$k=sn\qquad\mbox{ with }\qquad\frac{(1+\delta)\ln d}d\leq s\leq\frac{(2-\eps_d)\ln d}d.$$
Moreover, let $b=20\ln^{-1}d$.
We are going to verify~(\ref{eqcalcShattering1}) and~(\ref{eqcalcShattering2}).
Then \Thm~\ref{theorem:shattering} will follow from \Prop~\ref{Prop_calcShattering}.
Indeed, using the elementary inequality $\ln(1-x)\leq-x$, we find
	\begin{eqnarray}
	\psi(x)&\leq&sx(2-2\ln x-\ln s)-\frac {ds^2}2(1-(1-x)^2)\nonumber\\
		&=&sx\bc{2-2\ln x-\ln s-ds+dsx/2}\nonumber\\
		&\leq&sx\bc{2-2\ln x-\ln d-ds+dsx/2}\hspace*{2.65cm}\mbox{[as $s\geq\ln d/d$]}\nonumber\\
		&\leq&sx\bc{2-2\ln x-\delta\ln d+dsx/2}\hspace*{2.1cm}\mbox{[as $s\geq(1+\delta)\ln d/d$]}.
			\label{eqpftheorem:shattering1}
	\end{eqnarray}
Hence, for $d\geq d_0$ sufficiently large our choice of $\delta,b$ ensures that
	\begin{eqnarray*}
\psi(b)&\leq & b s\left(22+2\ln\ln d-\ln20-q\ln d\right)\leq -\frac{9}{10}bsq\ln d\leq -18qs. 
	\end{eqnarray*}
Thus, we have verified~(\ref{eqcalcShattering1}).

Starting from~(\ref{eqpftheorem:shattering1}), we see that for any $\beta<b$ and $d>d_0$ large,
	\begin{eqnarray}\nonumber
	\psi(\beta)&\leq&\beta s(22-2\ln\beta-100\ln\ln d)\qquad\qquad\mbox{[as $\beta ds<40$ and by the choice of $\delta$]}\\
		&\leq&-2\beta s\ln\beta<s, 			\label{eqpftheorem:shattering2}
	\end{eqnarray}
because $-x\ln x<1/2$ for all $x>0$.
By comparison, for $s\leq(2-\delta)\ln d/d$ we have
	\begin{eqnarray}
	-s\ln(s)-(1-s)\ln(1-s)+\frac{d}2\ln(1-s^2)&\geq&
		-s\ln s+s-\frac{ds^2}2-\frac{ds^4}2
			\quad\mbox{[using $\ln(1-x)\geq-x-x^2$]}\nonumber\\
		&\geq&s\bc{-\ln s-ds/2+1}\nonumber\\
		&\geq&s\bc{\frac{1-q}{2}\ln d-\ln\ln d+1}\geq40s\ln\ln d.
			 			\label{eqpftheorem:shattering3}
	\end{eqnarray}
Combining~(\ref{eqpftheorem:shattering2}) and~(\ref{eqpftheorem:shattering3}), we obtain
	$$\psi(\beta)<-s\ln(s)-(1-s)\ln(1-s)+\frac{d}2\ln(1-s^2)-s<-s\ln(s)-(1-s)\ln(1-s)+\frac{d}2\ln(1-s^2)-20s
	$$
as $s\geq\ln d/d$.
Thus, we have got~(\ref{eqcalcShattering2}).

Lemma \ref{lemma:SameAsShaterringThrm} (in a following section)
states explicitly what is implied in this proof.  That is there
exists $0<b<1$ such that (\ref{eqcalcShattering1}) and
(\ref{eqcalcShattering2}) hold.
Thus, we are going to use the proof here for Lemma 
\ref{lemma:SameAsShaterringThrm}.
\qed

\subsection{Proof of \Prop~\ref{Prop_calcShattering}}

Let $(G,\sigma)$ be a pair chosen from the planted model $\cP_k(n,m)$.
To prove the proposition, we are going to show that under the
assumptions~(\ref{eqcalcShattering1}) and~(\ref{eqcalcShattering2})
the independent set $\sigma$ belongs to a small `cluster' of
independent sets that is separated from the others by a linear
Hamming distance with a probability very close to one.
We will then use \Thm~\ref{thrm:transfer-theorem} to transfer
this result to the distribution $\cU_k(n,m)$, which will imply
that $\cS_k(G(n,m))$ shatters \whp

Let $Z_{k,\beta}$ be the number of independent sets $\tau\in\cS_k(G)$
such that
	$|\sigma\cap\tau|=(1-\beta)k$.

\begin{lemma}\label{Lemma_condexp}
We have
	$\frac1n\ln\Erw_{\cP_k(n,m)}\brk{Z_{k,\beta}}\leq\psi(\beta)+o(1).$
\end{lemma}
\begin{proof}
Let $\tau\subset V$ be such that $|\sigma\cap\tau|=(1-\beta)k$. 
The total number of graphs with $m$ in which both $\sigma,\tau$
are independent sets equals
	$$\bink{\bink n2-2\bink k2+\bink{(1-\beta)k}2}{m}.$$
For we can choose any $m$ edges out of those potential edges that
do not join two vertices of either $\sigma$ or $\tau$.
Since both $\sigma,\tau$ have size $k$ and $|\sigma\cap\tau|=(1-\beta)k$,
the number of such `bad' potential edges is $2\bink k2-\bink{(1-\beta)k}2$
by inclusion/exclusion.
Since $G$ is chosen uniformly among all $\bink{\bink n2-\bink k2}{m}$
graphs in which $\sigma$ is independent, we thus get
	\begin{eqnarray}
	\pr\brk{\tau\mbox{ is independent}}&=&\bink{\bink n2-2\bink k2+\bink{(1-\beta)k}2}{m}/\bink{\bink n2-\bink k2}{m}\nonumber\\
		&=&\prod_{j=0}^{m-1}\frac{\bink n2-2\bink k2+\bink{(1-\beta)k}2-j}{\bink n2-\bink k2-j}
			\leq\bcfr{\bink n2-2\bink k2+\bink{(1-\beta)k}2}{\bink n2-\bink k2}^m\nonumber\\
		&=&\bc{1-\frac{k^2-((1-\beta)k)^2}{n^2-k^2}+O(1/n)}^m\nonumber\\
		&\leq&O\bc{1}\cdot\bc{1-\frac{s^2(1-(1-\beta)^2)}{1-s^2}}^m\qquad\mbox{[as $k=sn$]}.
			\label{eqLemmacondexp1}
	\end{eqnarray}
Furthermore, the total number of ways to choose a set $\tau$ with
$|\sigma\cap\tau|=(1-\beta)k$ equals 	$\bink{k}{(1-\beta)k}\cdot\bink{n-k}{\beta k}$
(choose the $(1-\beta)k$ vertices in the intersection $\sigma\cap\tau$
and then choose the remaining $\beta k$ vertices).
By the linearity of the expectation, we get from~(\ref{eqLemmacondexp1})
	\begin{eqnarray*}
	\Erw\brk{Z_{k,\beta}}&=&O(1)\cdot\bink{k}{(1-\beta)k}\cdot\bink{n-k}{\beta k}\cdot\bc{1-\frac{s^2(1-(1-\beta)^2)}{1-s^2}}^m\\
		&=&O(1)\cdot\bink{k}{\beta k}\cdot\bink{n-k}{\beta k}\cdot\bc{1-\frac{s^2(1-(1-\beta)^2)}{1-s^2}}^m\\
		&\leq&O(1)\cdot\bcfr{\eul}{\beta}^{\beta k}\bcfr{\eul(n-k)}{\beta k}^{\beta k}\cdot\bc{1-\frac{s^2(1-(1-\beta)^2)}{1-s^2}}^m\\
		&=&O(1)\cdot\bcfr{\eul^2(1-s)}{s\beta^2}^{\beta sn}\cdot\bc{1-\frac{s^2(1-(1-\beta)^2)}{1-s^2}}^{dn/2}
			\qquad\mbox{[as $k=sn$ and $m=dn/2$]}.
	\end{eqnarray*}
Taking logarithms and dividing by $n$ completes the proof.
\end{proof}

Let us call an independent set $\sigma$ of size $k$ of a graph $G$
\emph{$(b_1,b_2,\gamma)$-good} if $G$ has no independent set $\tau$
such that $(1-b_1)k\leq|\sigma\cap \tau|\leq(1-b_2)k$ and if
		$\abs{\cbc{\tau\in\cS_k(G):|\sigma\cap\tau|>(1-b_2)k}}\leq\exp(-\gamma n)|\cS_k(G)|$.
Moreover, let
	\begin{equation}\label{eqeventZdk}
	\cZ_{d,k}=\cbc{(G,\sigma)\in\Lambda_k(n,m):|\cS_k(G)|\geq\Erw|\cS_k(G(n,m))|\cdot\exp\bc{-14n\sqrt{\ln^5d/d^3}}}.
	\end{equation}

\begin{corollary}\label{Cor_condexp}
Suppose that $b>0$ is such that~(\ref{eqcalcShattering1}) and~(\ref{eqcalcShattering2}) hold.
Then there exist $b_1,b_2,\gamma>0$ such that
	$$\pr_{\cU_k(n,m)}\brk{(G,\sigma)\mbox{ is $(b_1,b_2,\gamma)$-good}|\cZ_{d,k}}\geq1-\exp(-\gamma n).$$
\end{corollary}
\begin{proof}
The function $\psi$ is continuous.
Therefore, if (\ref{eqcalcShattering1}) and~(\ref{eqcalcShattering2}) 
are satisfied for some $b<0$ then there exist $b_1>b_2$ and $\zeta>0$ such that
	\begin{eqnarray}\label{eqcalcShattering1a}
	\sup_{b_2\leq\beta\leq b_1}\psi(\beta)&<&-18qs
	-\zeta\qquad\mbox{and}\\
	\sup_{x<b_2}\psi(x)&<&-s\ln(s)-(1-s)\ln(1-s)+\frac{d}2\ln(1-s^2)-d^{-1.49}-\zeta.
		\label{eqcalcShattering2a}
	\end{eqnarray}
Let $Z_{k,b_1,b_2}(G,\sigma)$ be the number of $\tau\in\cS_k(G)$ such that $(1-b_1)k\leq|\sigma\cap\tau|\leq(1-b_2)k$.
Then \Lem~\ref{Lemma_condexp}, (\ref{eqcalcShattering1a}), and Markov's
inequality yield
	\begin{eqnarray}\nonumber
	\pr_{\cP_k(n,m)}\brk{Z_{k,b_1,b_2}>0}&\leq&
		\Erw_{\cP_k(n,m)}\brk{Z_{k,b_1,b_2}}
			\leq\sum_{b_2k\leq j\leq b_1k}\Erw_{\cP_k(n,m)}\brk{Z_{k,j/k}}\\
		&\leq&\exp\brk{n\bc{\sup_{b_2\leq\beta\leq b_1}\psi(\beta)+o(1)}}\leq
			\exp\brk{-n\ln\ln d/d}.
				\label{eqCorcondexp1}
	\end{eqnarray}
The last inequality follows by taking $q>100\ln\ln d/\ln d$
and then $18qs\geq \ln\ln d/d$
Similarly, let $Z_{k,<b_2}(G,\sigma)$ be the number of $\tau\in|\cS_k(G)|$
such that $|\sigma\cap\tau|>(1-b_2)k$. Moreover, let $s=k/n$ and let
	\begin{eqnarray*}
	\mu&=&\Erw|\cS_k(G(n,m))|\cdot\exp\bc{-14n\sqrt{\ln^5d/d^3}}\\
		&=&O(1)\bink nk(1-(k/n)^2)^m\cdot\exp\bc{-14n\sqrt{\ln^5d/d^3}+o(n)} 
		\qquad\mbox{[by Corollary \ref{corollary:expectation-equivalence}]}\\
		&=&\exp\brk{n\bc{-s\ln(s)-(1-s)\ln(1-s)-\frac d2\ln(1-s^2)-14\sqrt{\ln^5d/d^3}+o(1)}},
	\end{eqnarray*}
where in the last step we used Stirling's formula.
Using~(\ref{eqcalcShattering2a}) and Markov's inequality, we find that
	\begin{eqnarray}\nonumber
	\pr_{\cP_k(n,m)}\brk{Z_{k,<b_2}>\mu}&\leq&
		\frac{\Erw_{\cP_k(n,m)}\brk{Z_{k,<b_2}}}\mu
			\leq\sum_{ j<b_2k}\frac{\Erw_{\cP_k(n,m)}\brk{Z_{k,j/k}}}\mu\\
		&\leq&\frac1\mu\exp\brk{n\bc{\sup_{\beta< b_2}\psi(\beta)+o(1)}}
				\leq\exp\brk{-n\ln d/d}.
					\label{eqCorcondexp2}
	\end{eqnarray}
Combining~(\ref{eqCorcondexp1}) and~(\ref{eqCorcondexp2}) with 
\Cor~\ref{Cor_quantexchange}, and letting, say, $\gamma=d^{-2}$, we see that
	\begin{eqnarray*}
	\pr_{\cU_k(n,m)}\brk{(G,\sigma)\mbox{ is not $(b_1,b_2,\gamma)$-good}|\cZ_{d,k}}
		&\leq&\pr_{\cU_k(n,m)}\brk{Z_{k,<b_2}>\mu\mbox{ or }Z_{k,b_1,b_2}>0}\\
		&\hspace{-8cm}\leq&\hspace{-4cm}
			(1+o(1))\pr_{\cP_k(n,m)}\brk{Z_{k,>b_2}>\mu\mbox{ or }Z_{k,b_1,b_2}>0}\cdot\exp\brk{14n\sqrt{\ln^5d/d^3}}\\
		&\hspace{-8cm}\leq&\hspace{-4cm}\exp(-\gamma n),
	\end{eqnarray*}
as claimed.
\end{proof}

\noindent{\bf Proof of \Prop~\ref{Prop_calcShattering}:}
Let $\cZ$ be the event that
	$$|\cS_k(G(n,m))|\geq\Erw|\cS_k(G(n,m))|\cdot\exp\bc{-14n\sqrt{\ln^5d/d^3}}.$$
\Cor~\ref{Cor_condexp} implies that there exists $b_1,b_2,\gamma$
such that given $\cZ$, 	\whp\ $G=G(n,m)$ has the property that all
but $\exp(-\gamma n)|\cS_k(G(n,m))|$ independent sets $\sigma\in\cS_k(G)$
are $(b_1,b_2,\gamma)$-good. Let $\cG$ denote this event.
As \Lem~\ref{Lemma_gap} ensures that $G(n,m)\in\cZ$ \whp, 
we have
		$$\pr\brk{\cG}\geq\pr\brk{\cG\cap\cZ}=\pr\brk{\cG|\cZ}\cdot\pr\brk{\cZ}=1-o(1).$$
As a consequence, we just need to show that
	the two conditions in Definition~\ref{Def_shattering} are satisfied if $\cG$ occurs.

Thus, let $G\in\cG$. We construct a decomposition of $\cS_k(G)$ 
into pairwise disjoint subsets $S_1,\ldots,S_{N}$ inductively
as follows. Suppose $i\geq1$.
If the set $\cS_k(G)\setminus\bigcup_{j=1}^{i-1}S_j$ does not contain 
a $(b_1,b_2,\gamma)$-good anymore, let $N=i$, set
	$$S_{N}=\cS_k(G)\setminus\bigcup_{j=1}^{N-1}S_j$$
and stop.
Otherwise, choose some $\sigma_i\in\cS_k(G)\setminus\bigcup_{j=1}^{i-1}S_j$ that is $(b_1,b_2,\gamma)$-good, let
	$$S_i=\cbc{\tau\in\cS_k(G):|\sigma\cap\tau|>b_2k}\setminus\bigcup_{j=1}^{i-1}S_j$$
and proceed to $i+1$.

Let $\zeta=k(b_1-b_2)/n$. We claim that this construction satisfies the two
conditions in Definition~\ref{Def_shattering}. Indeed, each $\sigma_i$ is
$(b_1,b_2,\gamma)$-good for all, we have 	$|S_i|\leq\exp(-\gamma n)\abs{\cS_k(G)}$ 
for all $i<N$. Furthermore, as $G\in\cG$ we have 
$\abs{S_N}\leq \exp(-\gamma n)\abs{\cS_k(G)}$.
Thus, the partition $S_1,\ldots,S_N$ satisfies the first condition
in Definition~\ref{Def_shattering}.

With respect to the second condition, let $\tau\in S_i$ and $\tau'\in S_j$
with $1\leq i<j\leq N$. Assume for contradiction that $\dist(\tau,\tau')<\zeta n$.
Then, for some $\sigma_i\in S_i$ we have that
	$$\dist(\sigma_i,\tau')\leq\dist(\sigma_i,\tau)+\dist(\tau,\tau')
		=2(k-|\sigma_i\cap\tau|)+\zeta n\leq 2b_2k+\zeta n,$$
and thus $|\sigma_i\cap\tau'|=k-\dist(\sigma_i,\tau')/2\leq (1-b_2)k-\zeta n/2\in\brk{(1-b_1)k,(1-b_2)k}$.
This contradicts the fact that $\sigma_i$ is good (which implies that there is no
	independent set $\sigma'$ such that $|\sigma_i\cap\sigma'|\in\brk{(1-b_1)k,(1-b_2)k}$.
Thus, we have established the second condition in Definition~\ref{Def_shattering}.
\qed

\section{Proof of Theorem \ref{theorem:non-maximal}}\label{section:theorem:non-maximal}

In this section we assume that $d\geq d_0$ for some large enough constant $d_0>0$.
Moreover, let $\eps_d\ra0$ be a function of $d$ that tends to $0$ sufficiently slowly,
and assume that $k=(1-\eps)n\ln d/d$ for some $\eps\in\brk{\eps_d,1-\eps_d}$.

Our goal is to show that for a random pair $(G,\sigma)$ chosen from $\cU_k(n,m)$
\whp\ there is a larger independent set $\tau$ in $G$ that contains $\sigma$ as a subset.
More precisely, $\tau$ is supposed to have size $k(1+\frac{2\eps}{1-\eps})$.
In order to construct such a set $\tau$ we need the following concept.

\begin{definition}
A vertex $v\in V\backslash \sigma$ is called {\bf $\sigma$-pure} in $G$ if it is not adjacent to any vertex in $\sigma$.
\end{definition}
Basically, in order to expand $\sigma$ we are going to show that $G$ has an independent set $I\subset V\setminus\sigma$
of size $|I|=2\eps k/(1-\eps)$  consisting of $\sigma$-pure vertices.
Then $\tau=\sigma\cup I$ is the desired larger independent set.
We begin by estimating the number of $\sigma$-pure vertices and the density of the graph
that they span.


\begin{lemma}\label{lemma:pure-graph}
Let $(G, \sigma)$ be chosen from ${\cal P}_k(n,m)$,
	where
		 $k=(1-\eps)\frac{\ln d}{d}n$ with $\eps \in [10\ln\ln d/\ln d, 1]$. 
Let $Q$ be the set of $\sigma$-pure vertices.
Then with probability $\geq1-\exp \left(-\frac{n}{d}\right)$ the following two statements hold.
\begin{enumerate}
\item Let $N=|Q|$. Then $N\geq(1-o_d(1))d^{\eps-1}n$.
\item Let $M$ be the number of edges in the induced subgraph $G\brk Q$.
	Then $M\leq (\frac12+\delta)d^{2\eps-1}n$, with $0<\delta<2d^{-\epsilon/3}$.
\end{enumerate}
\end{lemma}

\begin{proof}
Instead of working directly with the distribution $\cP_k(n,m)$,
let us consider the following variant $\cP_k'(n,m)$. First,
choose a set $\sigma'\subset V$ of size $k$ uniformly at random.
Then, constrict a graph $G'$ by inserting each of the $\bink n2-\bink k2$ possible
edges that do not join two vertices in $\sigma'$ with probability
$p=m/(\bink n2-\bink k2)$ independently.

Thus, the number of edges in $G'$ is binomially distribution with
mean $m$. Furthermore, given that $G'$ has precisely $m$ edges, it
is a uniformly random graph with this property in which $\sigma'$
is an independent set. Therefore, for any event $\cA$ we have
	\begin{eqnarray}\nonumber
	\pr_{\cP_k(n,m)}\brk{\cA}&=&
		\pr_{\cP_k'(n,m)}\brk{\cA\,|\,|E(G')|=m}\\
		&\leq&\frac{\pr_{\cP_k'(n,m)}\brk{\cA}}{\pr\brk{\Bin\bc{\bink n2-\bink k2,p}=m}}=\Theta(\sqrt m)\cdot\pr_{\cP_k'(n,m)}\brk{\cA},
			\label{eqP'knm}
	\end{eqnarray}
where the last step follows from Stirling's formula.

Now, let $N'$ be the number of $\sigma'$-pure vertices in $G'$.
For each vertex $v\not\in\sigma$ the number of neighbours in $\sigma$
is binomially distributed with mean $kp$. In effect, $v$ is pure with
probability $(1-p)^k$. Since these events are mutually independent for
all $v\not\in\sigma$, $N'$ has a binomial distribution $\Bin(n-k,(1-p)^k)$.
Hence, letting $s=k/n=(1-\eps)\ln d/d$, we have
	\begin{eqnarray*}
	\Erw\brk{N'}&=&(n-k)(1-p)^k\sim(1-s)n\exp(-kp)\sim(1-s)n\exp\brk{-\frac{ds}{1-s^2}}\\
		&\geq&(1-s)n\exp\brk{-ds\bc{1+2s^2}}\geq0.99n d^{\eps-1},
	\end{eqnarray*}
provided that $d$ is sufficiently big.
Letting $\gamma=d^{-\eps/3}=o_d(1)$, we obtain from \Thm~\ref{chernoffbounds} (the Chernoff bound)
	\begin{eqnarray*}
	\pr\brk{N'<(1-\gamma) nd^{\eps-1}}&\leq&\exp\brk{-nd^{\eps/3-1}/4}\leq\exp\brk{-2n/d}
	\end{eqnarray*}
for $d$ large enough.
Together with~(\ref{eqP'knm}) this implies the first assertion.

To prove the second assertion, we need an upper bound on $N'$.
Once more by the Chernoff bound,
	\begin{eqnarray}\label{eqlemma:pure-graph1}
	\pr\brk{N'>(1+\gamma)nd^{\eps-1}}&\leq&\exp\brk{-nd^{\eps/3-1}/8}\leq\exp\brk{-2n/d}
	\end{eqnarray}
for $d$ large enough.
Let $Q$ be the set of $\sigma'$-pure vertices in $G'$.
Since each potential edge that does not link two vertices in $\sigma'$
is present in $G'$ with probability $p$ independently, given the value
of $N'$ the number $M'$ of edges spanned by $Q$ is binomially distributed
with mean $\bink{N'}2p$.
Therefore,
	$$\Erw\brk{M'|N'\leq(1+\gamma)nd^{\eps-1}}\leq
		\frac{(1+\gamma)^2n^2d^{2\eps-2}}{2}\cdot\frac{dn/2}{\bink{n}2-\bink{k}2}\leq\frac{1+3\gamma}2 nd^{2\eps-1},$$
provided that $d$ is large.
Hence, by the Chernoff bound and~(\ref{eqlemma:pure-graph1}),
	\begin{eqnarray}\nonumber
	\pr\brk{M'>\bc{\frac12+2\gamma} nd^{2\eps-1}}&\leq&
		\pr\brk{M'>\bc{\frac12+2\gamma} nd^{2\eps-1}|N'\leq(1+\gamma) nd^{\eps-1}}\\
			&&\qquad+\pr\brk{N'>(1+\gamma) nd^{\eps-1}}\nonumber\\
		&\leq&\exp\brk{-nd^{2\eps-1}/8}+\exp\brk{-2n/d}\leq2\exp\brk{-2n/d}
			\label{eqlemma:pure-graph2}
	\end{eqnarray}
for $d$ big.
Finally, the second assertion follows from~(\ref{eqP'knm}) and~(\ref{eqlemma:pure-graph2}).
\end{proof}

\noindent{\bf Proof of \Thm~\ref{theorem:non-maximal}.}
Suppose that $k=(1-\eps)n\ln d/d$. Let $(G,\sigma)$ be a pair
chosen from the distribution $\cP_k(n,m)$. Let $Q$ be the set
of $\sigma$-pure vertices and let $N,M$ be as in \Lem~\ref{lemma:pure-graph}.
Crucially, given $Q$, $N$, $M$, the induced subgraph $G\brk Q$
is just a uniformly random graph on $N$ vertices with $M$ edges,
because the conditioning only imposes the absence of $Q$-$\sigma$-edges.
In other words, $G\brk Q$ is nothing but a random graph $G(N,M)$.
We are going to use this observation to show that $G\brk Q$
contains a large independent set \whp\

Let $\cA$ be the event that $N\geq(1-o_d(1))d^{\eps-1}n$ and
$M\leq (\frac12+o_d(1))d^{2\eps-1}n$. Then by \Lem~\ref{lemma:pure-graph}
\begin{equation}\label{eqtheorem:non-maximal1}
	\pr_{\cP_k(n,m)}\brk{\cA}\geq1-\exp(-n/d).
\end{equation}
Given $\cA$, the average degree of $G\brk Q$ is
$$D=\frac{2M}N\leq(1+o_d(1))\frac{d^{2\eps-1}}{d^{\eps-1}}=(1+o_d(1))d^{\eps}.$$
Let $\cB$ be the event that $\alpha(G\brk Q)\geq(2-o_d(1))\frac{N\ln D}{D}$.
Since $G\brk Q$ is distributed as $G(N,M)$, \Cor~\ref{cor:ExistenceGnm} 
implies that
\begin{equation}\label{eqtheorem:non-maximal2}
	\pr_{\cP_k(n,m)}\brk{\cB|\cA}\geq1-\exp\left(-\frac{8n}{\epsilon^3 d \ln^3 d}\right).
\end{equation}
Combining~(\ref{eqtheorem:non-maximal1}) and~(\ref{eqtheorem:non-maximal2}) with \Thm~\ref{thrm:transfer-theorem}, we thus get
	\begin{equation}\label{eqtheorem:non-maximal3}
	\pr_{\cU_k(n,m)}\brk{\cA\cap \cB}=1-o(1).
	\end{equation}

\noindent
Now assume that $(G,\sigma)\in\cA\cap\cB$.
Let $I$ be the largest independent set of $G\brk Q$.
Then
	\begin{eqnarray}\label{eqtheorem:non-maximal4}
	\abs I&=&(1-o_d(1))\frac{2d^{\eps-1}n\cdot\ln(d^{\eps})}{d^{\eps}}=(1-o_d(1))\frac{2\eps\ln d}d=(1-o_d(1))\frac{2\eps k}{1-\eps}.
	\end{eqnarray}
Since $\sigma\cup I$ is independent, (\ref{eqtheorem:non-maximal4}) shows that $\sigma$ is
	$((2-o_d(1))\eps/(1-\eps),0)$-expandable.
Thus, the assertion follows from~(\ref{eqtheorem:non-maximal3}).
\qed

\section{Proof of Theorem \ref{theorem:maximal}}\label{section:theorem:maximal}

Let $\eps_d=3\ln\ln d/\ln d\ra0$.
In this section we assume that $k=(1+\eps)n\ln d/d$ with $\eps_d\leq \eps\leq1-\eps_d$,
	and that $d\geq d_0$ for some large enough constant $d_0>0$.
Assuming that $\gamma,\delta>0$ are reals such that
	\begin{equation}\label{eqsection:theorem:maximal}
	\gamma>\eps_d\qquad\mbox{and}\qquad\delta<\gamma+\frac{2(\eps-\eps_d)}{1+\eps},
	\end{equation}
we are going to show that in a pair $(G,\sigma)$ chosen from the distribution $\cU_k(n,m)$,
$\sigma$ is not $(\gamma,\delta)$-expandable.

To see why this is plausible, consider a pair $(G,\sigma)$ chosen from the distribution $\cP_k(n,m)$.
(The following argument is not actually needed for our proof of \Thm~\ref{theorem:maximal}; it is only included
	to facilitate understanding.)
Then for each vertex $v\not\in\sigma$ the \emph{expected} number of neighbours of $v$
inside of $\sigma$ is greater than $kd/n=(1+\eps)\ln d$.
Indeed, one could easily show that for each vertex $v$ the number of neighbours in $\sigma$
dominates a Poisson variable $\Po((1+\eps)\ln d)$.
Hence, the probability that $v$ is $\sigma$-pure is bounded by
	$\exp(-(1+\eps)\ln d)=d^{-\eps-1}$,
and thus the expected number of $\sigma$-pure vertices is $\leq n d^{-\eps-1}= o_d(1)\cdot k$.
In effect, in order to expand $\sigma$ significantly we would have to include some
vertices that are \emph{not} $\sigma$-pure.
But each such vertex would `displace' some other vertex from $\sigma$
	(by the very definition of $\sigma$-pure).
In fact, most vertices that are not $\sigma$-pure have several neighbours in $\sigma$,
and thus it seems impossible to expand $\sigma$ substantially without first removing
a significant share of its vertices.

To actually prove \Thm~\ref{theorem:maximal} we use a first moment argument.
We begin by analysing the planted model.

\begin{lemma}\label{lemma:maximality}
With $d\geq d_0$ sufficiently large and $k,\gamma,\delta$ as above, we have
\begin{displaymath}
P_{{\cal P}_k(n,m)}[\textrm{$\sigma$ is not $(\gamma,\delta)$-expandable }]\geq 1-\exp \left( -\frac{n}{d}\right).
\end{displaymath}
\end{lemma}
\begin{proof}
Let $s=k/n$.
For $(G,\sigma)$ chosen from the distribution $\cP_k(n,m)$,
let $X$ be the number of independent sets $\tau$ such that
	\begin{equation}\label{eqlemma:maximality1}
	|\tau|=(1+\gamma)k\mbox{ and }
	|\tau\cap\sigma|\geq(1-\delta)k.
	\end{equation}
The total number of ways to choose a set $\tau\subset V$ satisfying~(\ref{eqlemma:maximality1}) is
	\begin{eqnarray}\label{eqlemma:maximality2}
	\cH&=&\bink k{(1-\delta)k}\bink{n-k}{(\gamma-\delta)k}
	\end{eqnarray}
(first choose $(1-\delta)k$ vertices from $\sigma$, then choose the remaining $(1+\gamma)k-(1-\delta)k=(\gamma-\delta)k$
vertices from $V\setminus\sigma$).
Furthermore, for any $\tau\subset V$ satisfying~(\ref{eqlemma:maximality1}) the probability of being independent is
	\begin{eqnarray}\label{eqlemma:maximality3}
	\cP&=&\bink{\bink n2-\bink k2-\bink{(1+\gamma)k}2+\bink{(1-\delta)k}2}{m}/\bink{\bink{n}2-\bink k2}m
	\end{eqnarray}
Indeed, in order for both $\sigma$ and $\tau$ to be independent we have to forbid all edges
that connects two vertices in either set, and the number of potential such edges is
	$\bink{|\sigma|}2+\bink{|\tau|}2-\bink{|\sigma\cap\tau|}2$ by inclusion/exclusion.
This explains the numerator in~(\ref{eqlemma:maximality3}), and the denominator simply reflects that $G$ is chosen randomly
from all graphs in which $\sigma$ is independent.

Combining~(\ref{eqlemma:maximality2}) and~(\ref{eqlemma:maximality3}) and using the linearity of the expectation, we see that
	\begin{eqnarray}\label{eqlemma:maximality4}
	\Erw\brk X&=&\cH\cdot\cP.
	\end{eqnarray}
We are going to show that $\Erw\brk X$ and then apply Markov's inequality to obtain the lemma.

We begin by estimating $\cH$ and $\cP$ separately.
For $\cH$ we get
	\begin{eqnarray*}
	\cH&=&\bink k{\delta k}\bink{(1-s)n}{(\gamma+\delta)sn}\leq
			\bcfr{\eul}{\delta}^{\delta k}\bcfr{\eul(1-s)}{(\gamma+\delta)s}^{(\gamma+\delta)k}\\
		&=&\exp\brk{s\brk{\delta(1-\ln\delta)+(\gamma+\delta)\bc{1+\ln\bcfr{1-s}{(\gamma+\delta)s}}}n}\\
		&\leq&\exp\brk{s\brk{\delta(1-\ln\delta)+(\gamma+\delta)\bc{1-\ln(\gamma+\delta)-\ln s}}n}.
	\end{eqnarray*}
As we assume that $s\geq\ln d/d$
 and $\gamma\geq\eps_d\geq1/\ln d$ and $\delta\geq0$, we have
	$-\ln s\leq\ln d$ and $-\ln(\gamma+\delta)\leq\ln\ln d$.
Furthermore, the function $x\mapsto x(1-\ln x)$ is monotonically increasing for $x\leq1$.
Hence, if $\gamma+\delta\leq1$, then $\delta(1-\ln\delta)\leq(\gamma+\delta)\bc{1-\ln(\gamma+\delta)}$.
If, on the other hand, $\gamma+\delta>1$, then
	$\delta(1-\ln\delta)\leq1<\gamma+\delta$.
In either case we obtain
	\begin{eqnarray}\label{eqlemma:maximality5}
	\frac1n\ln\cH&\leq&	s(\gamma+\delta)(1+\ln\ln d-\ln d).
	\end{eqnarray}
With respect to $\cP$, we have 
	\begin{eqnarray*}
	\cE&=&\bink{\bink n2-\bink k2-\bink{(1+\gamma)k}2+\bink{(1-\delta)k}2}{m}/\bink{\bink{n}2-\bink k2}m\\
		&=&\prod_{j=0}^{m-1}\frac{\bink n2-\bink k2-\bink{(1+\gamma)k}2+\bink{(1-\delta)k}2-j}{\bink{n}2-\bink k2-j}
			\leq\bcfr{\bink n2-\bink k2-\bink{(1+\gamma)k}2+\bink{(1-\delta)k}2}{\bink{n}2-\bink k2}^m\\
		&=&O(1)\cdot\bcfr{1-s^2-(1+\gamma)^2s^2+(1-\delta)^2s^2}{1-s^2}^m
		=O(1)\cdot\bc{1-\frac{s^2(\gamma+\delta)(2+\gamma-\delta)}{1-s^2}}^m.
	\end{eqnarray*}
Since $m=dn/2$ and $d=(1+\eps)\ln d/d$, the elementary inequality $\ln(1-x)\leq-x$ yields
	\begin{eqnarray}\label{eqlemma:maximality6}
	\frac1n\ln\cE\leq\frac d2\ln\bc{1-s^2(\gamma+\delta)(2+\gamma-\delta)}
		\leq-s(\gamma+\delta)\bc{1+\frac{\gamma-\delta}{2}}(1+\eps)\ln d.
	\end{eqnarray}
Finally, plugging~(\ref{eqlemma:maximality5}) and~(\ref{eqlemma:maximality6}) into~(\ref{eqlemma:maximality4}), we get
for $d\geq d_0$ large enough
	\begin{eqnarray*}
	\frac1n\ln\Erw\brk X&=&\frac1n\ln\cH+\frac1n\ln\cE\leq
		s(\gamma+\delta)\brk{1+\ln\ln d-\ln d-\bc{1+\frac{\gamma-\delta}{2}}(1+\eps)\ln d}\\
		&\leq&s(\gamma+\delta)\brk{1+\ln\ln d-\bc{\eps+\frac{\gamma-\delta}{2}}\ln d}\\
		&\leq&s(\gamma+\delta)\brk{1+\ln\ln d-\frac{\eps_d}2\ln d}\qquad\qquad\quad\mbox{[by our assumption~(\ref{eqsection:theorem:maximal}) and $\gamma,\delta$]}\\
		&\leq&-s(\gamma+\delta)\qquad\qquad\qquad\qquad\qquad\qquad\qquad\quad\mbox{[as $\eps_d=3\ln\ln d/\ln d$]}\\
		&\leq&-s\eps_d\leq-1/d\qquad\qquad\qquad\qquad\qquad\qquad\quad\mbox{[as $\gamma\geq\eps$ and $s\geq \ln d/d$]}.
	\end{eqnarray*}
Thus, the assertion follows from Markov's inequality.
\end{proof}

\noindent
Theorem \ref{theorem:maximal} follows directly from \Lem~\ref{lemma:maximality} and \Thm~\ref{thrm:transfer-theorem}.

\section{Proof of Corollary \ref{cor:MixingTimeBound}}\label{sec:cor:MixingTimeBound}

Let $\eps_d\ra0$ slowly. Throughout this section we assume that
\begin{equation}\label{eq:themuglambda'}
	(1+\eps_d)\frac{\ln d}d\cdot n\leq\Erw\brk{\mu(G(n,m),\lambda)}\leq(2-\eps_d)\frac{\ln d}d\cdot n.
\end{equation}

\noindent
The proof of Corollary \ref{cor:MixingTimeBound} amounts to showing
that the Metropolis process can be ``trapped'' in a relatively
small group of independent sets and it escapes only after an
exponentially large number of steps. To be more specific, let
\begin{equation}\label{eq:TheK}
K=\left\{k:|\Erw\brk{\mu(G_{n,m},\lambda)}-k|\leq {4n}/{d} \right\}.
	\end{equation}
We show that $\bigcup_{k\in K}{\cal S}_k$ can be partitioned into
disconnected parts, i.e.  it is not possible for the process to
move from one part to another without using independent sets of
size much smaller than the minimum $k\in K$. However, we show that
once the process gets to a ``typical'' independent set in $\bigcup_{k\in K}{\cal S}_k$
it will need to wait for exponential time so as to escape by
visiting a small independent set.

Before showing Corollary \ref{cor:MixingTimeBound} we provide some
auxiliary results. The following proposition shows that for a given
parameter $\lambda$ the stationary distribution of the Metropolis
process concentrates on a small range of sizes of independent sets.

\begin{proposition}\label{prop:MostLikelyIS}
With probability at least $1-2\exp\left[-n/(2d^2\ln^4d)\right]$ the
random graph $G=G(n,m)$ has the following property.
\begin{quote}
For an independent set ${\cal I}$ chosen from the stationary distribution
of the Metropolis process on $G$ we have
	\begin{equation}\label{eq:TheK2}
	Pr[|{\cal I}|\notin K]\leq \exp\left( -n/d\right)
	\end{equation}
(where in~(\ref{eq:TheK2}) probability is taken over the choice of $\cI$ only).
\end{quote}
\end{proposition}
The proof of Proposition \ref{prop:MostLikelyIS} appears in Section 
\ref{sec:prop:MostLikelyIS}.

\begin{lemma}\label{lemma:ShateringTube}
\Whp\ the random graph $G=G(n,m)$ has the following property.
The set $\bigcup_{k\in K}{\cal S}_k(G)$
admits a partition into classes ${\cal C}_1, \ldots, {\cal C}_N$ such that the following three statements hold.
\begin{description}
\item[C1.] The distance between any two independent sets in different classes is at least $2$.
\item[C2.] For a random set $\cI$ chosen from the stationary distribution of the Metropolis process we have
	$$Pr[{\cal I}\in {\cal C}_i]\leq 5\exp\left(-n/(2d^2\ln^4 d)\right)
	\qquad \mbox{ for each }i\leq i\leq N.$$
\item[C3.] Furthermore, $Pr[{\cal I} \in \bigcup_{1\leq i\leq N}{\cal C}_i]\geq 
	1-5\exp\left(-n/(2d^2\ln^4 d)\right)$. 
\end{description}
\end{lemma}
The proof of Lemma \ref{lemma:ShateringTube} appears in
Section \ref{sec:lemma:ShateringTube}.

\medskip\noindent{\bf Proof of \Cor~\ref{cor:MixingTimeBound}:}
Let $K$ be as in (\ref{eq:TheK}) and assume that $G=G_{n,m}$ is such
that $\bigcup_{k\in K}{\cal S}_k(G)$ has a partition ${\cal C}_1, \ldots, 
{\cal C}_N$ satisfying {\bf C1--C3} in
Lemma~\ref{lemma:ShateringTube}.
We are going to show that the mixing time of the Metropolis process exceeds $\exp\bc{n/d^{3}}$.
The proof is by contradiction. Thus, assume that the mixing time
of the Metropolis process is $T\leq \exp\bc{n/d^3}$.
Let ${\cal I}_t$ be the state of the Metropolis process at time ($t\geq0$).

Let $t_1=n^{2}T$ and  $t_2=2n^{2}T$.
Since $T$ is the mixing time, for any $t_1\leq t\leq t_2$ the
 distribution of ${\cal I}_t$ is extremely close to the stationary distribution.
More precisely, if $\cI_\infty$ chosen from the stationary distribution, then for any $t\in [t_1, t_2]$ we have
	$$\norm{\cI_t-\cI_\infty}_{tv}\leq\exp\bc{-n^{2}}.$$
Therefore, {\bf C3} implies that for any $t\in [t_1, t_2]$,
\begin{displaymath}
Pr\left[{\cal I}_t\notin \bigcup_{1\leq i\leq N}{\cal C}_i \right]
\leq Pr\left[{\cal I}_\infty\notin \bigcup_{1\leq i\leq N}{\cal C}_i \right]+\norm{\cI_t-\cI_\infty}_{tv}\leq 6\exp \left[-n/(2d^2\ln^4d) \right].
\end{displaymath}
Applying the union bound, we get for $d\geq d_0$ large enough
\begin{equation}\label{eq:NotInRare}
Pr\left[\exists t_1\leq t \leq t_2: {\cal I}_t \notin \bigcup_{1\leq i\leq N}{\cal C}_i 
 \right]\leq 6\exp\left( -\frac{n}{2d^2\ln^4d}+n/d^3\right)\leq 
 \exp\left( -\frac{n}{3d^2\ln^4d}\right).
\end{equation}
In other words, we have shown that to get from ${\cal I}_{t_1}$
to ${\cal I}_{t_2}$, the Metropolis process very likely only
passes through independent  sets from $\bigcup_{1\leq i\leq N}{\cal C}_i$.

Most likely, the two independent sets ${\cal I}_{t_1}$, ${\cal I}_{t_2}$
belong to different classes of the partition $\cC_1,\ldots,\cC_N$,
because the time difference $t_2-t_1=n^2T$ is much bigger than the
mixing time $T$. Formally, if $\cI_\infty$ is chosen from the
stationary distribution and $i_{1}$ such that $\cI_{t_1}\in\cC_{i_{1}}$,
then by {\bf C2}
\begin{eqnarray}\label{eq:InDifferentClasses}
\pr\brk{\cI_{t_2}\in\cC_{i_{1}}}
   &\leq& \pr\brk{\cI_{\infty}\in\cC_{i_{1}}}+\norm{\cI_{t_2-t_1}-\cI_\infty}_{tv}
	\leq 2\exp(-n/(3d^2\ln^4 d)).
\end{eqnarray}
Combining~(\ref{eq:NotInRare}) and~(\ref{eq:InDifferentClasses}),
we thus get
\begin{equation}\label{eq:RandomOverlap}
	Pr[\exists i, j \in [N], i\neq j:{\cal I}_{t_1}\in {\cal C}_i \wedge {\cal I}_{t_2}\in {\cal C}_j]\geq 
			1-\exp(-n/(3d^2\ln^4 d)).
\end{equation}

\noindent
Thus, assume that there are two distinct $i,j\in [N]$ such that
${\cal I}_{t_1}\in {\cal C}_i$ and ${\cal I}_{t_2}\in {\cal C}_j$.
Let $t>t_1$ be the first time that ${\cal I}_t\notin {\cal C}_i$.
Then by definition of the Metropolis process, $dist({\cal I}_{t},
{\cal I}_{t-1})\leq 1$.  Consequently, ${\cal I}_t \notin \bigcup_{l\in N}{\cal C}_l$
because otherwise there would be two independent sets in different
classes at distance one. Thus,
\begin{displaymath}
Pr[\exists i, j \in [N], i\neq j:{\cal I}_{t_1}\in {\cal C}_i \wedge {\cal I}_{t_2}\in {\cal C}_j]\leq
Pr\left[\exists t_1\leq t \leq t_2: {\cal I}_t \notin \bigcup_{1\leq i\leq N}{\cal C}_i  \right],
\end{displaymath}
in contradiction to (\ref{eq:NotInRare}) and (\ref{eq:RandomOverlap}).
\qed

\subsection{Proof of Proposition \ref{prop:MostLikelyIS}}
\label{sec:prop:MostLikelyIS}
For a graph $G$, let 
$$R_G(k,\lambda)=|{\cal S}_k(G)| \lambda^k.$$ 
It is easy to deduce from the definition of  Metropolis process
(see e.g. \cite{jerrum-planted}) that for any set of  integers 
${\cal A}$ it holds that
$$Pr[|{\cal I}|\in {\cal A}]\propto \sum_{k\in {\cal A}} R_G(k,\lambda).$$
Therefore, we have
\begin{eqnarray}
Pr[|{\cal I}|\notin {\cal A}]&=&\frac{\sum_{k\notin {\cal A}}R_G(k,\lambda)}{\sum_{k}R_G(k,\lambda)}
\leq \frac{\sum_{k\notin {\cal A}}R_G(k,\lambda)}{\sum_{k\in {\cal A}}R_G(k,\lambda)}. 
\label{eq:ratio2boundforMetr}
\end{eqnarray}
Consider some $\lambda$ that satisfies (\ref{eq:themuglambda'}).
Then, Proposition \ref{prop:MostLikelyIS} will follow by bounding
appropriately the rightmost ratio above, for ${\cal A}=K$ (as
defined in (\ref{eq:TheK})) and $G$ being a typical instance of
$G(n,m)$.
\\ \vspace{-.3cm}

\noindent
{\bf Remark.}
Observe that when the graph $G$ is distributed as in $G(n,m)$ the
quantity $R_G$ is a random variable which depends {\em only on the
underlying graph.} 
\\ \vspace{-.3cm}

\noindent
Before proving the proposition we need some preliminary results.
With the parameter $\lambda>0$ and the expected degree $d$ in mind, 
for any $x\in (0,1)$ we define the following function:
\begin{displaymath}
f_{\lambda}(x)=-(x\ln x+(1-x)\ln(1-x))+\frac{d}{2}\ln(1-x^2)+x\ln \lambda.
\end{displaymath}
It is straightforward to verify that $\frac{1}{n}\ln E[R_G(k,\lambda)]\sim f_{\lambda}(k/n)$.
$f_{\lambda}(x)$ is twice differentiable, as a matter of fact it holds that
\begin{eqnarray}
f'_{\lambda}(x)&=&\ln(1-x)-\ln x-d\frac{x}{1-x^2}+\ln \lambda \label{eq:f-lambdaPrime} \\
f''_{\lambda}(x)&=&-\frac{1}{x(1-x)}-d\frac{1+x^2}{(1-x^2)^2}. \label{eq:f-lambdaDPrime}
\end{eqnarray}
For any $\lambda$ and $x\in (0,1)$ it holds that $f''_{\lambda}(x)<0$. 
That is, $f'_{\lambda}(x)$ is strictly decreasing. Furthermore, if 
for given $\lambda, d$ there exists $x_0\in (0,1)$ such that 
\begin{eqnarray}\label{eq:LambdaMaxCond}
\lambda=\frac{x_0}{1-x_0}\exp\left(d\frac{x_0}{1-x^2_0}\right),
\end{eqnarray}
then $f_{\lambda}(x_0)$ is a global maximum for $f_{\lambda}$.
Since $f'_{\lambda}(x)$ is strictly decreasing, for any given
$x'\in (0,1)$ and $d$, we can find unique $\lambda_0>0$ such 
that $f_{\lambda_0}(x)$ is maximized when $x=x'$.

\begin{claim}\label{claim:UnlikeIS}
Take $x_0\in (0,1)$ and let $\lambda$ be such that $f_{\lambda}(x)$
is maximized for $x=x_0$. Then for any $x$ such that $|x-x_0|=t$
it holds that
\begin{displaymath}
f_{\lambda}(x)\leq f_{\lambda}(x_0)-\frac{1}{2}t^2d.
\end{displaymath}
\end{claim}
\begin{proof}
From (\ref{eq:f-lambdaDPrime}) it is easy to show that for any $x\in (0,1)$,
it holds that $f''_{\lambda}(x)<-d$. Also, for any $x\in (0,1)$ we can find 
appropriate $\xi\in [(0,1)$  such that 
\begin{eqnarray}
f_{\lambda}(x)&=&f_{\lambda}(x_0)+(x-x_0)f'_{\lambda}(x_0)+\frac{(x-x_0)^2}{2}f''_{\lambda}(\xi)
\nonumber \\
&\leq& f_{\lambda}(x_0)-\frac{(x-x_0)^2}{2}d, \hspace*{3cm}\mbox{[as $f'_{\lambda}(x_0)=0$ and $f''_{\lambda}(x)<-d$]}
\nonumber
\end{eqnarray}
as promised.
\end{proof}

\noindent
Let $\lambda_c$ be such that $f_{\lambda_c}(x)$ is maximized for 
$x=(1+c)\ln d/d$.

\begin{lemma}\label{lemma:MassOfMaximum}
For $c\in [\epsilon_d,1-\epsilon_d]$ and $k=(1+c)\frac{\ln d}{d}n$, it holds that
\begin{displaymath}
Pr\left[R_{G(n,m)}(k,\lambda_{c})\leq \exp\left(-14n\sqrt{{\ln^5d}/{d^{3}}}\right)
\cdot E[R_{G(n,m)}(k,\lambda_{c})]  \right] \leq 
\exp\left[-n/(2d^2\ln^4d) \right].
\end{displaymath}
\end{lemma}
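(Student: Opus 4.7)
Because $\lambda_c$ is the argmax of the deterministic function $\lambda \mapsto E[R(k,\lambda)] = E|\cS_k(G(n,m))|\cdot\lambda^k$, both $\lambda_c$ and $\lambda_c^k$ are non-random, and hence
\[
\frac{R(k,\lambda_c)}{E[R(k,\lambda_c)]} \;=\; \frac{|\cS_k(G(n,m))|}{E|\cS_k(G(n,m))|}.
\]
The lemma is therefore equivalent to the quantitative estimate
\[
Pr\bc{|\cS_k(G(n,m))| \le e^{-14n\sqrt{\ln^5 d/d^3}}\cdot E|\cS_k(G(n,m))|} \;\le\; \exp\bc{-n\ln^2 d/d^2},
\]
which is precisely Proposition \ref{Lemma_gap} with the \whp\ lower tail quantified. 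My plan is to reopen the proof of that proposition and track the probability estimate rather than suppressing it.

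The proof of Proposition \ref{Lemma_gap} first passes from $G(n,m)$ to $G^*(n,m)$ via Lemmas \ref{lemma:model-equivalence} and \ref{lemma:expectation-equivalence}. Choosing the slack parameter $\delta$ in Lemma \ref{lemma:model-equivalence} of order $\ln d/d^{3/2}$, the additive transfer error $\exp(-m\delta^2/17)$ comes out at most $\exp(-n\ln^2 d/d^2)$, and Lemma \ref{lemma:expectation-equivalence} combined with this $\delta$ shifts $E|\cS_k|$ only by a multiplicative factor of order $\exp(n\ln^3 d/d^{5/2})$, comfortably absorbed into the $e^{-14n\sqrt{\ln^5 d/d^3}}$ slack.

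On $G^*(n,m)$ the key tool is Lemma \ref{lemma:concentration-1} with parameter $q = \exp(-n\ln^2 d/d^2)$. Theorem \ref{theorem:Reverse-Frieze} applied at the critical degree $\delta_k$ supplies both ingredients simultaneously: (i) $Pr[\alpha(G^*(n,\delta_k n/2)) < k] < \exp(-n\ln^2 d/d^2) = q$, whence $M_{k,q} \ge \delta_k n/2$; and (ii) $E[X_{n,\delta_k n/2}(k)] \le \exp(14n\sqrt{\ln^5 d/d^3})$, which legitimises the choice $h(n) = 2\exp(14n\sqrt{\ln^5 d/d^3})$. Feeding these into Lemma \ref{lemma:concentration-1} yields
\[
Pr\bc{X_{n,m}(k) < \tfrac12 e^{-14n\sqrt{\ln^5 d/d^3}}\cdot E[X_{n,m}(k)]} \;\le\; 2\exp\bc{-n\ln^2 d/d^2},
\]
and the constants $\tfrac12$ and $2$ are absorbed either into the exponent $14$ of the lower bound or into the exponent of the probability bound by slightly tightening the admissible range of $d$.

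The only check I foresee is the hypothesis $m\le M_{k,q}$ of Lemma \ref{lemma:concentration-1}, i.e., that $d \le \delta_k$ for $k = (1+c)\frac{\ln d}{d}n$; for $c$ bounded away from $1$ this follows at once from Definition \ref{def:deltak}, and the borderline case $c$ near $1$ is handled by imposing $c \le 1-\eps_d$ as in Proposition \ref{Lemma_gap}. Beyond that, the argument is pure bookkeeping on top of the Proposition \ref{Lemma_gap} proof.
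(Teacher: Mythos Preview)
Your proposal is correct and follows the same line as the paper: since $\lambda_c$ is deterministic, the ratio $R(k,\lambda_c)/E[R(k,\lambda_c)]$ equals $|\cS_k(G(n,m))|/E|\cS_k(G(n,m))|$, and the claim reduces to the concentration statement underlying Proposition~\ref{Lemma_gap}. Your write-up is in fact more complete than the paper's, which simply cites Proposition~\ref{Lemma_gap} (stated only as a ``\whp'' result) without making explicit that the failure probability $\exp(-n\ln^2 d/d^2)$ comes from Theorem~\ref{theorem:Reverse-Frieze} fed through Lemma~\ref{lemma:concentration-1}; you correctly trace this.
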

\begin{proof}
The lemma follows directly from Proposition  \ref{Lemma_gap}.
\end{proof}

\begin{lemma}\label{lemma:AlmostMostLikelyIS}
For $c\in [\epsilon_d,1-\epsilon_d]$,  let $k=(1+c)\frac{\ln d}{d}n$
and
$${\cal R}_{c}=\exp\left(-14n\sqrt{{\ln^5 d}/{d^{3}}}\right)E[R_{G(n,m)}(k,\lambda_{c})].$$ 
It holds that
\begin{displaymath}
Pr\left[ \sum_{k': |k-k'|>\frac{1.9 n}{d}}R(k', \lambda_{c})\geq
\exp\left(-n/d \right) {\cal R}_{c} \right]\leq \exp\left(-n/(2d) \right).
\end{displaymath}
\end{lemma}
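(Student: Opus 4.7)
The plan is to apply Markov's inequality to $\Erw\brk{\sum_{|k'-k|>2n/d}R(k',\lambda_c)}$, exploiting the fact that $\Erw\brk{R(\cdot,\lambda_c)}$ is sharply peaked at $k'=k$ by the defining property of $\lambda_c$.

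First I would write $\Erw\brk{R(k',\lambda_c)} = \lambda_c^{k'}\,\Erw|\cS_{k'}(G(n,m))|$ and use \Lem~\ref{lemma:expectation-equivalence} together with \Lem~\ref{lemma:expectation} to obtain an explicit asymptotic formula. Setting $s'=k'/n$, the quantity $\frac{1}{n}\ln\Erw\brk{R(k',\lambda_c)}$ equals, up to $o(1)$,
$$\phi(s') \;=\; s'\ln\lambda_c - s'\ln s' - (1-s')\ln(1-s') + \tfrac{d}{2}\ln(1-s'^2).$$
By the definition of $\lambda_c$, $\phi'(s_k)=0$ at $s_k = k/n$, and a direct computation gives $\phi''(s_k) = -(1+o_d(1))\,d$: the edge term $\tfrac{d}{2}\ln(1-s'^2)$ supplies a second derivative of order $d$, while the entropy contributes $-1/s_k - 1/(1-s_k)$, of smaller order $d/\ln d$. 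A second-order Taylor expansion therefore gives, in a linear neighbourhood of $s_k$,
$$\Erw\brk{R(k',\lambda_c)} \;\le\; \exp\!\bc{-\tfrac{d}{2n}(k'-k)^2(1+o_d(1))}\,\Erw\brk{R(k,\lambda_c)}.$$

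Second, I would sum this Gaussian tail over $|k'-k|>2n/d$. The boundary term dominates, yielding $\sum_{|j|>2n/d}\exp(-dj^2/(2n)) = O(\exp(-2n/d))$, while values of $k'$ outside the Taylor regime are handled by crude pointwise estimates on $\binom{n}{k'}(1-(k'/n)^2)^m$. Substituting $R_0(\lambda_c) = \exp(-14n\sqrt{\ln^5 d/d^3})\,\Erw\brk{R(k,\lambda_c)}$ and applying Markov's inequality then yields
$$\Pr{\sum_{|k'-k|>2n/d}R(k',\lambda_c) \ge e^{-n/d}R_0(\lambda_c)} \;\le\; C\,\exp\!\bc{-\tfrac{n}{d}+14n\sqrt{\ln^5 d/d^3}},$$
and since $14\sqrt{\ln^5 d/d^3} = o(1/d)$ for large $d$, the bound is exponentially small at a rate $\Omega(n/d)$.

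The main obstacle is obtaining the precise exponent $3n/d$ claimed in the lemma: the above first-moment argument gives only $\exp(-(1-o_d(1))n/d)$. Closing the gap to $\exp(-3n/d)$ requires either taking a wider exclusion window (replacing $2n/d$ by $\alpha n/d$ with $\alpha>2$) together with a sharper Gaussian-tail evaluation, refining the Taylor expansion to higher order to capture additional curvature away from $s_k$, or replacing Markov by a concentration inequality for each $R(k',\lambda_c)$ individually combined with a union bound. A secondary technical point is that $\lambda_c$ and $s_k$ both depend on $d$, so one must verify that $\phi''(s_k)$ is indeed of order $d$ (dominated by the edge-factor contribution) rather than the smaller $1/s_k = d/((1+c)\ln d)$, since this is what produces the essential $\exp(-2n/d)$ boundary factor driving the whole computation.
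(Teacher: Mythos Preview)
Your approach is the same as the paper's: Taylor-expand $\frac{1}{n}\ln\Erw\brk{R(k',\lambda_c)}$ to second order around its maximum and apply Markov's inequality. The paper carries this out in the reparametrisation $k_{-\epsilon}=\frac{2n}{d}(\ln d-\ln\ln d+1-\ln 2-\epsilon)$ rather than directly in $s'=k'/n$, arriving at $f(\epsilon)-f(\epsilon_0)\le\frac{2}{d}\bc{5/\ln d - t^2}$ with $t=\epsilon-\epsilon_0$; since $|k'-k|=2n/d$ corresponds to $|t|=1$, this is the same $\exp\bc{-(2-o_d(1))n/d}$ decay at the boundary that your computation of $\phi''(s_k)\sim -d$ produces. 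The paper then applies Markov to each individual $R(k',\lambda_c)$ rather than to the sum, but this is cosmetic.

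Your worry about the constant $3$ is not a defect of your argument: the paper's own Markov step also yields only a rate of order $n/d$ (its displayed bound is $\exp(-3n/(2d))$ for a single $k'$, and the intermediate inequality $\exp(-n/d)R_0\ge\exp(-15n\sqrt{\ln^5 d/d^3})\Erw\brk{R(k,\lambda_c)}$ is in fact false for large $d$, since $1/d\gg\sqrt{\ln^5 d/d^3}$). For the downstream application (Proposition~\ref{prop:MostLikelyIS}) any rate $\exp(-cn/d)$ with fixed $c>0$ is ample, so nothing further is needed.
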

\begin{proof}
Observe that for any integer $0\leq k'\leq 2n\ln d/d$ it holds that
$E[R_{G(n,m)}(k',\lambda_c)]=\exp\left[f(k'/n)n+o(n)\right]$. 
Since the function $f_{\lambda_c}(x)$ is increasing for every 
$0\leq x < (1+c)\ln d/d $ and decreasing for $(1+c)\ln d/d<x<1$, for 
$k_0=k-1.9n/d$ and sufficiently large $n$ it holds  that
\begin{eqnarray}\label{eq:k_0Bound}
E[R_{G(n,m)}(k_0,\lambda_c)]\geq \max_{k':|k'- k|>1.9n/d}\left\{E[R_{G(n,m)}(k',\lambda_c)]\right\}.
\end{eqnarray}
Furthermore, using Claim \ref{claim:UnlikeIS} we get that
\begin{eqnarray}\label{eq:ERk_0Mass}
E[R_{G(n,m)}(k_0,\lambda_c)]\leq E[R_{G(n,m)}(k,\lambda_c)]\exp\left(-\frac{1.8n}{d}+o(n)\right).
\end{eqnarray}
Let $Q=\sum_{k': |k-k'|>\frac{1.9n}{d}}R(k', \lambda_{c})$. 
It holds that
\begin{eqnarray}
E[Q]&=&\sum_{k': |k-k'|>\frac{1.9n}{d}}E[R(k', \lambda_{c})] 
\nonumber \\
&\leq&  n E[R_{G(n,m)}(k_0,\lambda_c)] \hspace*{5.8cm} \mbox{[from (\ref{eq:k_0Bound})]} 
\nonumber \\
&\leq & E[R_{G(n,m)}(k,\lambda_c)]\exp\left(-\frac{1.8n}{d}+o(n)\right).
\hspace*{3cm} \mbox{[from (\ref{eq:ERk_0Mass})]} 
\label{eq:MassOfEQ}
\end{eqnarray}
The lemma follows by applying Markov's inequality. That is,
for sufficiently large $d$ it holds that
\begin{eqnarray}
Pr\left[Q\geq \exp\left(-n/d \right) {\cal R}_{c} \right]&\leq &
Pr\left[Q\geq E[Q]\exp\left(\frac{n}{2d}\right)\right] \hspace*{2.4cm} \mbox{[from (\ref{eq:MassOfEQ})]}
\nonumber\\
&\leq& \exp\left(-\frac{n}{2d}\right), \hspace*{2.4cm} \mbox{[from Markov's inequality]}\nonumber
\end{eqnarray}
as promised.
\end{proof}

\begin{propositionproof}{\ref{prop:MostLikelyIS}}
Let $c\in (\epsilon_d,1-\epsilon_d)$, for $\epsilon_d\to 0$.

Observe that quantity $\mu(G,\lambda)$ for fixed $\lambda$ and
$G$ distributed as in $G(n,m)$ is a random variable which depends
only on the graph $G$.
We are going to show that for $\lambda_c$ it holds that
\begin{equation}\label{eq:MuLambdaCBound}
Pr\left[\left|\mu(G(n,m),\lambda_c)-(1+c)\frac{\ln d}{d}n\right|
> \frac{1.95n}{d}
\right] \leq \exp\left[-n/(2d)\right].
\end{equation}
Observe that once we have the above tail bound,  the proposition
follows easily from Lemma \ref{lemma:AlmostMostLikelyIS}. In particular
(\ref{eq:MuLambdaCBound}) implies that 
\begin{equation}\label{eq:PropArg1}
\left|E[\mu(G(n,m),\lambda_c)]-(1+c)\frac{\ln d}{d}n\right|\leq \frac{1.95n}{d} +n\exp\left[-n/(2d)\right].
\end{equation}
Also, from Lemma \ref{lemma:AlmostMostLikelyIS} and (\ref{eq:ratio2boundforMetr})
we have the following: 
Consider the Metropolis process with underlying graph $G(n,m)$
and parameter $\lambda_c$. Then, with probability at least $1-\exp(-n/(2d))$
over the graph instances $G(n,m)$, if we choose ${\cal I}$ according
to the stationary distribution of the Metropolis process, then 
\begin{equation}\label{eq:PropArg2}
Pr[{\cal I}\notin \hat{K}]\leq \exp\left(-n/d\right),
\end{equation}
where $\hat{K}=\{k\in \mathbb{N}:|k-(1+c)\frac{\ln d}{d}n|\leq \frac{1.9n}{d}\}$.
The proposition follows from (\ref{eq:PropArg1}) and (\ref{eq:PropArg2}).

It remains to show (\ref{eq:MuLambdaCBound}). By definition we have
that for any fixed graph $G$ it holds that 
$\mu(G,\lambda)=\frac{1}{Z(G,\lambda)}\sum_{k=1}^nkR_G(k,\lambda)$,
where $Z(G,\lambda)=\sum_{k=1}^nR_G(k,\lambda)$. From Lemma
\ref{lemma:AlmostMostLikelyIS} we have that with probability at least
$1-\exp\left[-n/(2d)\right]$ over the graph instances $G(n,m)$ 
it holds that
\begin{equation}\label{eq:ZGLBoundWHP}
0\leq Z(G(n,m),\lambda_c)-\sum_{k\in \hat{K}}R_{G(n,m)}(k,\lambda_c)\leq \exp\left(-n/d\right)\left(\sum_{k\in \hat{K}}R_{G(n,m)}(k,\lambda_c)\right)
\end{equation}
and 
\begin{equation}\label{eq:ZGLKBoundWHP}
0\leq \sum_{k=0}^nk R_{G(n,m)}(k,\lambda_c) -\sum_{k\in \hat{K} }kR_{G(n,m)}(k,\lambda_c)
\leq n\exp\left(-n/(2d)\right)\left(\sum_{k\in \hat{K}}kR_{G(n,m)}(k,\lambda_c)\right).
\end{equation}
Combining (\ref{eq:ZGLBoundWHP}) and (\ref{eq:ZGLKBoundWHP})
we get that with probability at least $1-\exp\left[-n/(2d)\right]$
over $G(n,m)$ it holds that
\begin{eqnarray}
\mu(G(n,m),\lambda_c)=(1+r)\sum_{k\in \hat{K}}k\frac{R_{G(n,m)}(k,\lambda_c)}{\sum_{k\in \hat{K}}R_{G(n,m)}(k,\lambda_c)},
\nonumber
\end{eqnarray}
for some $|r|\leq 2n\exp\left(-n/(2d)\right)$. Then, it is elementary to verify that
the summation on the r.h.s. is a convex combination of values of $k$ in
$K$. That is, the summation is at most $\max\{k\in \hat{K}\}$ and at 
least $\min\{k\in \hat{K}\}$. Then (\ref{eq:MuLambdaCBound}) follows.
\end{propositionproof}

\subsection{Proof of Lemma \ref{lemma:ShateringTube}}\label{sec:lemma:ShateringTube}

As in (\ref{eqeventZdk}) let 
\begin{displaymath}	
\cZ_{d,k}=\cbc{(G,\sigma)\in\Lambda_k(n,m):|\cS_k(G)|\geq\Erw|\cS_k(G(n,m))|\cdot\exp\bc{-14n\sqrt{\ln^5d/d^3}}}.
\end{displaymath}

\begin{lemma}\label{lemma:ShateringTubeTest}
Let $(G, \sigma)\in \Lambda_k(n,m)$ be distributed as in ${\cal U}_k(n,m)$,
for $k\in K $, where $K$ and $\mu(G,\lambda)$ are as in (\ref{eq:TheK}) and 
(\ref{eq:themuglambda}), respectively.
The set $\bigcup_{k\in K}{\cal S}_k(G)$ admits a partition into classes 
${\cal C}_1, \ldots, {\cal C}_N$ such that
\begin{enumerate}
\item $Pr[\sigma \in {\cal C}_i|\cZ_{d,k}]\leq\exp[-n/(2d^{1.2})]$, for any $i \in [N]$
\item $Pr[\sigma \notin \bigcup_{i\in [N]} {\cal C}_i|\cZ_{d,k}]\leq\exp(-n/d)$
\item The distance between two independent sets in different classes is at least $2$.
\end{enumerate}
\end{lemma}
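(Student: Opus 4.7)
The plan is to lift the shattering analysis for a single layer $\cS_k(G)$ from Lemma \ref{lemma:cluster-uniform} to the full tube $\bigcup_{k'\in K}\cS_{k'}(G)$, and then read off the required partition as the connected components of the Hamming-$1$ adjacency graph on that tube.

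The first and main step is to prove an extended shattering statement. Working in the planted model $\cP_k(n,m)$, I would show that there exist constants $0 < y_1 < y_2 < (k/n)^2$ such that, with probability at least $1-\exp\bc{-2n\ln d/d}$ over $(G,\sigma)\sim\cP_k(n,m)$: (a) no $\tau \in \bigcup_{k'\in K}\cS_{k'}(G)$ has $|\sigma\cap\tau|/n \in [y_1,y_2]$, and (b) $\bigl|\cbc{\tau \in \bigcup_{k'\in K}\cS_{k'}(G): |\sigma\cap\tau|/n > y_2}\bigr| \leq \exp\bc{Cn\ln d/d}$ for a suitable absolute constant $C$. Both bounds come from first-moment computations parallel to that in the proof of Lemma \ref{lemma:cluster-planted}. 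The entropy factor $\binom{k}{xn}\binom{n-k}{k'-xn}$ and the edge-probability factor $\bc{(1-s^2-(s')^2+x^2)/(1-s^2)}^m$ are adjusted for $s'=k'/n\neq s=k/n$, but since $|s-s'|=O(1/d)$ for $k'\in K$, the conclusion $n^{-1}\ln E[A(x)]\leq -\ln d/d$ of that proof persists uniformly over $k'\in K$ and $x\in[y_1,y_2]$. Summing over $|K|\leq 4n/d$ choices of $k'$ and $O(n)$ choices of $xn$ costs only polynomial factors.

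The second step is to read off the partition. Given the graph $G$, take the connected components of the graph whose vertex set is $\bigcup_{k'\in K}\cS_{k'}(G)$ and whose edges join pairs at Hamming distance $1$. Keep in the partition only those components that contain at least one ``good'' $\sigma$---i.e., a $\sigma$ for which the extended shattering statement of the first step holds. The discarded components lie entirely in the ``bad'' set.

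The third step is to verify the three properties. Property 3 is immediate from the definition of the components. For Property 1, fix a good $\sigma$ in a kept component $\cC$ and any $\tau\in\cC$. By connectivity there is a path $\sigma=\pi_0,\pi_1,\ldots,\pi_L=\tau$ in $\cC$, along which $|\sigma\cap\pi_j|$ changes by at most $1$ per step; the gap (a) forbids the sequence $|\sigma\cap\pi_j|/n$ from entering $[y_1,y_2]$, so $|\sigma\cap\pi_j|/n>y_2$ throughout. Hence $\cC\subseteq\cbc{\tau:|\sigma\cap\tau|/n>y_2}$, and (b) gives $|\cC|\leq\exp\bc{Cn\ln d/d}$. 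Combined with the lower bound $|\cS_k|\geq\exp\bc{\Omega(n\ln^2 d/d)}$ coming from Lemma \ref{lemma:expectation} and Proposition \ref{Lemma_gap}, this yields $|\cC\cap\cS_k|/|\cS_k|\leq\exp\bc{-n\ln d/d}$. For Property 2, the unclassified sets at size $k$ are exactly the bad ones, whose fraction is bounded by the planted-model probability from the first step times the transfer factor $\exp\bc{14n\sqrt{\ln^5 d/d^3}}=\exp\bc{o(n\ln d/d)}$ from Theorem \ref{thrm:transfer-theorem}, giving $\leq\exp\bc{-n\ln d/d}$.

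The main obstacle is the first step: verifying the uniform first-moment bound in the extended shattering simultaneously across $k'\in K$ and overlap values $x\in[y_1,y_2]$. When $k'\neq k$ the entropy and edge-probability factors both receive corrections of order $|s-s'|=O(1/d)$, and one needs to track that these corrections do not erode the bound $n^{-1}\ln E[A(x)]\leq-\ln d/d$ in the proof of Lemma \ref{lemma:cluster-planted}. The computation is a direct extension of that proof, but with careful bookkeeping of the $O(1/\ln d)$ corrections in $s'$.
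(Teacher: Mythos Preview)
Your proposal is correct and follows essentially the same route as the paper. The paper reduces Lemma~\ref{lemma:ShateringTubeTest} to a planted-model statement (Lemma~\ref{lemma:ShateringTubePlanted}) that is precisely your ``extended shattering'' step: a first-moment computation over all $t\in K$ showing (a) an overlap gap $[y_1,y_2]$ and (b) a bound $\exp(O(n\ln d/d))$ on the number of $\tau$ in the tube with large overlap, with the key observation that writing $t=ak$ the perturbation $|a-1|=O(1/\ln d)$ leaves the exponent negative; it then invokes the transfer theorem (Theorem~\ref{thrm:transfer-theorem}) and the lower bound on $|\cS_k(G)|$ from Proposition~\ref{Lemma_gap} exactly as you do. Your explicit construction of the classes as Hamming-$1$ connected components, and the path argument showing each kept component sits entirely in the high-overlap region of any good $\sigma$ it contains, spell out details the paper leaves implicit, but the underlying argument is the same.
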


\begin{lemmaproof}{\ref{lemma:ShateringTube}}
{\bf (Given Lemma \ref{lemma:ShateringTubeTest}):}
Consider $G(n,m)$ and the Metropolis process with parameter 
$\lambda$, for $\lambda$ as in (\ref{eq:themuglambda'}).
Let the independent set ${\cal I}$ be chosen according to 
the stationary distribution of the process. 

Conditional that $|{\cal I}|=k$, ${\cal I}$ is distributed
uniformly at random in ${\cal S}_k(G(n,m))$, for any $k$.
For any $A\subset 2^{[n]}$ it holds that
\begin{eqnarray}
Pr[{\cal I}\in A |\cZ_{d,k} ]&\leq& Pr[{\cal I}\in A|\cZ_{d,k},|{\cal I}|\in K]+Pr[|{\cal I}|\notin K|\cZ_{d,k}]
\nonumber \\ 
&\leq&\displaystyle \max_{k\in K}Pr[{\cal I}\in A |\cZ_{d,k},|{\cal I}|=k]+Pr[|{\cal I}|\notin K|\cZ_{d,k}].
\nonumber
\end{eqnarray}
the last inequality follows from the fact that $Pr[{\cal I}\in A|\cZ_{d,k},|{\cal I}|\in K]$
is a convex combination of $Pr[{\cal I}\in A |\cZ_{d,k},|{\cal I}|=j]$ for $j\in K$.
Also, it holds  that
\begin{eqnarray}
Pr[|{\cal I}|\notin K|\cZ_{d,k}]&\leq &
\frac{Pr[|{\cal I}|\notin K]} {Pr[\cZ_{d,k}]}\leq 2Pr[|{\cal I}|\notin K] \qquad  \mbox{[from Proposition \ref{Lemma_gap}]} \nonumber\\
&\leq& 4\exp\left(-n/(2d^2\ln^4 d)\right) \hspace{1.6cm} \mbox{[from Proposition \ref{prop:MostLikelyIS}].}\nonumber
\end{eqnarray}
Hence,
\begin{equation}\label{eq:1896A}
Pr[{\cal I}\in A |\cZ_{d,k} ]\leq \max_{k\in K}Pr[{\cal I}\in A |\cZ_{d,k},|{\cal I}|=k]+
4\exp\left(-n/(2d^2\ln^4 d)\right).
\end{equation}
Also, from the law of total probability we get that
\begin{eqnarray}
Pr[{\cal I}\in A]&\leq& Pr[{\cal I}\in A |\cZ_{d,k}]+Pr[\cZ^c_{d,k}]
\hspace*{2.3cm}  \mbox{[$\cZ^c_{d,k}$ is the complement of $\cZ_{d,k}$]}
\nonumber \\
&\leq &  Pr[{\cal I}\in A |\cZ_{d,k}]+\exp\left(-n/(2d^2\ln^4 d)\right) 
\hspace*{2cm}\mbox{[from Proposition \ref{Lemma_gap}]} \nonumber \\
&\leq&\max_{k\in K}Pr[{\cal I}\in A |\cZ_{d,k},|{\cal I}|=k]+
5\exp\left(-n/(2d^2\ln^4 d)\right) \hspace*{1.3cm}\mbox{[from (\ref{eq:1896A})]}. \label{eq:1896Z}
\end{eqnarray}
\noindent
The statement $\mathbf C_1$  holds  from the statement 3 in Lemma
\ref{lemma:ShateringTubeTest}. Setting   $A={\cal C}_i$ in (\ref{eq:1896Z})
and using Statement 1 from Lemma \ref{lemma:ShateringTubeTest}, we
get the statement $\mathbf C_2$. Similarly, statement $\mathbf C_3$
follows by setting $A=\left(\bigcup_{k\in K}{\cal S}_k\right)\backslash
\left(\bigcup_{i\in[N]}{\cal C}_i\right)$ in (\ref{eq:1896Z})
and  using Statement 2  from Lemma \ref{lemma:ShateringTubeTest}.
\end{lemmaproof}

\subsection{Proof of Lemma \ref{lemma:ShateringTubeTest}}

Consider a uniform pair $(G,\sigma)\in \Lambda_{k}(n,m)$, for some
$k\in K$. For fixed $0<\beta<1$, and $|\gamma|<1$, let $Z_{k,\beta,
\gamma}$ be the number of independent sets $\tau\in {\cal S}_{(1+\gamma)k}
(G)$ such that $|\sigma\cap \tau|=(1-\beta)k$. Also, for $0<\beta_1<\beta_2<1$
consider  $\vec{\beta}=[\beta_1,\beta_2]$ and let the independent
set $\sigma$ be called $(\vec{\beta},\gamma,\delta)$-{\em good} if
$G$ has no independent set $\tau$ such 
\begin{itemize}
	\item $\tau\in S_{k,{\gamma}}=\bigcup_{t=(1-\gamma)\cdot k}^{(1+\gamma)k}{\cal S}_{t}(G)$
	\item $(1-\beta_2)k<|\sigma\cap \tau|<(1-\beta_1)k$
\end{itemize}
while $|\{\tau'\in S_{k,{\gamma}}:(\sigma\cap \tau')>(1-\beta_1)k\}|<	\exp\left(-\delta n\right)
|{\cal S}_k(G)|$.

\begin{lemma}\label{lemma:PsiXiExpctation}
For $\psi(x)$ is as defined in statement of Proposition \ref{Prop_calcShattering}
and $s=k/n$, it holds that 
$$\frac{1}{n}\ln E_{{\cal P}_k(n,m)}[Z_{k,\beta,\gamma}]\leq \psi(\beta)+\xi(\beta,\gamma)+o(1),$$ 
where 
\begin{displaymath}
\xi (x,y)=
s[-x \ln(1+y/x)
+y (1-\ln s-\ln(x+y))]+
\frac{d}{2}\ln \left(1-s^2\frac{2y+y^2}{1-(1+2x-x^2)s^2}\right).
\end{displaymath}
 
\end{lemma}
\begin{proof}
Let $\tau\subset V$ be such that $|\tau|=(1+\gamma)k$ and $|\sigma\cap \tau|=(1-\beta)k$.
With application of inclusion/exclusion principle we get that the total number of graphs
with $m$ edges in which $\sigma$ and $\tau$ are independent sets equals
\begin{displaymath}
{{n \choose 2}-{k\choose 2}-{(1+\gamma)k \choose 2}+ {(1-\beta)k \choose 2} \choose m}.
\end{displaymath}
Since $G$ is chosen uniformly at random among all ${{n \choose 2}-{k\choose 2} \choose m}$
graphs on $n$ vertices and $m$ edges such that $\sigma$ is an independent set, we get that
\begin{eqnarray*}
P[\tau \textrm{ is independent}]&=& 
{{n \choose 2}-{k\choose 2}-{(1+\gamma)k \choose 2}+ {(1-\beta)k \choose 2} \choose m}/
{{n \choose 2}-{k\choose 2} \choose m}\\ 
&=&
\prod_{i=0}^{m-1}\frac{{n \choose 2}-{k\choose 2}-{(1+\gamma)k \choose 2}+ {(1-\beta)k \choose 2}-i}{{n \choose 2}-{k\choose 2}-i} \\ 
&\leq &  
\left( \frac{{n \choose 2}-{k\choose 2}-{(1+\gamma)k \choose 2}+ {(1-\beta)k \choose 2}}
{{n \choose 2}-{k\choose 2}} \right)^m
\\ 
&\leq & 
\left( 1-\frac{(1+\gamma)^2k^2- (1-\beta)^2k^2 }{n^2-k^2} +O(1/n)\right)^m \\ 
&\leq & 
O(1)\cdot \left( 1-s^2\frac{(1+\gamma)^2- (1-\beta)^2 }
{1-s^2}\right)^m \qquad \mbox{[as $k=sn$].}
\end{eqnarray*}
The total number of ways to choose a set of vertices $\tau$ of 
size $(1+\gamma)k$ such that $|\sigma\cap \tau|=(1-\beta)k$ is 
equal to ${k \choose (1-\beta)k}{n-k \choose (\gamma+\beta)k}$. 
By the linearity of expectation, we get that
\begin{eqnarray}
E[Z_{k,\beta,\gamma}]&=& \displaystyle O(1)\cdot {k \choose (1-\beta)k}\cdot 
{n-k \choose (\gamma+\beta)k} \cdot \left( 1-s^2\frac{(1+\gamma)^2- (1-\beta)^2 }
{1-s^2}\right)^m \nonumber\\ 
&\leq & \displaystyle O(1)\cdot {k \choose \beta k}\cdot 
{n-k \choose (\gamma+\beta)k} \cdot \left( 1-s^2\frac{(1+\gamma)^2- (1-\beta)^2 }
{1-s^2}\right)^m \nonumber \\ 
&\leq & \displaystyle O(1)\cdot \left(\frac{e}{\beta }\right)^{\beta k} \cdot 
\left( \frac{(1-s)e}{(\gamma+\beta)s} \right)^{(\gamma+\beta)k}\cdot \left( 1-s^2\frac{(1+\gamma)^2- (1-\beta)^2 }
{1-s^2}\right)^{dn/2}\nonumber \\
&\leq & \displaystyle O(1)\cdot \left(\frac{e}{\beta }\right)^{\beta k} \cdot 
\left( \frac{e}{(\gamma+\beta)s} \right)^{(\gamma+\beta)k}\cdot \left( 1-s^2\frac{(1+\gamma)^2- (1-\beta)^2 }{1-s^2}\right)^{dn/2}.\label{eq:EXbgBound}
\end{eqnarray}
By definition (see  Proposition \ref{Prop_calcShattering}), 
it holds that 
\begin{equation}\label{eq:psibunfold}
\exp\left(\psi(\beta)n\right)=\left(\frac{e}{\beta }\right)^{\beta k}
\left( \frac{e}{\beta s} \right)^{\beta k}
\left( 1-s^2\frac{1- (1-\beta)^2 }{1-s^2}\right)^{dn/2}.
\end{equation}
Combining (\ref{eq:EXbgBound}) and (\ref{eq:psibunfold}) we get that
\begin{equation}\label{eq:ratioEZbgVsexppsib}
\frac{E[Z_{k,\beta,\gamma}]}{\exp\left(\psi(\beta) n\right)}\leq
O(1)\left( \frac{\beta}{\beta+\gamma} \right)^{\beta k}
\left( \frac{e}{(\gamma+\beta)s} \right)^{\gamma k}
\left( 1-s^2\frac{2\gamma+\gamma^2}{1-(2-(1-\beta)^2)s^2}\right)^{dn/2},
\end{equation}
since 
\begin{displaymath}
\begin{array}{c}
\left( \frac{(1-s)e}{(\gamma+\beta)s} \right)^{(\gamma+\beta)k}/
\left( \frac{(1-s)e}{(\gamma+\beta)s} \right)^{\beta k}=
\left( \frac{\beta}{\beta+\gamma} \right)^{\beta k}
\left( \frac{(1-s)e}{(\gamma+\beta)s} \right)^{\gamma k} \qquad \textrm{and}
\\ \vspace{-.3cm}\\
\left( 1-s^2\frac{(1+\gamma)^2- (1-\beta)^2 }{1-s^2}\right)^{dn/2}/
\left( 1-s^2\frac{1- (1-\beta)^2 }{1-s^2}\right)^{dn/2}=
\left( 1-s^2\frac{2\gamma+\gamma^2}{1-(2-(1-\beta)^2)s^2}\right)^{dn/2}.
\end{array}
\end{displaymath}

\noindent
Taking the logarithm and dividing by $n$ the quantities in 
(\ref{eq:ratioEZbgVsexppsib}) we get the lemma.
\end{proof}

\begin{lemma}\label{lemma:SameAsShaterringThrm}
There exist a constant $d_0>0$ and $\epsilon_d\to 0$ such that for
all $d>d_0$ the following is true:
Suppose that $s=(1+q)\ln d/d$, where $\epsilon_d\leq q\leq 1-\epsilon_d$,
then for $b=20/\ln d$  we have that
\begin{eqnarray}
\psi(b)&\leq& -18 qs \label{eq:PsiBound} \\
\sup_{x<b}\psi(x)&\leq &
-s\ln s-(1-s)\ln(1-s)+\frac{d}{2}\ln(1-s^2)-20s.
\label{eq:PsiSupBound} 
\end{eqnarray}
\end{lemma}
The lemma above states explicitly what is implied by the proof of 
Theorem \ref{theorem:shattering}. Thus, the proof of Lemma  \ref{lemma:SameAsShaterringThrm} 
is exactly the same as the one of Theorem \ref{theorem:shattering}.

%
\remove{
\begin{proof}
Let $\epsilon_d=100\ln\ln d/\ln d$. Assume that $k=(1+q)\ln d/d$ for some
$q\in [\epsilon_d,1-\epsilon_d]$. Consider the functions $\psi(x)$ 
and $\xi(x,y)$ as defined in the statement of Lemma
\ref{lemma:PsiXiExpctation}. 
Working as in the proof of Theorem \ref{theorem:shattering} (i.e. (\ref{eqpftheorem:shattering1}))
we get that
\begin{equation}\label{eq:psi(c)bound}
\psi(x)\leq sx\left(2-2\ln x-q\ln d+dsx/2\right)
\end{equation}
  Furthermore, for $x=b$ 
we get that 
\begin{eqnarray}
\psi(b)&\leq & b s\left(22+2\ln\ln d-\ln20-q\ln d\right)\leq -\frac{9}{10}bsq\ln d\leq -18qs.  \end{eqnarray}
From (\ref{eq:psi(c)bound}), for any $\beta<b$ we get that
\begin{eqnarray}
\psi(\beta)&\leq& \beta s\left(22-2\ln \beta-100\ln \ln d\right) \qquad \mbox{[as $d\beta s<40$]}\nonumber \\
&\leq& -2\beta s \ln \beta \leq s, \label{eq:psi(beta)UBound}
\end{eqnarray}
since $-x\ln x<1/2$ for every $x>0$.
By comparison, for $s=(1+q)\ln d/d$ we have
	\begin{eqnarray}
	-s\ln(s)-(1-s)\ln(1-s)+\frac{d}2\ln(1-s^2)&\geq&
		-s\ln s+s-\frac{ds^2}2-\frac{ds^4}2
			\quad\mbox{[using $\ln(1-x)\geq-x-x^2$]}\nonumber\\
		&\geq&s\bc{-\ln s-ds/2+1}\nonumber\\
		&\geq&s\bc{\frac{1-q}{2}\ln d-\ln\ln d+1}\geq 40s \ln\ln d.
		\label{eq:InverseExpctLbound}
	\end{eqnarray}
From  (\ref{eq:InverseExpctLbound}) and (\ref{eq:psi(beta)UBound}),
it is direct that (\ref{eq:PsiSupBound}) holds.
\end{proof}
}

\begin{lemma}
There  is $\epsilon_d\to 0$ such that for $(1+\epsilon_d)n\ln d/d \leq k
\leq (2-\epsilon_d)n\ln d/d$ the following is true: For $\gamma=4/\ln d$,
and $\delta=1/d^{1.2}$ 
there is $\vec{\beta}\in [0,1]^2$ such that
\begin{displaymath}
P_{{\cal U}_k(n,m)}[(G,\sigma) \textrm{ is $(\vec{\beta},\gamma,\delta)$-good}|
\cZ_{k,d}]\geq 1-\exp(-n/d).
\end{displaymath}
\end{lemma}
\begin{proof}
Let $\epsilon_d=100\ln\ln d/\ln d$. Assume that $k=(1+q)\ln d/d$ for some
$q\in [\epsilon_d,1-\epsilon_d]$. Consider the functions $\psi(x)$ 
and $\xi(x,y)$ as defined in the statement of Lemma
\ref{lemma:PsiXiExpctation}. 
In what follows take $b=\frac{20}{\ln d}$. 
Let 
$${\cal H}_k(x)=\psi(x)+\max_{(\beta,\rho)\in \mathbb{A}} \xi(\beta,\rho),$$
where $\mathbb{A}=\{(\beta,\rho)\in [0,b]\times [-\gamma,\gamma]|\beta+\rho\geq 0\}$.
Our choices for $b$ and $\gamma$ ensure that for any
$(\beta,\rho)\in \mathbb{A}$ it holds that
\begin{eqnarray}
\xi (\beta,\rho)&=& 
s[-\beta \ln(1+\rho/\beta) +\rho(1-\ln s-\ln(\beta+\rho))]+
\frac{d}{2}\ln \left(1-s^2\frac{2\rho+\rho^2}{1-(1+2\beta-\beta^2)s^2}\right)
\nonumber  
\\
&\leq& 
s[-(\beta+\rho) \ln(\beta+\rho)+\beta\ln(\beta) +\rho(1-\ln s)]-
ds^2\rho-ds^2\rho^2/2.
\nonumber  
\\
&\leq& 
s\left[25\frac{\ln\ln d}{\ln d}+\rho\left(1-\ln s -ds\right)\right]
\qquad\mbox{[$-x\ln x$ is increasing for $0<x<1/e$ and $\beta\ln \beta <0$]}
\nonumber  
\\
&\leq& 
s\left[25\frac{\ln\ln d}{\ln d} +\gamma q\ln d \right]  
\hspace*{4.8cm} \mbox{[as $s=(1+q)\ln d/d$ and $\rho\geq -\gamma $]} \nonumber \\
&<& {5}qs \hspace*{2cm}\mbox{[as $q\geq 100\ln\ln d/\ln d$].}   \label{eq:XiBound}
\end{eqnarray}

\noindent
Using (\ref{eq:XiBound}) and (\ref{eq:PsiBound}), from Lemma \ref{lemma:SameAsShaterringThrm},
we get that 
\begin{equation}\label{eq:HbUBound}
{\cal H}_k(b)\leq -13qs\leq -1300\ln\ln d/d.
\end{equation}
The function ${\cal H}_k(x)$ is continuous, therefore there exist $b_2>b_1>0$ 
and $\zeta$ such that
\begin{eqnarray}
\sup_{b_1<\beta<\beta_2}{\cal H}_k(\beta)&<&-1300\ln\ln d/d-\zeta \nonumber
\\
\sup_{b>\beta}{\cal H}_k(\beta)&<&-s\ln(s)-(1-s)\ln(1-s)+\frac{d}2\ln(1-s^2)-15s-\zeta.
\nonumber
\end{eqnarray}
The last relation follows from (\ref{eq:PsiSupBound}), of Lemma \ref{lemma:SameAsShaterringThrm}
and (\ref{eq:XiBound}).

Let $\Psi_{k,b_1,b_2}(G,\sigma)$, be the number of $\tau\in
\bigcup_{t=(1-\gamma)k}^{(1+\gamma)k}{\cal S}_t(G)$ such that
$(1-b_2)k\leq |\sigma\cap \tau|\leq (1-b_1)k$. Then,  Markov's
inequality yields
\begin{eqnarray}
P_{{\cal P}_{k}(n,m)}[\Psi_{k,b_1,b_2}>0]&\leq& E_{{\cal P}_{k}(n,m)}[\Psi_{k,b_1,b_2}]=
\sum_{i\in A }\sum_{j\in B }E_{{\cal P}_{k}(n,m)}[Z_{k, j/k ,i/k}] \nonumber 
\end{eqnarray}
where $A=[-4k/\ln d, 4k/\ln d ]$ and $B=[b_1k, b_2k]$. Using Lemma
\ref{lemma:PsiXiExpctation} we get 
\begin{eqnarray}
P_{{\cal P}_{k}(n,m)}[\Psi_{k,b_1,b_2}>0]\leq 
\exp\left[n \cdot \left(\sup_{b_2\leq \beta\leq b_1}{\cal H}(\beta)+o(1)\right)\right]\leq \exp(-10n/d).
\end{eqnarray}

\noindent
Let $\Psi_{k,b_1}(G,\sigma)$ be the number of $\tau\in
\bigcup_{t=(1-\gamma)k}^{(1+\gamma)k}S_{t}(G)$ such that $|\sigma
\cap \tau|>(1-b_1)k$. Moreover, let 
\begin{eqnarray}
\mu &=& E[|{\cal S}_k(G)|]\exp\left(-n/d^{1.2}\right)
\nonumber \\
&=& \exp\left[n\left(-s\ln s-(1-s)\ln(1-s)-\frac{d}{2}\ln(1-s^2)-n/d^{1.2}
+o(1)\right)\right]. \nonumber
\end{eqnarray}
For the derivation in the second line, see in the proof of Corollary
\ref{Cor_condexp}.
For $A'=[-4k/\ln d, 4k/\ln d]$ and $B'=[0,b_1k)$, it holds that
\begin{eqnarray}
P_{{\cal P}_k(n,m)}[\Psi_{k,b_1}>\mu] &\leq & 
\frac{E_{{\cal P}_k(n,m)}[\Psi_{k,b_1}]}{\mu} \leq 
\sum_{i\in A'}\sum_{j\in B'}\frac{E_{{\cal P}_k(n,m)}[Z_{k,j/k,i/k}]}{\mu}
\nonumber 
\\
&\leq &\frac{1}{\mu}\exp\left[ n \left(\sup_{\beta<b_1}{\cal H}(\beta)+o(1)\right) \right]
\leq \exp\left(-15n/d\right). \nonumber 
\end{eqnarray}

\noindent
The lemma follows by noting the following for $\delta=14\sqrt{\ln^5 d/d^3}$, 
\begin{eqnarray}
P_{\cU_k(n,m)}\left[(G,\sigma)\mbox{ is not $(\vec{\beta },\gamma,\delta)$-good}|\cZ_{d,k} \right] 
&\leq&  P_{\cU_k(n,m)}\left[\Psi_{k,b_1}>\mu \;\textrm{or}\;\Psi_{k,b_1,b_2}>0|\cZ_{d,k}\right] 
\nonumber \\
&\hspace*{-6cm}\leq &\hspace*{-3cm}(1-o(1))P_{\cP_k(n,m)}\left[\Psi_{k,b_1}>\mu \mbox{ or } \Psi_{k,b_1,b_2}>0|\cZ_{d,k}\right] \cdot \exp\left[ 14n\sqrt{\ln^5d/d^3}\right] 
\nonumber \\
&\hspace*{-6cm}\leq &\hspace*{-3cm}\exp(-n/d), \nonumber
\end{eqnarray}
as claimed.
\end{proof}

\noindent
Now,  Lemma \ref{lemma:ShateringTubeTest} follows from the above lemma and
by using arguments very similar to those in the proof of Proposition \ref{Prop_calcShattering}.

\end{document}